\theoremstyle{definition}
\newtheorem{OpenProblem}{Open Problem}
\newtheorem{theorem}{Theorem}
\newtheorem{remark}{Remark}
\newtheorem{proposition}{Proposition}
\newtheorem{lemma}{Lemma}
 \newtheorem{claim}{Claim}
\newcommand{\sign}{\mathop{\mathrm{sign}}}
\newcommand{\Ber}{\mathrm{Bernoulli}}
\newcommand{\Bin}{\mathrm{Binomial}}
\newcommand{\Unif}{\mathrm{Uniform}}
\newcommand{\KT}{\mathrm{KT}}
\newcommand{\Av}{\mathbf{A}}
\newcommand{\Bv}{\mathbf{B}}
\newcommand{\av}{\mathbf{a}}
\newcommand{\bv}{\mathbf{b}}
\def\Expt{\mathbb{E}}
\def\Prob{\mathbb{P}}
\def\b{\mathbf}
\def\dperp{\perp\!\!\!\perp}
\def\a{\alpha}
\def\b{\beta}
\def\d{\delta}
\def\e{\epsilon}
\def\t{\theta}
\def\T{\Theta}
\def\Gh{\widehat{G}}
\def\Real{{\mathbb R}}
\def\D{\Delta}
\def\Dt{\tilde{\Delta}}
\def\xt{{\widetilde x}} %%%tilde vs widetilde??
\def\Xt{{\widetilde X}} %%%tilde vs widetilde??
\def\gt{{\widetilde g}}
\def\qt{{\widetilde q}}
\def\Nt{{\widetilde N}}
\def\Kt{{\widetilde K}}
\def\Fc{{\mathcal F}}
\def\Gc{{\mathcal G}}
\def\Nc{{\mathcal N}}
\def\Pc{{\mathcal P}}
\def\Xc{{\mathcal X}}
\def\Yc{{\mathcal Y}}
\def\sphere{\mathbb{S}}
\def\Real{\mathbb{R}}
\def\Xv{{\bf X}}
\def\Th{\widehat{\Theta}}
\def\indic{\mathds{1}}
\def\ph{\widehat{p}}
\def\mix{\mathrm{mix}}
\def\xv{{\bf x}}
\title{Sequential prediction under log-loss with side information}
\author{
Alankrita Bhatt\\
University of California, San Diego\\
La Jolla, CA  92093, USA\\
\texttt{a2bhatt@eng.ucsd.edu}
\and Young-Han Kim\\
University of California, San Diego\\
La Jolla, CA  92093, USA\\
\texttt{yhk@ucsd.edu}
}
\date{}
\begin{document}

\maketitle

\begin{abstract}%
The problem of online prediction with sequential side information under logarithmic loss is studied, and general upper and lower bounds on the minimax regret incurred by the predictor is established. The upper bounds on the minimax regret are obtained by providing and analyzing a probability assignment inspired by mixture probability assignments in universal compression, and the lower bounds are obtained by way of a redundancy--capacity theorem. The tight characterization of the regret is provided in some special settings.
\end{abstract}

\section{Introduction}\label{sec:intro}
We consider a variant of the problem of sequential prediction under log-loss with side information. The particular variant under consideration was first studied in~\cite{Fogel--Feder17}. Let $X \in \Xc$ and $Y \in \{0,1\}$ denote two jointly distributed random variables. Let the marginal distribution of $X$ be denoted by $P_{X}(x)$. A hypothesis $f$ in the \emph{hypothesis class} $\Fc$ determines the conditional distribution $P_f(y|x)$, or equivalently, the conditional probability mass function (pmf) $p_f(y|x)$, for $y \in \{0,1\}$ and $x \in \Xc$. Each hypothesis is characterized by a tuple $f = (g, \t_0, \t_1)$ where
\begin{enumerate}
    \item $\t_0, \t_1 \in [0,1]$
    \item $g \in \Gc \subset \{\Xc \to \{0,1\}\}$.
\end{enumerate}
In other words, $g$ belongs to a class $\Gc$ of binary functions. We assume that $\Gc$ has finite VC dimension, denoted by VCdim$(\Gc)$.

Given a chosen hypothesis $f = (g,\t_0, \t_1)$ we then have 
\[
Y|\{X=x\} \sim \Ber(\t_{g(x)}).
\]
Thus, given the \emph{side information} $X$, the random variable $Y$ is distributed as either $\Ber(\t_0)$ or $\Ber(\t_1)$. Picking a hypothesis $f \in \Fc$, let $(X_i,Y_i)_{i=1}^n$ be drawn i.i.d.\@ from the joint distribution of $X$ and $Y$ characterized by the hypothesis $f$, so
\begin{align}\label{eq:XnYnLawh}
P(x^n,y^n) = \prod_{i=1}^n P_X(x_i)P_f(y_i|x_i).
\end{align}
The problem of sequential prediction under log-loss, also known as the sequential probability assignment problem, can be thought of as a game between the player and nature. First, nature picks a hypothesis $f \in \Fc$ unbeknownst to the player, and $X^n,Y^n$ are then generated according to the law~\eqref{eq:XnYnLawh}. At each time step $i \in [n]$,
$X_i$ is revealed to the player, who then assigns a probability mass function (pmf) $q(\cdot|X^i,Y^{i-1})$ to $Y_i$.  Next, $Y_i$ is revealed and the player incurs loss $-\log q(\cdot|X^i,Y^{i-1})$. Nature assigns the pmf $p_f(\cdot|X_i)$ at each time step $i$ and incurs loss $-\log p_f(Y_i|X_i)$. The goal of the game is to minimize the expected value of cumulative loss relative to nature (known as the regret), without knowledge of $f$. Importantly, we also wish to do this without knowing $P_X$ either.

To make this notion precise, define the regret incurred by the probability assignment $q$ when nature picked $f$ and the distribution of $X$ is $P_X$ as
\begin{align}\label{eq:minmaxprobFixHq}
R_{n,P_X}(q,f) := \Expt\left[\sum_{i=1}^n \log\frac{1}{q(Y_i|X^i,Y^{i-1})} - \sum_{i=1}^n\log\frac{1}{p_f(Y_i|X_i)}\right].
\end{align}
Then, the worst-case regret for the probability assignment $q$ is 
\begin{align}\label{eq:minmaxprobFixq}
    R_{n}(q) := \max_{P_X, f} R_{n,P_X}(q,f).
\end{align}
In this paper, we aim to calculate the min-max regret 
\begin{align}\label{eq:minmaxprob}
    R_n := \min_q  R_{n}(q).
\end{align}
and discover a probability assignment $q$ that is optimal or near-optimal in the sense of achieving $R_n(q)$ close to the optimal value~\eqref{eq:minmaxprob}.

The log-loss is of central importance in information theory as it connects two canonical problems in data science---compression and prediction; see the survey~\cite{Feder--Merhav98}. To motivate the use of the log-loss in the current problem, we view it as an extension of the problem of universal compression. Indeed, if there is no side information $X$ present, then the problem is equivalent to universal compression of an i.i.d. Bernoulli source which has been well studied~\cite{Rissanen83a, Rissanen83b, Rissanen84, Xie--Barron97, Xie--Barron00}. The minimax regret $R_n$ then is  significant operationally, representing the number of extra bits above the entropy one must pay as the price for compressing the source without knowing its distribution. Remarkably, one can show that $R_n = \frac{1}{2}\log n + o(\log n)$ in this setting. In a similar vein,~\cite{Shkel--Raginsky--Verdu18} studies a closely related problem where a compressed version of the sequence $Y^n$ is available as side information noncausally (i.e. not sequentially) and demonstrate its equivalence to lossy compression.

In the current setting, if the function $g$ is known, then simple extensions of the techniques developed to tackle the problem of universal compression of an i.i.d. Bernoulli source can be used to show that $R_n \le \log n + o(\log n)$, and we will elaborate on this important special case in detail in Section~\ref{subsec:onlyonefn}. The problem becomes nontrivial when the function $g$ is not known, and new techniques need to be developed to characterize $R_n$ in this case. 

In the standard study of classification in statistical learning theory, the loss function employed is the 0-1 loss or the indicator loss, and the notion of VC dimension plays a crucial role in characterizing the fundamental limits of binary classification~\cite{shalev2014understanding}. In particular, VCdim$(\Gc) < \infty$ implies the PAC-learnability of the hypothesis class $\Gc$. Viewing the current setting as a log-loss variant of the standard classification problem studied in statistical learning (which uses the indicator loss) motivates the choice of constraint VCdim$(\Gc) < \infty$. A variant of the current problem with indicator loss instead of log-loss was studied in~\cite{Lazaric--Munos12}. 
%%%%%%%%%%%%%%%%%%%%%%%%%%%%%%%%%%%%%%%%%%%%%%%%%%%%%%%%%%%%%%%%%%%%%%%%%%%%%%%%%%%%%%%%%%%%%%%%%%%%%%%%%%%%%%%%%%%%%%%%%%%%%%%%%%%%%%%%%%%%%%%%%%%%%%%%%%%%%%%%%%%%%%%%%%%%%%%%%%%%%%%%%%%%%%%%%%%%%%%%%%%%%%%%%%%%%%%%%%%%%%%%%%
We have considered a specific class of conditional distributions to compete against ( recall that under hypothesis $f$ we have $p_f(Y=0|X=x) = \mathrm{Bern}(\t_{g(x)})$). As mentioned in the preceding paragraphs, our motivation stems from universal compression with side information, and to consider a log-loss variant of the standard binary classification problem. In both these cases, the choice of the considered class seems natural. However, in general, one could view this problem as an online conditional density estimation problem and correspondingly consider an arbitrary class $\Fc$ where any $f \in \Fc$ may characterize the conditional distribution $p_f(y|x)$ in a far more complex manner. It then makes sense to expect $R_n$ in this case to depend on a measure of complexity of $\Fc$ akin to the VC dimension. Indeed, in~\cite{RSTMartingaleLLN15} the authors develop a remarkable theory parallel to statistical learning theory when the data is non-i.i.d. They develop analogues of several combinatorial dimensions and the Rademacher complexity in the non-i.i.d. case. They then leverage this theory in~\cite{rstOLSeqComp15} to study the minmax regret in several online learning problems (with adversarial data). This approach is employed to study sequential prediction with the log-loss in~\cite{RSLogLoss15}  and~\cite{Bilodeau--Foster--Roy20}. However, it is important to note that the proofs in these works are nonconstructive—they proceed via using minmax duality and analyzing the dual game, which does not provide a strategy (i.e. a probability assignment) achieving the regret upper bound that is proven. Our method on the other hand involves construction of a sequential probability assignment. In the next subsection, we will mention and compare our results with the aforementioned two papers studying the log-loss. 
%%%%%%%%%%%%%%%%%%%%%%%%%%%%%%%%%%%%%%%%%%%%%%%%%%%%%%%%%%%%%%%%%%%%%%%%%%%%%%%%%%%%%%%%%%%%%%%%%%%%%%%%%%%%%%%%%%%%%%%%%%%%%%%%%%%%%%%%%%%%%%%%%%%%%%%%%%%%%%%%%%%%%%%%%%%%%%%%%%%%%%%%%%%%%%
\subsection{Main Results}\label{subsec:mainresults}
Our first main result is a probability assignment that yields an upper bound on $R_n$.
\begin{theorem}\label{thm:genupperbd}
If $\Gc$ is such that VCdim$(\Gc) = d < \infty$, we have for an absolute constant $C \le 250$, for a probability assignment $q^*$ (which is specified in detail further on)
\begin{align}\label{eq:GenUpperBd}
R_n(q^*) \le  125 C\sqrt{dn}\log(2n) +  d(\log n)^2 + 2.
\end{align}
Moreover, for any $P_X,f, \d \in (0,1)$, with probability greater than $1-\d$,
\begin{align}\label{eq:GenUpperBdConc}
     \sum_{i=1}^n \log \frac{1}{q^*(Y_i|X^i,Y^{i-1})} - &\sum_{i=1}^n \log\frac{1}{p_{f}(Y_i|X_i)}\nonumber\\
     &\le  25 C\sqrt{dn}\log(2n) \left(C \sqrt{d} + \sqrt{2 \log \frac{2\log n}{\d}} \right) + d(\log n)^2 + 2
\end{align}
\end{theorem}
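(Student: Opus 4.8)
The plan is to build $q^*$ as an epoched mixture and to charge the three sources of ignorance --- the function $g$, the parameters $\theta_0,\theta_1$, and the marginal $P_X$ --- to separate terms. Partition $[n]$ into $\lceil\log_2 n\rceil$ doubling epochs $E_1,E_2,\dots$ with $|E_j|\asymp 2^{j}$, and let $S_j$ be the side-information symbols observed during epochs $1,\dots,j-1$ (so $|S_j|\asymp|E_j|$). At the start of epoch $j$ the restriction of $\Gc$ to $S_j$ realizes at most $|S_j|^{d}\le n^{d}$ distinct labelings by Sauer--Shelah; fix one representative $g\in\Gc$ per labeling to obtain a finite family $\Gc_j$ with $\log|\Gc_j|\le d\log n$. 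To each $g\in\Gc_j$ attach the ``known-$g$'' predictor $q_g$ that runs a Krichevsky--Trofimov estimator on each of the two sub-sequences $\{k:g(x_k)=0\}$ and $\{k:g(x_k)=1\}$; during epoch $j$ let $q^*$ be the Bayesian mixture (equivalently, unit-learning-rate exponential weights, since log-loss is $1$-mixable) of $\{q_g:g\in\Gc_j\}$ under the uniform prior. Concatenating these epochwise assignments gives a bona fide sequential probability assignment, since $\Gc_j$ depends only on the already-revealed symbols $S_j$.

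Fix nature's $f^\star=(g^\star,\theta_0^\star,\theta_1^\star)$ and $P_X$, let $b_j\in\Gc_j$ be the representative agreeing with $g^\star$ on $S_j$ (it exists as $g^\star\in\Gc$), and split $E_j=A_j\cup M_j$ according to whether $b_j$ and $g^\star$ agree at $x_i$. Since the epoch-$j$ mixture is at least $|\Gc_j|^{-1}\prod_{i\in E_j}q_{b_j}(y_i\mid x^i,y^{i-1})$,
\begin{align*}
&\sum_{i\in E_j}\log\frac{1}{q^*(Y_i\mid X^i,Y^{i-1})}-\sum_{i\in E_j}\log\frac{1}{p_{f^\star}(Y_i\mid X_i)}\\
&\qquad\le\ \log|\Gc_j|\ +\ \sum_{i\in A_j}\log\frac{p_{f^\star}(Y_i\mid X_i)}{q_{b_j}(Y_i\mid X^i,Y^{i-1})}\ +\ \sum_{i\in M_j}\log\frac{p_{f^\star}(Y_i\mid X_i)}{q_{b_j}(Y_i\mid X^i,Y^{i-1})}.
\end{align*}
Summing $\log|\Gc_j|\le d\log n$ over the $\le\log_2 n$ epochs produces the $d(\log n)^2$ term of \eqref{eq:GenUpperBd}. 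On $M_j$ the ratio is at most $\log(2n)$, because a KT estimator over a string of length $t\le n$ never drops below probability $\tfrac{1}{2(t+1)}$, so the third term is $\le|M_j|\log(2n)$. On $A_j$ the predictor $q_{b_j}$ tracks the correct parameter but on counts contaminated by the earlier mismatched points, so the second term is the ordinary $O(\log n)$-per-class KT redundancy plus an extra $O\!\big((\sum_k|M_k|)\log(2n)\big)$. Everything therefore reduces to controlling $\sum_j|M_j|$.

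This is the heart of the argument, and where I expect the main difficulty to lie. Because $b_j$ agrees with $g^\star$ on $S_j$, the indicator $\indic\{b_j\neq g^\star\}$ lies in the class $\{\indic\{g\neq g^\star\}:g\in\Gc\}$, which has VC dimension $\le d$, and it vanishes on the $\asymp|E_j|$ i.i.d.\ points of $S_j$; $|M_j|$ equals $|E_j|$ times its empirical frequency on the fresh i.i.d.\ epoch-$j$ sample. Since $b_j$, though data-dependent, ranges over at most $|\Gc_j|\le n^{d}$ functions, a uniform symmetrization/Rademacher deviation inequality for this VC class (whose absolute constant is the $C\le 250$ of the statement) bounds that empirical frequency by $O\!\big(\sqrt{d\log n/|S_j|}\big)$, hence $|M_j|=O\!\big(\sqrt{d\,|E_j|\log n}\big)$; summing the resulting geometric series over the epochs gives $\sum_j|M_j|=O(\sqrt{dn}\,\log n)$, which is exactly the $\sqrt{dn}\log(2n)$ term. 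The delicate point is that the noisy labels ($\theta_0^\star,\theta_1^\star$ unknown and possibly near the boundary) preclude the faster realizable rate $O(d\log n/|S_j|)$ here, so one must use the slow rate uniformly in $(\theta_0,\theta_1)$ while simultaneously verifying that the KT-count contamination on the $A_j$'s does not blow up --- this bookkeeping, rather than any single inequality, is the crux.

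For the high-probability bound \eqref{eq:GenUpperBdConc} one runs the same decomposition pathwise: apply the VC deviation inequality with confidence $\delta/(2\log n)$ in each epoch and union-bound over the $\le 2\log n$ epochs to control every $|M_j|$ and every per-class KT redundancy along the realized trajectory, and control the residual fluctuation of $\sum_i\log\frac{p_{f^\star}(Y_i\mid X_i)}{q^*(Y_i\mid X^i,Y^{i-1})}$ about its conditional mean by Azuma--Hoeffding, the martingale increments being $O(\log(2n))$. Combining these is what produces the $\big(C\sqrt d+\sqrt{2\log(2\log n/\delta)}\big)$ factor multiplying $\sqrt{dn}\log(2n)$.
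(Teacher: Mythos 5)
Your construction is essentially the paper's: doubling epochs, a Sauer--Shelah reduction of $\Gc$ on the already-revealed side information, a uniform mixture over one representative per labeling with a per-class add-constant estimator, the $d\log n$ mixture cost per epoch summing to $d(\log n)^2$, and a uniform VC deviation bound controlling the disagreement between the chosen representative and $g^\star$ on the fresh epoch. Two loose ends keep your write-up from delivering the stated constants. First, the mismatch accounting does not close as written: you charge $\log(2n)$ per mismatched point, so the total is $\log(2n)\sum_j|M_j|$, and to land on $125C\sqrt{dn}\log(2n)$ you need $\sum_j|M_j|=O(\sqrt{dn})$, i.e.\ the \emph{log-free} expected-supremum bound $C\sqrt{d/|S_j|}$ for the VC class $\{x\mapsto \indic\{g_1(x)\neq g_2(x)\}\}$ of symmetric differences (VC dimension $\le 2d$; this is Vershynin Thm.~8.3.23 / Boucheron--Lugosi--Massart Thm.~13.7, and is where $C\le 250$ comes from). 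Your union bound over the $\le n^d$ representatives gives only $O(\sqrt{d\log n/|S_j|})$ and hence an extra $\sqrt{\log n}$ in the final bound. Second, the step you correctly flag as the crux --- that contaminating the per-class counts by $|M_j|$ mislabeled points costs only $O(|M_j|\log n)$ on the agreeing points --- is exactly what the paper proves, by working at the level of counts rather than points: it writes the epoch regret against the representative as a ratio of binomial coefficients $\binom{\tilde N}{\tilde K}/\binom{N}{K}$, notes $|\tilde N - N|,|\tilde K - K|\le d_H(g^\star(X^{E_j}),\tilde g(X^{E_j}))$, and applies a Stirling continuity estimate $\log\frac{(a+b)!}{a!}\le b\log(a+b+1)+b+1$ to obtain the deterministic bound $8\log(6n+2)\,d_H+6$ for the whole epoch. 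Supplying that estimate makes your agree/mismatch decomposition rigorous, and it also shows that the high-probability claim needs no Azuma step: the pathwise regret is a deterministic function of $d_H$, so the only randomness to control is the supremum of the empirical process itself (bounded differences, union-bounded over the $\log n$ epochs), which is precisely how the factor $\bigl(C\sqrt{d}+\sqrt{2\log(2\log n/\delta)}\bigr)$ arises.
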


The proof is deferred to Section~\ref{sec:genupperbdpf}, where we construct and analyze the probability assignment $q^*$. In~\cite{Fogel--Feder17}, the authors established that $R_n = O(d \sqrt{n} \log n)$, and $R_n \le \left(2d+1+\log\frac{1}{\d}\right)\sqrt{n}\log n$ with probability $\ge 1-\d$. Our proof (and probability assignment) is different and achieves the same dependence on $n$, and a better dependence on $\d$ in the high-probability version of the result.

We also establish a lower bound on $R_n$. 
\begin{theorem}\label{thm:genlowerbd}
We have 
\[
R_n \ge d  + \log(n+1) - 2\sqrt{e} d^2 e^{-3n/100d} - \log(\pi e).
\]
\end{theorem}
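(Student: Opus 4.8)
The plan is to prove the bound via the ``capacity $\le$ redundancy'' half of the redundancy--capacity inequality, followed by an explicit prior supported on a set shattered by $\Gc$. The inequality I would use is elementary: for any $P_X$ on $\Xc$ and any prior $\mu$ on $\Fc$, let $q_\mu(y^n\mid x^n):=\Expt_{f\sim\mu}\big[\prod_i p_f(y_i\mid x_i)\big]$ be the Bayes mixture; this is a bona fide sequential assignment since $q_\mu(y_i\mid x^i,y^{i-1})$ depends only on $(x^i,y^{i-1})$, and it minimizes the Bayes regret. Hence for every $q$,
\[
R_n(q)\ \ge\ \Expt_{f\sim\mu}R_{n,P_X}(q,f)\ \ge\ \Expt_{f\sim\mu}R_{n,P_X}(q_\mu,f)\ =\ I(f;Y^n\mid X^n),
\]
the conditional mutual information under the joint law $f\sim\mu$, $X^n\sim P_X^{\otimes n}$, $Y^n\mid(f,X^n)\sim\prod_i p_f(\cdot\mid X_i)$. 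Therefore $R_n\ge\sup_{P_X,\mu}I(f;Y^n\mid X^n)$, and it remains to design $(P_X,\mu)$ realizing the right-hand side of the theorem.

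The construction I would use: fix $\{x_1,\dots,x_d\}\subseteq\Xc$ shattered by $\Gc$. Let $P_X$ place a fixed constant fraction of its mass on each of $x_1,x_2$ and spread the remainder uniformly over $x_3,\dots,x_d$. For $\mu$, draw a labeling $B=(B_1,\dots,B_d)$ with $(B_1,B_2)$ uniform on the anti-aligned pair $\{(0,1),(1,0)\}$ and $B_3,\dots,B_d$ i.i.d.\ fair bits; let $g\in\Gc$ realize $B$ on $\{x_1,\dots,x_d\}$ (possible by shattering); draw $\theta_0,\theta_1$ from Jeffreys-type priors on two disjoint compact subintervals of $(0,1)$ on opposite sides of $\tfrac12$; set $f=(g,\theta_0,\theta_1)$, all independent. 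By the chain rule,
\[
I(f;Y^n\mid X^n)=I(B;Y^n\mid X^n)+I(\theta_0,\theta_1;Y^n\mid X^n,B),
\]
and I would lower bound the two pieces separately. For the first, $I(B;Y^n\mid X^n)=H(B)-H(B\mid Y^n,X^n)=(d-1)-H(B\mid Y^n,X^n)$, and I would control the conditional entropy by Fano applied to the plug-in classifier $\widehat B_j=\indi[\bar Y_j>\tfrac12]$, where $\bar Y_j$ is the empirical mean of the $Y$'s seen at $x_j$: by Chernoff each $x_j$ is visited $\Omega(n/d)$ times except with probability $e^{-\Omega(n/d)}$, and given that, $\widehat B_j\ne B_j$ with probability $e^{-\Omega(n/d)}$ by Hoeffding (the true $\theta_{B_j}$ lies a constant distance from $\tfrac12$); a union bound over $j$ together with Fano then gives $H(B\mid Y^n,X^n)\le 2\sqrt e\,d^2 e^{-3n/100d}$ after tuning the free constants. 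For the second piece, conditioning on $(X^n,B)$ splits the observed $Y$'s into two independent i.i.d.\ Bernoulli blocks of sizes $N_0,N_1$ ($N_0+N_1=n$) with parameters $\theta_0,\theta_1$ (and $\theta_0\perp\theta_1$), so $I(\theta_0,\theta_1;Y^n\mid X^n,B)=\Expt[\phi(N_0)+\phi(N_1)]$ with $\phi(m)$ the Bernoulli Bayes redundancy at sample size $m$. The anti-aligned pair guarantees that one of $x_1,x_2$ carries label $0$ and the other label $1$, and each is visited a constant fraction of the time except with probability $e^{-\Omega(n)}$, so both $N_0$ and $N_1$ are $\Theta(n)$ on a high-probability event; combining this with the standard non-asymptotic estimate $\phi(m)\ge\tfrac12\log m-O(1)$ for $m\ge1$ (valid since $\theta_0,\theta_1$ range over compact interior intervals, e.g.\ via van Trees / Clarke--Barron) yields $\phi(N_0)+\phi(N_1)\ge\log n-O(1)$ there and $\ge0$ elsewhere, hence $I(\theta_0,\theta_1;Y^n\mid X^n,B)\ge\log(n+1)-O(1)$. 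Summing the two bounds and absorbing all additive constants, including the $-1$ cost of the anti-aligned pair, into $\log(\pi e)$ gives the claim.

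The hard part is reconciling the two estimation tasks inside one prior. Decoding the labeling wants the two parameter ranges well-separated and interior so the plug-in classifier is exponentially reliable, whereas extracting a full $\log n$ (rather than merely $\tfrac12\log n$) from the parameters wants each $\theta$ to govern a positive-fraction block of the data; the resolution is to simultaneously separate the parameter supports and pin two coordinates of $B$ to an anti-aligned pair — equivalently, to load $P_X$ on two shattered points — so that both $\theta_0$ and $\theta_1$ are guaranteed $\Theta(n)$ samples without needing any delicate concentration of the class split. The remaining work is quantitative bookkeeping: making every bad event (an undersampled shattered point, a misclassified bit) exponentially small in $n/d$ uniformly over the prior draw while keeping $H(B)=d-O(1)$ and the per-parameter redundancy constants small enough to fit under $\log(\pi e)$, and handling the small cases $d\in\{1,2\}$ directly. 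The chain-rule ordering — labeling first, parameters conditioned on the labeling — is what keeps this accounting tractable.
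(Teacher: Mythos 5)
Your proposal follows essentially the same route as the paper's proof: a redundancy--capacity lower bound $R_n \ge \sup_{P_X,\mu} I(F;Y^n\mid X^n)$, a prior supported on a set of size $d$ shattered by $\Gc$, Fano's inequality with an explicit empirical-mean threshold estimator to recover the labeling, and a $\tfrac{1}{2}\log N_j$ parametric-redundancy contribution from each Bernoulli parameter. The one substantive difference is how the full $\log(n+1)$ is extracted from the two parameters: you pin an anti-aligned pair of labels and use Chernoff to force both blocks to have size $\Theta(n)$, whereas the paper keeps $G$ uniform over all $2^d$ labelings and instead exploits the exact binomial identity giving $\Expt\left[\frac{1}{N_0+1}\right]\Expt\left[\frac{1}{N_1+1}\right]\le \frac{1}{n+1}$, which sidesteps any concentration argument for the block sizes (and also handles the degenerate all-$0$/all-$1$ labelings that force your separate treatment of small $d$).
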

The proof is deferred to Section~\ref{sec:lowerbd}. 

The non-constructive approaches of the papers~\cite{RSLogLoss15} and~\cite{Bilodeau--Foster--Roy20} mentioned earlier establish an $O(d \log n)$ upper bound for the $\Fc$ under consideration. In conjunction with Theorem~\ref{thm:genlowerbd} we see that the dependence of $R_n$ on $n$ is indeed $\Theta(\log n)$. This implies that the $q^*$ employed to prove Theorem~\ref{thm:genupperbd} is suboptimal and a better probability assignment could be constructed. 

\begin{OpenProblem}\label{op:ProbAssgnVC}
Construct a probability assignment $q$ for the VC hypothesis class that achieves $O(d \log n)$ regret. 
\end{OpenProblem}

As mentioned earlier, the problem of sequential probability assignment can be posed for any general (and possibly very complex) class $\Fc$ and viewed as an online conditional density estimation problem. 
\begin{OpenProblem}\label{op:ProbAssgnNonparametric}
Construct and analyze a probability assignment $q$ for the case when $\Fc$ is a general hypothesis class. 
\end{OpenProblem}
As a starting step towards Open Problem~\ref{op:ProbAssgnVC}, we considered a few special cases of the function class $\Gc$ in the hypothesis class  and provide a sequential probability assignment achieving $O(d \log n)$ upper bound. These upper bounds constitute our third main result. 

\subsection{Organization and Notation}
In Section~\ref{sec:MathPrelim} we provide basic notation and results that will be used in the proofs of our main results. In Section~\ref{sec:lognRegret}, we provide logarithmic upper bounds on $R_n$ for a few special cases. Section~\ref{sec:genupperbdpf} is devoted to the proof of Theorem~\ref{thm:genupperbd}, and Section~\ref{sec:lowerbd} is devoted to the proof of Theorem~\ref{thm:genlowerbd}. Finally, Section~\ref{sec:conclusion} concludes. All the proofs throughout the paper are relegated to the Appendix. 

\underline{Notation:} Throughout the paper, $\log(\cdot)$ refers to the logarithm to  base 2, and $\ln(\cdot)$ refers to logarithm to  base $e$. The Hamming distance between two binary vectors $x$ and $y$ is denoted by $d_{\mathrm{H}}(x,y)$.  The fact that two random variables $Z_1$ and $Z_2$ have the same distribution is denoted by $Z_1 \stackrel{(d)}{=} Z_2$. 
%and the differential entropy of a (continuous) random variable is denoted by $h(X)$. (confusion with hypothesis class..??)
%%%%%%%%%%%%%%%%%%%%%%%%%%%%%%%%%%%%%%%%%%%%%%%%%%%%%%%%%%%%%%%%%%%%%%%%%%%%%%%%%%%%%%%%%%%%%%%%%%%%%%%%%%%%%%%%%%%%%%%%%%%%%%%%%%%%%%%%%%%%%%%%%%%%%%%%%%%%%%%%%%%%%%%%%%%%%%%%%%%%%%%%%%%%%%%%%%%%%%%%%%%%%%%%%%%%%%%%%%%%%%%%%%%%%%%%%%%%%%%%%%%%%%%%%%%%%%%%%%%%%%%%%%%%%%%%%%%%%%%%%%%%%%%%%%%%%%%%%%%%%%%%%%%%%%%%%%%%%%%%%%%%%%%%%%%%%%%%%%%%%%%%%%%%%%%%%%%%%%%%%%%%%%%%%%%%%%%%%%%%%%%%%%%%%%%%%%%%%%%%%%%%%%%%%%%%%%%%%%%%%%%%%%%%%%%%%%%%%%%%%%%%%%%%%%%%%%%%%%%%%%%%%%%%%%%%%%%%%%%%%%%%%%%%%%%%%%%%%%%%%%%%%%%%%%%%%%%%%%%%%%%%%%%%%%%%%%%%%%%%%%%%%%%%%%%%%%%%%%%%%%%%%%%%%%%%%%%%%%%%%%%%%%%%%%%%%%%%%%%%%%%%%%%%%%%%%%%%%%%%%%%%%%%%%%%%%%%%%%%%%%%%%%%%%%%%%%%%%%%%%%%%%%%%%%%%%%%%%%%%%%%%%%%%%%%%%%%%%%%%%%%%%%%%%%%%%%%%%%%%%%%%%%%%%%%%%%%%%%%%%%%%%%%%%%%%%%%%%%%%%%%%%%%%%%%%%%%%%%%%%%%%%%%%%%%%

\section{Mathematical Preliminaries}\label{sec:MathPrelim}
This section introduces some basic notation and results that form the building blocks of the proofs of our main results. 

To prove any upper bound on $R_n$, it suffices to provide a probability assignment $q(Y_i|X^i,Y^{i-1})$ that achieves regret $R_n(q)$ that is less than the given upper bound. To this end, we will use a \emph{mixture probability assignment}
\begin{align}\label{eq:MixtureProbAssgn}
    q_{\mix}(y_i|x^i,y^{i-1}) := \frac{ \Expt_F[p_{F}\left(y^i|x^i\right)]}{ \Expt_F[p_{F}\left(y^{i-1}|x^{i-1}\right)]}
\end{align}
where $F = (\T_0,\T_1,G) \in \Fc$ is a random variable with some distribution over the hypothesis class $\Fc$. The usage of such a mixture probability assignment is inspired by previous work in universal prediction and universal compression, and we discuss this choice in further detail in Section~\ref{subsec:onlyonefn}. It can be verified that $q_{\mix}(y_i|x^i,y^{i-1})$ is indeed a probability assignment.

\begin{proposition}\label{prop:qmixisprobassgn}
For any $x^i,y^{i-1}$, we have $\sum_{y_i \in \Yc} q_{\mix}(y_i|x^i,y^{i-1}) = 1$.
\end{proposition}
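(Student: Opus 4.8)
The plan is to verify the normalization directly from the definition~\eqref{eq:MixtureProbAssgn} by summing over $y_i \in \Yc = \{0,1\}$ and checking that the numerator collapses onto the denominator. First I would make explicit the product form $p_F(y^i\mid x^i) = \prod_{j=1}^i p_F(y_j\mid x_j)$ implied by the generative law~\eqref{eq:XnYnLawh}, so that for each fixed realization of $F$,
\[
\sum_{y_i \in \Yc} p_F(y^i \mid x^i) = \Big(\prod_{j=1}^{i-1} p_F(y_j \mid x_j)\Big)\sum_{y_i \in \Yc} p_F(y_i \mid x_i) = p_F(y^{i-1}\mid x^{i-1}),
\]
the last equality because $p_F(\cdot \mid x_i)$ is itself a pmf on $\Yc$ and therefore sums to one.

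Next I would interchange the finite sum over $y_i$ with the expectation over $F$. This is the only step that needs a word of justification, and it is immediate: $\Yc$ is a two-point set and each summand $\Expt_F[p_F(y^i\mid x^i)]$ is a finite nonnegative quantity (bounded by $1$), so linearity of expectation applies with no measure-theoretic subtlety. Combining with the display above,
\[
\sum_{y_i \in \Yc} \Expt_F\big[p_F(y^i\mid x^i)\big] = \Expt_F\Big[\sum_{y_i \in \Yc} p_F(y^i\mid x^i)\Big] = \Expt_F\big[p_F(y^{i-1}\mid x^{i-1})\big],
\]
which is exactly the denominator appearing in~\eqref{eq:MixtureProbAssgn}.

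Dividing through then yields $\sum_{y_i \in \Yc} q_{\mix}(y_i\mid x^i,y^{i-1}) = 1$. The only caveat is that the denominator be nonzero: for $i=1$ it equals $\Expt_F[1] = 1$ by the empty-product convention, and for $i \ge 2$ one either restricts to histories $(x^{i-1},y^{i-1})$ of positive probability under the mixture (for which it is automatically positive) or defines $q_{\mix}$ arbitrarily on the remaining null histories. I do not anticipate any genuine obstacle here — the statement is essentially the telescoping structure deliberately built into $q_{\mix}$, and the proof is a one-line computation once the product form of $p_F(y^i\mid x^i)$ is written out.
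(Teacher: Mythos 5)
Your proof is correct and follows essentially the same route as the paper's: exchange the finite sum over $y_i$ with the expectation over $F$, use the product structure of $p_F(y^i\mid x^i)$ so the inner sum collapses the numerator to $\Expt_F[p_F(y^{i-1}\mid x^{i-1})]$, and cancel with the denominator. Your added remark about the denominator being nonzero is a small extra care the paper omits, but the argument is the same.
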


For the probability assignment $q_{\mix}$ in~\eqref{eq:MixtureProbAssgn} we can establish the following.  
\begin{proposition}\label{prop:ChainRuleMixture}
We have
\begin{align}\label{eq:ChainRuleMixture}
   \sum_{i=1}^n \log\frac{1}{q_{\mix}(Y_i|X^i,Y^{i-1})} - \sum_{i=1}^n\log\frac{1}{p_f(Y_i|X_i)} = \log \frac{p_f(Y^n|X^n)}{\Expt[p_{F}(Y^n|X^n)]}.
\end{align}
\end{proposition}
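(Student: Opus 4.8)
The plan is to exploit the multiplicative telescoping structure that is built into the definition of $q_{\mix}$ in~\eqref{eq:MixtureProbAssgn}. After taking logarithms, the two sums on the left-hand side of~\eqref{eq:ChainRuleMixture} collapse into single logarithms of the products $\prod_{i=1}^n q_{\mix}(Y_i|X^i,Y^{i-1})$ and $\prod_{i=1}^n p_f(Y_i|X_i)$, so the whole claim reduces to identifying these two products in closed form.

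First I would compute the product of the mixture terms. Since each factor equals the ratio $\Expt_F[p_F(y^i|x^i)]/\Expt_F[p_F(y^{i-1}|x^{i-1})]$, the product over $i = 1, \ldots, n$ telescopes to $\Expt_F[p_F(y^n|x^n)]/\Expt_F[p_F(y^0|x^0)]$, where $y^0, x^0$ denote empty strings. The denominator is $\Expt_F[p_F(y^0|x^0)] = 1$, because $p_F(y^0|x^0)$ is an empty product equal to $1$ for every realization of $F$. Hence $\prod_{i=1}^n q_{\mix}(y_i|x^i,y^{i-1}) = \Expt_F[p_F(y^n|x^n)]$. Next I would handle nature's contribution: by the generative law~\eqref{eq:XnYnLawh}, under a fixed hypothesis $f$ the coordinates $Y_i$ are conditionally independent given $X^n$, so $\prod_{i=1}^n p_f(y_i|x_i) = p_f(y^n|x^n)$ with no telescoping needed.

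Combining the two identities and taking logarithms yields
\[
\sum_{i=1}^n \log\frac{1}{q_{\mix}(Y_i|X^i,Y^{i-1})} - \sum_{i=1}^n\log\frac{1}{p_f(Y_i|X_i)} = \log\frac{1}{\Expt_F[p_F(Y^n|X^n)]} - \log\frac{1}{p_f(Y^n|X^n)} = \log\frac{p_f(Y^n|X^n)}{\Expt_F[p_F(Y^n|X^n)]},
\]
which is exactly~\eqref{eq:ChainRuleMixture}. There is essentially no real obstacle here; the only points requiring a moment's care are the base case of the telescoping product --- namely that the empty-history normalizer $\Expt_F[p_F(y^0|x^0)]$ equals $1$ --- and the observation that conditioning on a realized hypothesis $f$ makes the $Y_i$ conditionally independent, so that $\prod_i p_f(y_i|x_i)$ collapses to $p_f(y^n|x^n)$.
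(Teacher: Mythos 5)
Your proof is correct and is precisely the standard telescoping argument that the paper implicitly relies on (the paper omits an explicit proof of this proposition). Both key points you flag --- the empty-history normalizer $\Expt_F[p_F(y^0|x^0)]=1$ and the factorization $\prod_i p_f(y_i|x_i)=p_f(y^n|x^n)$ under a fixed hypothesis --- are exactly the right things to check.
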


The choice of the distribution of $F$ is important and greatly affects $R_n(q_{\mix})$. Almost all throughout this paper, for $F = (\T_0,\T_1,G)$, we will choose $\T_0$, $\T_1$ and $G$ to be mutually independent, with $\T_0, \T_1 \sim \mathrm{Beta}\left(\frac{1}{2}, \frac{1}{2}\right)$ each. The choice of the distribution of $G$ (which, recall, is over the class of functions $\Gc$) will be varied across different problems. The $\mathrm{Beta}\left(\frac{1}{2}, \frac{1}{2}\right)$ density is denoted by $w(\t) = \frac{1}{\pi \sqrt{\t(1-\t)}}$. The choice $w(\t)$ is elaborated upon in the next subsection. 

%%%%%%%%%%%%%%%%%%%%%%%%%%%%%%%%%%%%%%%%%%%%%%%%%%%%%%%%%%%%%%%%%%%%%%%%%%%%%%%%%%%%%%%%%%%%%%%%%%%%%%%%%%%%%%%%%%%%%%%%%%%%%%%%%%%%%%%%%%%%%%%%%%%%%%%%%%%%%%%%%%%%%%%%%%%%%%%%%%%%%%%%%%%%%%%%%%%%%%%%%%%%%%%%%%%%%%%%%%%%%%%%%%%%%%%%%%%%%%%%%%%%%%%%%%%%%%%%%%%%%%%%%%%%%%%%%%%%%%%%%%%%%%%%%%%%%%%%%%%%%%%%%%%%%%%%%%%%%%%%%%%%%%%%%%%%%%%%%%%%%%

\subsection{When $|\Gc| = 1$}\label{subsec:onlyonefn}

In this subsection, we consider the rather simple case when the class of functions $|\Gc|$ contains only one function $g^*$ (or, equivalently, the function $g^*$ picked by nature is known). Thus, in this case, the hypothesis $f$ picked is of the form $(\t_0,\t_1,g^*)$. Considering $g^*$ to be the function for which $g^*(x) = 0$ $\forall$ $x \in \Xc$, as mentioned previously in the introduction, we recover the setting of universal compression over the class of binary i.i.d processes. In this case, the minmax regret $R_n$ in~\eqref{eq:minmaxprob} reduces to 
\begin{align}\label{eq:binaryiidregret}
    R_n = \min_q \max_{\t \in [0,1]} \Expt\left[\log \frac{p_{\t}(Y^n)}{q(Y^n)}\right] = \min_q \max_{\t \in [0,1]} D_{\mathrm{KL}}(p_{\t}(Y^n)||q(Y^n)) 
\end{align}
where $Y_i \sim \Ber(\t)$ i.i.d. and $p_{\t}(\cdot)$ is the probability law for this process. As mentioned in the introduction, it is well known that in this case 
\begin{align}\label{eq:binaryiidregvalue}
R_n = \frac{1}{2}\log n + o(\log n),
\end{align}
and that this is asymptotically achieved by an instance of the mixture probability assignment~\eqref{eq:MixtureProbAssgn} given by 
\[
q_{\KT}(y_i|y^{i-1}) = \frac{\int_{0}^1 p_{\t}(y^i) w(\t) d\t}{\int_{0}^1 p_{\t}(y^{i-1}) w(\t) d\t} = \frac{\Expt_{\T}[p_{\T}(y^i)]}{\Expt_{\T}[p_{\T}(y^{i-1})]}
\]
with $\T \sim \mathrm{Beta}(1/2,1/2)$. This probability assignment is known as the Krichevsky--Trofimov (KT) probability assignment~\cite{krichevsky1981performance} and motivates the use of the $\mathrm{Beta}$(1/2,1/2) prior for $\T_0$ and $\T_1$. For a sequence $y^n$, the sequential probability assignment $q_{\KT}(y_{i+1}|y^{i})$ turns out to be the so-called ``add-$1/2$" estimator which sets $q_{\KT}(0|y^{i}) = \frac{\sum_{t=1}^i \indic\{y_t = 0\} + 1/2}{i+1}$. Moreover, it can be shown that if $k = \sum_{i=1}^n y_i$,
\begin{align}
q_{\mathrm{KT}}(y^n) = \int_0^1 p_{\theta}(y^n) w(\t) d\theta = 
    \frac{1}{4^{n}}\frac{{n \choose k} {2n \choose n}}{{2n \choose 2k}}
\end{align}

When the range of $g^*$ includes both 0 and 1, a modification of the KT probability assignment can achieve regret $\log n + o(\log n)$. 

Consider the sequential probability assignment 
%  \[
%     q_{\KT}(y_i|x^i,y^{i-1}) = \Bigg\{ \begin{array}{lr}
%         \frac{\sum_{t=1}^{i-1} \indic\{y_t = 0, g^*(x_t) = g^*(x_i)\} + 1/2}{\sum_{t=1}^{i-1} \indic\{g^*(x_t) = g^*(x_i)\} + 1} & \text{if } g^*(x_i) = 0\\
%          \frac{\sum_{t=1}^{i-1} \indic\{y_t = 1, g^*(x_t) = g^*(x_i)\} + 1/2}{\sum_{t=1}^{i-1} \indic\{g^*(x_t) = g^*(x_i)\} + 1} & \text{if } g^*(x_i) = 1.
%         \end{array}
%   \]
\[
q_{\KT}(0|x^i,y^{i-1}) =  \frac{\sum_{t=1}^{i-1} \indic\{y_t = 0, g^*(x_t) = g^*(x_i)\} + 1/2}{\sum_{t=1}^{i-1} \indic\{g^*(x_t) = g^*(x_i)\} + 1}
\]

Without the $x_i$, this can be seen to be the standard Krichevsky--Trofimov (KT) probability assignment for binary i.i.d. processes. With the side information $x_i$, $q_{\KT}$ is seen to be a ``block-wise" or ``symbol-wise" KT probability assignment. This can be seen to be a probability assignment of the form in~\eqref{eq:MixtureProbAssgn} with 
\[
q_{\KT}(y_i|x^i,y^{i-1}) = \frac{\int_{0}^1\int_{0}^1 p_{g^*,\t_0,\t_1}(y^i|x^i)w(\t_0)w(\t_1)d\t_0 d\t_1}{\int_{0}^1\int_{0}^1 p_{g^*,\t_0,\t_1}(y^{i-1}|x^{i-1})w(\t_0)w(\t_1)d\t_0 d\t_1}
\]
We can then bound the regret achieved by the probability assignment $q_{\KT}$.

\begin{lemma}\label{lem:BlkKTUpBd}
When the function class $\Gc$ is such that $|\Gc| = 1$, we have 
\begin{align}
    R_n(q_{\KT}) \le \log\left(\frac{n}{2} + 1\right) + \log\frac{\pi^2}{8}.
\end{align}
\end{lemma}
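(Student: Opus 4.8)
The plan is to use the mixture interpretation of $q_{\KT}$ to reduce the problem to the textbook analysis of the KT probability assignment for a single binary i.i.d.\ source, applied separately to the two ``blocks'' determined by the known function $g^*$. Since $q_{\KT}(y_i\mid x^i,y^{i-1})$ is precisely the mixture assignment \eqref{eq:MixtureProbAssgn} with $G=g^*$ deterministic and $\T_0,\T_1\sim\mathrm{Beta}(1/2,1/2)$ independent, Proposition~\ref{prop:ChainRuleMixture} gives, for every realization of $(X^n,Y^n)$,
\[
\sum_{i=1}^n\log\frac1{q_{\KT}(Y_i\mid X^i,Y^{i-1})}-\sum_{i=1}^n\log\frac1{p_f(Y_i\mid X_i)}=\log\frac{p_f(Y^n\mid X^n)}{\Expt_{\T_0,\T_1}[p_{(g^*,\T_0,\T_1)}(Y^n\mid X^n)]}.
\]
I would then split $[n]$ into $S_0=\{i:g^*(X_i)=0\}$ and $S_1=\{i:g^*(X_i)=1\}$ of sizes $n_0,n_1$, and write $Y^{(b)}=(Y_i)_{i\in S_b}$. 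Because $p_f(Y^n\mid X^n)=p_{\t_0}(Y^{(0)})\,p_{\t_1}(Y^{(1)})$ factorizes over the two blocks and $\T_0,\T_1$ are independent, the denominator also factorizes as $q_{\KT}(Y^{(0)})\,q_{\KT}(Y^{(1)})$, where now $q_{\KT}(\cdot)$ denotes the ordinary (side-information-free) KT assignment. Thus the path-wise regret is a sum of two ordinary KT log-loss excesses.

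Taking expectations and conditioning on $X^n$ — which fixes $n_0,n_1$ and makes $Y^{(0)}\sim\mathrm{Bernoulli}(\t_0)^{\otimes n_0}$ and $Y^{(1)}\sim\mathrm{Bernoulli}(\t_1)^{\otimes n_1}$ independent — this yields
\[
R_{n,P_X}(q_{\KT},f)=\Expt_{X^n}\big[\,\rho(n_0,\t_0)+\rho(n_1,\t_1)\,\big],\qquad \rho(m,\t):=\Expt_{Z^m\sim\mathrm{Bernoulli}(\t)^{\otimes m}}\Big[\log\tfrac{p_\t(Z^m)}{q_{\KT}(Z^m)}\Big],
\]
with the convention $\rho(0,\cdot)=0$. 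It therefore suffices to upper bound the single-source KT redundancy $\rho(m,\t)$ uniformly in $\t$.

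The core estimate is $\rho(m,\t)\le\tfrac12\log(m+1)+c$ for an absolute constant $c$ and every $\t\in[0,1]$. I would prove the path-wise inequality $-\log q_{\KT}(z^m)\le -\log p_{\hat\t}(z^m)+\tfrac12\log(m+1)+c$, where $\hat\t=k/m$ with $k=\sum_i z_i$, using the closed form $q_{\KT}(z^m)=4^{-m}\binom mk\binom{2m}{m}/\binom{2m}{2k}$ (equivalently, the telescoping product of add-$\tfrac12$ factors) together with sharp two-sided bounds on $\binom{2m}{m}$, $\binom mk$ and $\binom{2m}{2k}$; then take expectations, using $p_{\hat\t}(z^m)\ge p_\t(z^m)$ and $\Expt_\t[-\log p_{\hat\t}(Z^m)]=\Expt_\t[m\,h(K/m)]\le m\,h(\t)$ (concavity of the binary entropy $h$ and Jensen), so that the $m\,h(\t)$ terms cancel and only $\tfrac12\log(m+1)+c$ remains.

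Finally I would combine the two blocks: $\rho(n_0,\t_0)+\rho(n_1,\t_1)\le\tfrac12\log(n_0+1)+\tfrac12\log(n_1+1)+2c\le\log(\tfrac n2+1)+2c$, where the last step is the elementary AM--GM inequality $\tfrac12\log a+\tfrac12\log b\le\log\tfrac{a+b}2$ with $a=n_0+1$, $b=n_1+1$, $a+b=n+2$ (valid for every realization of $X^n$, hence in expectation). Since this holds for all $P_X$ and $f$, taking the maximum gives $R_n(q_{\KT})\le\log(\tfrac n2+1)+2c$, and the stated constant follows from careful bookkeeping in the single-block bound. The main obstacle is precisely this bookkeeping: the worst case of $\rho(m,\t)$ sits at $\t$ near $\{0,1\}$ (for instance $\rho(m,0)=\log\big(4^m/\binom{2m}{m}\big)$), where naive Stirling bounds are too lossy, so one must use tight two-sided estimates for the central and the general binomial coefficients to recover the correct constant rather than just the right order $\tfrac12\log m$.
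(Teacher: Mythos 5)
Your proposal is correct and follows essentially the same route as the paper: decompose the regret pathwise into the two blocks determined by $g^*$, bound each block's KT redundancy by $\tfrac12\log(n_b+1)+\tfrac12\log\tfrac{\pi^2}{8}$ via the closed-form KT mass, the maximum-likelihood bound $p_\t(z^m)\le 2^{-mh(k/m)}$, and a Stirling estimate, then combine with AM--GM using $n_0+n_1=n$. The only cosmetic difference is that your Jensen step for $\Expt[mh(K/m)]$ is unnecessary, since the cancellation of the $2^{-mh(k/m)}$ factors already happens pathwise.
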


% \begin{remark}[Binary i.i.d. process]
% A simple variation of the proof of Lemma~\ref{lem:BlkKTUpBd} can be seen to yield the $\frac{\log n}{2} + o(\log n)$ regret for the minmax problem~\eqref{eq:binaryiidregret}. 
% \end{remark}

\begin{remark}[Laplace probability assignment]\label{rem:LaplaceProbAssgn}
Instead of using the $\mathrm{Beta}(1/2,1/2)$ prior, one can use the $\Unif[0,1]$ prior and choose the sequential probability assignment 
\[
q_{\mathrm{L}}(y_{i+1}|y^i) = \frac{\int_{0}^1 p_{\t}(y^{i+1}) d\t }{\int_{0}^1 p_{\t}(y^{i}) d\t},
\]
which yields the so-called Laplace or the add-1 probability assignment. It can be shown that for the problem~\eqref{eq:binaryiidregret}, $q_{\mathrm{L}}(\cdot)$ can achieve $R_n(q_{\mathrm{L}}) \le \log n + o(\log n)$.  Thus, the Laplace probability assignment achieves the optimal regret in order but with a slightly larger constant, a result that even holds for very rich expert classes~\cite{Yang--Barron99}.  It can be shown that if $k = \sum_{i=1}^n y_i$, we have $q_{\mathrm{L}}(y^n) = \int_0^1 p_{\theta}(y^n) d\theta = \frac{1}{(n+1) {n \choose k} }$. For mathematical convenience, we will use the Laplace probability assignment later in the paper, specifically in Sections~\ref{subsec:PxKnown} and~\ref{sec:genupperbdpf}.
\end{remark}
%%%%%%%%%%%%%%%%%%%%%%%%%%%%%%%%%%%%%%%%%%%%%%%%%%%%%%%%%%%%%%%%%%%%%%%%%%%%%%%%%%%%%%%%%%%%%%%%%%%%%%%%%%%%%%%%%%%%%%%%%%%%%%%%%%%%%%%%%%%%%%%%%%%%%%%%%%%%%%%%%%%%%%%%%%%%%%%%%%%%%%%%%%%%%%%%%%%%%%%%%%%%%%%%%%%%%%%%%%%%%%%%%%%%%%%%%%%%%%%%%%%%%%%%%%%%%%%%%%%%%%%%%%%%%%%%%%%%%%%%%%%%%%%%%%%%%%%%%%%%%%%%%%%%%%%%%%%%%%%%%%%%%%%%%%%%%%%%%%%%%%%%%%%%%%%%%%%%%%%%%%%%%%%%%%%%%%%%%%%%%%%%%%%%%%%%%%%%%%%%%%%%%%%%%%%%%%%%%%%%%%

\subsection{When $|\Gc| < \infty$}\label{subsec:FiniteHypothesisClass}

When $|\Gc| < \infty$, we can use a probability assignment~\eqref{eq:MixtureProbAssgn} with $G$ distributed as $\Unif(\Gc)$. Then, for this choice of mixture, we have 
\begin{align}\label{eq:finiteSumMixture}
    \Expt_F\left[p_F\left(y^i|x^i\right)\right] = \frac{1}{|\Gc|}\sum_{g \in \Gc} \int_{0}^1 \int_{0}^1 p_{g, \t_0, \t_1}(y^i|x^i)w(\t_0)w(\t_1) d\t_0d\t_1
\end{align}
where $w(x) = \frac{1}{\sqrt{x(1-x)}}$ is the $\mathrm{Beta}(1/2,1/2)$ prior as before. 

We can then establish the following upper bound on the regret for the probability assignment $q_{\mix}$ characterized by the mixture~\eqref{eq:finiteSumMixture}.

\begin{lemma}\label{lem:finiteHypClass}
For the probability assignment $q_{\mix}$ with characterized by the mixture~\eqref{eq:finiteSumMixture}, we have 
\begin{align}
    R_n(q_{\mix}) \le \log |\Gc| + \log\left(\frac{n}{2}+1\right) + \log\frac{\pi^2}{8}.
\end{align}
\end{lemma}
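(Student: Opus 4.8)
The plan is to reduce everything to the single-function case already settled in Lemma~\ref{lem:BlkKTUpBd}. First I would invoke Proposition~\ref{prop:ChainRuleMixture}: for any $P_X$ and any $f=(g^*,\t_0^*,\t_1^*)\in\Fc$, the cumulative log-loss difference of $q_{\mix}$ against $f$ equals $\log\frac{p_f(Y^n|X^n)}{\Expt_F[p_F(Y^n|X^n)]}$, so that $R_{n,P_X}(q_{\mix},f)=\Expt\!\left[\log\frac{p_f(Y^n|X^n)}{\Expt_F[p_F(Y^n|X^n)]}\right]$, the outer expectation being over $(X^n,Y^n)$ drawn from~\eqref{eq:XnYnLawh}.

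Next I would lower-bound the denominator. Every summand in~\eqref{eq:finiteSumMixture} is nonnegative, so discarding all but the term indexed by the true function $g^*$ gives
\[
\Expt_F[p_F(y^n|x^n)]\;\ge\;\frac{1}{|\Gc|}\int_0^1\!\!\int_0^1 p_{g^*,\t_0,\t_1}(y^n|x^n)\,w(\t_0)w(\t_1)\,d\t_0\,d\t_1\;=\;\frac{1}{|\Gc|}\,q_{\KT}(y^n|x^n),
\]
where $q_{\KT}(\cdot|x^n)$ is exactly the block-wise KT assignment built from $g^*$ that is analyzed in Section~\ref{subsec:onlyonefn}. Hence
\[
\log\frac{p_f(Y^n|X^n)}{\Expt_F[p_F(Y^n|X^n)]}\;\le\;\log|\Gc|+\log\frac{p_f(Y^n|X^n)}{q_{\KT}(Y^n|X^n)}.
\]

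Taking expectations over $(X^n,Y^n)$, the first term is a constant and the second is precisely the regret $R_{n,P_X}(q_{\KT},f)$ of the block-wise KT assignment in the instance where the function class consists only of $g^*$. That instance is governed by Lemma~\ref{lem:BlkKTUpBd}, whose bound $\log(n/2+1)+\log\frac{\pi^2}{8}$ depends only on $n$; therefore $R_{n,P_X}(q_{\mix},f)\le\log|\Gc|+\log(n/2+1)+\log\frac{\pi^2}{8}$, and taking the maximum over $P_X$ and $f$ gives the claim.

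The only step that needs any care is this last reduction: one must check that the block-wise KT assignment produced by the lower bound is literally the assignment analyzed in Lemma~\ref{lem:BlkKTUpBd}, and that that lemma's bound is uniform over which function is the sole member of the class — both of which are immediate from its statement. Everything else is bookkeeping; in particular the marginal $P_X$ never appears, since the ratio $p_f(Y^n|X^n)/\Expt_F[p_F(Y^n|X^n)]$ is taken conditionally on $X^n$ and the $P_X$ factors in~\eqref{eq:XnYnLawh} cancel.
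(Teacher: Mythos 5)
Your proposal is correct and follows essentially the same route as the paper's proof: apply Proposition~\ref{prop:ChainRuleMixture}, lower-bound the mixture by retaining only the summand for the true $g^*$ (costing the $\log|\Gc|$ term), and conclude with Lemma~\ref{lem:BlkKTUpBd}. The observations you flag as needing care (that the retained term is exactly the block-wise KT mixture of Lemma~\ref{lem:BlkKTUpBd}, and that its bound is uniform in $g^*$ and $P_X$) are indeed the right ones and hold as you say.
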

%%%%%%%%%%%%%%%%%%%%%%%%%%%%%%%%%%%%%%%%%%%%%%%%%%%%%%%%%%%%%%%%%%%%%%%%%%%%%%%%%%%%%%%%%%%%%%%%%%%%%%%%%%%%%%%%%%%%%%%%%%%%%%%%%%%%%%%%%%%%%%%%%%%%%%%%%%%%%%%%%%%%%%%%%%%%%%%%%%%%%%%%%%%%%%%%%%%%%%%%%%%%%%%%%%%%%%%%%%%%%%%%%%%%%%%%%%%%%%%%%%%%%%%%%%%%%%%%%%%%%%%%%%%%%%%%%%%%%%%%%%%%%%%%%%%%%%%%%%%%%%%%%%%%%%%%%%%%%%%%%%%%%%%%%%%%%%%%%%%%%%%%%%%%%%%%%%%%%%%%%%%%%%%%%%%%%%%%%%%%%%%%%%%%%%%%%%%%%%%%%%%%%%%%%%%%%%%%%%%%%%

\subsection{Side Information $X^n$ Available Noncausally}\label{subsec:noncausalsideinfo}
In this subsection we consider the special case when the side information $X^n$ is available \emph{noncausally} instead of sequentially. The results and intuition developed in this section will be used in proofs further ahead. 

When the side information $X^n$ is available noncausally, the probability assignment for $Y_i$ is of the form $q(Y_i|X^n,Y^{i-1})$ and the regret for a probability assignment $q$ can be seen to be
\begin{align}\label{eq:NoncausalRegret}
    R_{n,\mathrm{nc}}(q) = \max_{P_X,f} \Expt_{X^n,Y^n}\left[\sum_{i=1}^n \log \frac{1}{q(Y_i|X^n,Y^{i-1})} - \sum_{i=1}^n \log\frac{1}{p_f(Y_i|X_i)} \right].
\end{align}
Since the side information $X^n$ is available in advance, we can choose our mixture over the hypothesis class $\Fc$ to be dependent on $X^n$. As done so far, for $H = (\T_0,\T_1,G)$ we will choose $\T_0,\T_1$ and $G$ to be mutually independent with $\T_0, \T_1 \sim \mathrm{Beta}(1/2,1/2)$. We will now define a distribution over $\Gc$ that is dependent on the side information $X^n$. 

Given $X^n$, define the set $\Pc_n(X^n) = \{(g(X_1),\dotsc,g(X_n)), g \in \Gc \} \subseteq \{0,1\}^n$. For the remainder of this subsection, for brevity we will refer to $\Pc_n(X^n)$ by just $\Pc_n$. Enumerating the elements of $\Pc_n$ by $1,2,\dotsc, |\Pc_n|$, we now define the set
\begin{align}\label{eq:Ijdefn}
    I_j = \{g \in \Gc, (g(X_1),\dotsc,g(X_n)) = \Pc_n(j)\} 
\end{align}
where $\Pc_n(j)$ represents the $j$-th element in $\Pc_n$. Clearly, the sets $I_1,\dotsc,I_{|\Pc_n|}$ are nonempty and partition $\Gc$. So, $X^n$ can be thought of as partitioning $\Gc$ into sets where any two functions $g_1, g_2$ in the same partition have $g_1(X_j) = g_2(X_j) \forall j \in [n]$.

Pick an arbitrary $g_i \in I_i$ for $i = 1,\dotsc,|\Pc_n|$. Choosing $G \sim \Unif\left\{g_1,\dotsc,g_{|\Pc_n|}\right\}$, we have
\begin{align}\label{eq:noncausalMixture}
    \Expt_F\left[p_F\left(y^i|x^i\right)\right] = \frac{1}{|\Pc_n|}\sum_{i = 1}^{|\Pc_n|} \int_{0}^1 \int_{0}^1 p_{g_i, \t_0, \t_1}(y^i|x^i)w(\t_0)w(\t_1) d\t_0d\t_1.
\end{align}
\begin{lemma}\label{lem:noncausalsideInfo}
For the probability assignment $q_{\mix}$ characterized by the mixture~\eqref{eq:noncausalMixture}, we have 
\begin{align}\label{eq:regNoncausalVCdim}
    R_{n,\mathrm{nc}}(q_{\mix}) \le d\log(en/d) + \log\left(\frac{n}{2}+1\right) + \log \frac{\pi^2}{8}.
\end{align}
\end{lemma}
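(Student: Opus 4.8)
The plan is to reduce the noncausal side-information problem, conditioned on the revealed $X^n$, to a finite-mixture problem of exactly the type analyzed in Lemma~\ref{lem:finiteHypClass}. First I would apply the chain-rule identity of Proposition~\ref{prop:ChainRuleMixture}, which for any mixture probability assignment $q_{\mix}$ (now with the mixture allowed to depend on $X^n$) gives
\begin{align*}
\sum_{i=1}^n \log\frac{1}{q_{\mix}(Y_i|X^n,Y^{i-1})} - \sum_{i=1}^n\log\frac{1}{p_f(Y_i|X_i)} = \log\frac{p_f(Y^n|X^n)}{\Expt_F[p_F(Y^n|X^n)\mid X^n]}.
\end{align*}
Taking expectation over $Y^n$ given $X^n$, the left-hand side becomes a conditional KL divergence, and so it suffices to bound $\Expt_{Y^n|X^n}\bigl[\log\frac{p_f(Y^n|X^n)}{\Expt_F[p_F(Y^n|X^n)\mid X^n]}\bigr]$ uniformly over $f$ and then take the worst case over $P_X$ (which only enters through the outer expectation over $X^n$, so a pointwise-in-$X^n$ bound suffices).

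Next I would exploit the partition structure. Fix $X^n$ and let $g^* = g$ be the function in nature's hypothesis $f=(\t_0,\t_1,g)$. Since the sets $I_1,\dots,I_{|\Pc_n|}$ partition $\Gc$, the function $g$ lies in some $I_j$, and crucially $g(X_i) = g_j(X_i)$ for all $i\in[n]$ by the definition~\eqref{eq:Ijdefn} of $I_j$. Therefore $p_{g,\t_0,\t_1}(Y^n|X^n) = p_{g_j,\t_0,\t_1}(Y^n|X^n)$, i.e. the likelihood of the data under nature's hypothesis is identical to that under the representative hypothesis $(\t_0,\t_1,g_j)$ which \emph{is} in the support of the mixture $G\sim\Unif\{g_1,\dots,g_{|\Pc_n|}\}$. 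Dropping all but the $j$-th term in the sum~\eqref{eq:noncausalMixture} gives the lower bound $\Expt_F[p_F(Y^n|X^n)\mid X^n] \ge \frac{1}{|\Pc_n|}\int_0^1\int_0^1 p_{g_j,\t_0,\t_1}(Y^n|X^n)w(\t_0)w(\t_1)\,d\t_0 d\t_1$, so the regret is at most $\log|\Pc_n|$ plus the regret of the blockwise-KT assignment for the single known function $g_j$ — which by (essentially) Lemma~\ref{lem:BlkKTUpBd} is at most $\log(\tfrac{n}{2}+1) + \log\tfrac{\pi^2}{8}$. More carefully, one splits the coordinates according to $g_j(X_i)=0$ versus $g_j(X_i)=1$ into two sub-blocks of sizes $n_0+n_1=n$, on each of which the $Y_i$ are i.i.d.\ Bernoulli and the KT mixture pays at most $\log(\tfrac{n_b}{2}+1)+\log\tfrac{\pi}{2}$; summing and using concavity of $\log$ to get $\log(\tfrac{n_0}{2}+1)+\log(\tfrac{n_1}{2}+1)\le 2\log(\tfrac{n}{4}+1)\le \log(\tfrac n2 +1)+\log\tfrac{\pi^2}{8}$, or more directly invoking Lemma~\ref{lem:BlkKTUpBd} which already handles the two-block case.

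Finally I would bound $\log|\Pc_n|$ using the Sauer--Shelah lemma: since $\mathrm{VCdim}(\Gc)=d$, the number of distinct labelings $|\Pc_n(X^n)| = |\{(g(X_1),\dots,g(X_n)):g\in\Gc\}|$ is at most $\sum_{i=0}^d \binom{n}{i} \le (en/d)^d$, hence $\log|\Pc_n| \le d\log(en/d)$. Combining the three pieces and taking the supremum over $f$ and $P_X$ yields $R_{n,\mathrm{nc}}(q_{\mix}) \le d\log(en/d) + \log(\tfrac n2+1) + \log\tfrac{\pi^2}{8}$, which is~\eqref{eq:regNoncausalVCdim}. I expect the only real subtlety to be the bookkeeping in the middle step — making sure that conditioning on $X^n$ the sub-blocks really do behave as independent Bernoulli sequences so that the single-function KT bound (Lemma~\ref{lem:BlkKTUpBd}) applies verbatim, and that the constant $\log\tfrac{\pi^2}{8}$ survives the two-block split; everything else (the chain rule, dropping mixture terms, Sauer--Shelah) is routine.
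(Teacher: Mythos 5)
Your proposal is correct and follows essentially the same route as the paper: pass to the representative $g_j$ agreeing with $g^*$ on $X^n$ by dropping all other terms of the mixture (paying $\log|\Pc_n|$), bound $|\Pc_n|$ via Sauer--Shelah, and invoke Lemma~\ref{lem:BlkKTUpBd} for the residual single-function block-KT regret. The only blemish is the parenthetical two-block arithmetic, where the claimed inequality $2\log(\tfrac n4+1)\le\log(\tfrac n2+1)+\log\tfrac{\pi^2}{8}$ fails for large $n$ (the correct per-block cost is $\tfrac12\log(n_l+1)+\tfrac12\log\tfrac{\pi^2}{8}$, combined by AM--GM) --- but as you note, invoking Lemma~\ref{lem:BlkKTUpBd} directly sidesteps this entirely.
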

%%%%%%%%%%%%%%%%%%%%%%%%%%%%%%%%%%%%%%%%%%%%%%%%%%%%%%%%%%%%%%%%%%%%%%%%%%%%%%%%%%%%%%%%%%%%%%%%%%%%%%%%%%%%%%%%%%%%%%%%%%%%%%%%%%%%%%%%
%%%%%%%%%%%%%%%%%%%%%%%%%%%%%%%%%%%%%%%%%%%%%%%%%%%%%%%%%%%%%%%%%%%%%%%%%%%%%%%%%%%%%%%%%%%%%%%%%%%%%%%%%%%%%%%%%%%%%%%%%%%%%%
\subsection{When $P_X$ is Known}\label{subsec:PxKnown}

Consider the case when the distribution $P_X$ is known. In this case, the main idea is to choose the distribution of $G$ to be uniform over a \emph{finite} set of functions in $\Gc$ that form a fine-enough covering of $\Gc$. We make this idea precise next. First we will need the following Lemma.

\begin{lemma}[Lemma 13.6 of~\cite{Boucheron--Lugosi--Massart13}]\label{lem:VCClassCoveringNo}
For  $f,g \in \Gc$, define the metric $L^2(P_X)$ as 
\[
\|f-g\|_{L^2(P_X)} = \left( \Expt[f(X)-g(X)]^2\right)^{1/2}.
\]
Let $\Nc(\Gc,L^2(P_X),\e)$ denote the covering number of $\Gc$ in the metric $L^2(P_X)$. Then, we have 
\begin{align}\label{eq:VCClassCoveringNo}
    \Nc(\Gc,L^2(P_X),\e) \le \left(\frac{e^2}{\e}\right)^{2d}.
\end{align}
\end{lemma}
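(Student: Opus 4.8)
\textbf{Proof plan for Lemma~\ref{lem:VCClassCoveringNo}.}
The plan is to invoke the classical covering-number bound for VC classes, usually attributed to Haussler, which states that a class $\Gc$ of $\{0,1\}$-valued functions with $\mathrm{VCdim}(\Gc) = d$ satisfies a covering-number bound of the form $\Nc(\Gc, L^2(Q), \e) \le C(d+1)(2e/\e)^{2d}$ (or, in cruder forms that already suffice here, $(e^2/\e)^{2d}$) uniformly over all probability measures $Q$ on $\Xc$. Since the statement is cited verbatim as Lemma 13.6 of~\cite{Boucheron--Lugosi--Massart13}, the honest thing to do is simply to point to that reference; but for self-containedness I would sketch the two ingredients behind it.

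First I would recall the Sauer--Shelah lemma: for any finite sample $x_1,\dotsc,x_m \in \Xc$, the number of distinct sign patterns $(g(x_1),\dotsc,g(x_m))$ realized by $g \in \Gc$ is at most $\sum_{i=0}^d \binom{m}{i} \le (em/d)^d$ for $m \ge d$. Second, I would use a probabilistic (random subsampling) argument: fix $\e$, and let $G_1,\dotsc,G_m$ be i.i.d.\ draws from $P_X$. If two functions $f, g$ satisfy $\|f-g\|_{L^2(P_X)} > \e$, i.e.\ $\Prob_{X\sim P_X}[f(X)\ne g(X)] > \e^2$ (using that $f,g$ are $\{0,1\}$-valued so $(f(X)-g(X))^2 = \indic\{f(X)\ne g(X)\}$), then for $m$ chosen appropriately (roughly $m \asymp \e^{-2}\log(\text{packing number})$) with positive probability all pairs in a maximal $\e$-separated set are distinguished on the sample $G_1,\dotsc,G_m$; hence the packing number is bounded by the number of sign patterns on $m$ points, which by Sauer--Shelah is at most $(em/d)^d$. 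Optimizing the choice of $m$ and converting the packing bound to a covering bound (packing number at scale $\e$ dominates covering number at scale $\e$) yields the claimed $\left(e^2/\e\right)^{2d}$ after absorbing the logarithmic and constant factors into the exponent — this is precisely the (slightly lossy but clean) form quoted.

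The main obstacle, such as it is, is purely bookkeeping: making the choice of $m$ in the subsampling step quantitative enough that the final constant collapses to the stated $e^2$ in the base and exactly $2d$ in the exponent, rather than something like $C d (2e/\e)^{2d}$. Since the lemma is stated as a direct citation and only this clean form is needed downstream (it will be used with $\e$ chosen as a function of $n$ to build a finite covering net of $\Gc$), I would not reprove it from scratch in the paper; I would cite~\cite{Boucheron--Lugosi--Massart13} for the precise inequality~\eqref{eq:VCClassCoveringNo} and move on, possibly including the one-paragraph sketch above as a remark for the reader's convenience.
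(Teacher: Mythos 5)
Your approach matches the paper exactly: the lemma is stated as a direct citation of Lemma 13.6 in~\cite{Boucheron--Lugosi--Massart13} and no proof is given in the paper, so citing the reference (with or without the optional sketch of the Sauer--Shelah plus probabilistic-extraction argument) is precisely what is done. Your observation that $(f(X)-g(X))^2 = \indic\{f(X)\neq g(X)\}$ for binary-valued $f,g$ is also consistent with how the paper immediately converts the $L^2(P_X)$ bound into a bound in the metric $d(f,g)=\Prob(f(X)\neq g(X))$.
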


Consider the metric $d(f,g) = \Prob(g(X) \neq f(X))$ for $f,g \in \Gc$. Since 
\[
\|f-g\|_{L^2(P_X)} = \sqrt{\Prob(g(X) \neq f(X))},
\]
any $\sqrt{\e}$ covering of $\Gc$ in the $L^2(P_X)$ metric is a $\e$ covering of $\Gc$ in the metric $d$. Therefore,
\begin{align}\label{eq:VCClassCoveringNoDmetric}
    \Nc(\Gc,d,\e) \le \left(\frac{e^4}{\e}\right)^{d}.
\end{align}

We will now construct a mixture probability assignment of the form in~\eqref{eq:MixtureProbAssgn}. To do this, we must specify a distribution over the hypothesis class $\Fc$. Consider $g_1,g_2,\dotsc,g_{\lfloor (e^4n)^d \rfloor}$ that form a $1/n$ covering of $\Gc$ in the metric $d$. By~\eqref{eq:VCClassCoveringNoDmetric}, $\lfloor (e^4n)^d \rfloor$ such functions exist. Take $G, \T_0$ and $\T_1$ to be independent, with $\T_0, \T_1 \sim \Unif[0,1]$~\footnote{As mentioned in Remark~\ref{rem:LaplaceProbAssgn}, this corresponds to the Laplace probability assignment and we do this because it considerably simplifies the proof at just the cost of a slightly larger constant.} and $G \sim \Unif\{g_1,\dotsc,g_{\lfloor (e^4n)^d \rfloor}\}$. We then have 
\begin{align}\label{eq:VcclassCoverMix}
    \Expt_{F}[p_F(y^i|x^i)] = \frac{1}{\lfloor (e^4n)^d \rfloor} \sum_{i=1}^{ \lfloor (e^4n)^d \rfloor }\int_0^1 \int_0^1 p_{g_i,\t_0,\t_1} d\t_0 d\t_1
\end{align}
which we substitute into~\eqref{eq:MixtureProbAssgn} to construct $q_{\mix}$. We can then prove the following.

\begin{lemma}\label{lem:PxKnownReg}
For $q_{\mix}$ characterized by the mixture~\eqref{eq:VcclassCoverMix}, we have 
\begin{align}\label{eq:VCclassCovermixReg}
     \max_{f^*} R_{n,P_X}(q_{\mix},f^*) &\le (d+8)\log\left(e^4 n \right) + 6.
\end{align}
\end{lemma}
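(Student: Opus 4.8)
## Proof Proposal for Lemma~\ref{lem:PxKnownReg}

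The plan is to use Proposition~\ref{prop:ChainRuleMixture} to reduce the regret to bounding $\log\frac{p_{f^*}(Y^n|X^n)}{\Expt_F[p_F(Y^n|X^n)]}$, and then control this quantity by restricting the mixture over $G$ to a single well-chosen covering function. Fix a hypothesis $f^* = (g^*, \t_0^*, \t_1^*)$ chosen by nature. By the covering guarantee, there is some $g_j$ among $g_1, \dots, g_{\lfloor (e^4 n)^d\rfloor}$ with $d(g_j, g^*) = \Prob(g_j(X) \neq g^*(X)) \le 1/n$. Since $\Expt_F[p_F(Y^n|X^n)] \ge \frac{1}{\lfloor (e^4n)^d\rfloor}\int_0^1\int_0^1 p_{g_j,\t_0,\t_1}(Y^n|X^n)\,d\t_0\,d\t_1$, we immediately get
\begin{align}
\log\frac{p_{f^*}(Y^n|X^n)}{\Expt_F[p_F(Y^n|X^n)]} \le \log\lfloor (e^4n)^d\rfloor + \log\frac{p_{g^*,\t_0^*,\t_1^*}(Y^n|X^n)}{\int_0^1\int_0^1 p_{g_j,\t_0,\t_1}(Y^n|X^n)\,d\t_0\,d\t_1}.
\end{align}
The first term contributes $d\log(e^4 n)$. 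The remaining task is to bound the expectation (over $X^n, Y^n$ drawn according to $f^*$) of the second term.

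Next I would split the second term as a sum of two pieces by inserting $p_{g_j,\t_0^*,\t_1^*}(Y^n|X^n)$: namely $\log\frac{p_{g^*,\t_0^*,\t_1^*}(Y^n|X^n)}{p_{g_j,\t_0^*,\t_1^*}(Y^n|X^n)}$ plus $\log\frac{p_{g_j,\t_0^*,\t_1^*}(Y^n|X^n)}{\int\int p_{g_j,\t_0,\t_1}(Y^n|X^n)\,d\t_0\,d\t_1}$. The second of these is exactly the Laplace/mixture redundancy for a known function $g_j$, which by the argument behind Lemma~\ref{lem:BlkKTUpBd} (using the Laplace prior instead, cf.\ Remark~\ref{rem:LaplaceProbAssgn}) is at most $\log(n+1) + (\text{small constant})$ in expectation --- the two ``blocks'' corresponding to $g_j = 0$ and $g_j = 1$ each contribute at most $\log(n+1)$, but since the block sizes sum to $n$ a joint bound of roughly $2\log(n/2+1) \le 2\log n$ or so is available; I would fold this into the $+6$ and the $\log(e^4n)$ terms. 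The first piece is the crux: I need to show that in expectation, using a function $g_j$ that differs from $g^*$ on a set of $X$-probability at most $1/n$ costs only $O(1)$ bits.

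For that crucial first piece, the key observation is that $p_{g^*,\t_0^*,\t_1^*}(y^n|x^n)$ and $p_{g_j,\t_0^*,\t_1^*}(y^n|x^n)$ are products over $i$ of $\Ber(\t^*_{g^*(x_i)})$ versus $\Ber(\t^*_{g_j(x_i)})$ factors, and these factors agree except on the indices $i$ where $g^*(x_i) \neq g_j(x_i)$. On such an index the log-ratio is bounded by $\log\frac{\max(\t_0^*,\t_1^*,1-\t_0^*,1-\t_1^*)}{\min(\t_0^*,1-\t_0^*,\t_1^*,1-\t_1^*)}$, which is unbounded in the worst case over $\t^*$ --- so a purely pointwise bound fails, and this is the main obstacle. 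The way around it is to take the expectation over $Y^n$ first: conditioned on $X^n$, $\Expt_{Y^n}\log\frac{p_{g^*,\t_0^*,\t_1^*}(Y^n|X^n)}{p_{g_j,\t_0^*,\t_1^*}(Y^n|X^n)} = \sum_{i : g^*(X_i)\neq g_j(X_i)} D_{\mathrm{KL}}(\Ber(\t^*_{g^*(X_i)})\,\|\,\Ber(\t^*_{g_j(X_i)}))$, and each such KL term is at most... still unbounded. So instead I would bound it by the chi-square-type inequality $D_{\mathrm{KL}}(\Ber(a)\|\Ber(b)) \le \frac{(a-b)^2}{b(1-b)}$, which is still problematic for $\t^*$ near $0$ or $1$; the clean fix is to use $\Expt_{Y^n|X^n}\log\frac{p_{g^*}(Y^n|X^n)}{p_{g_j}(Y^n|X^n)} \le \Expt_{Y^n|X^n}\bigl[\frac{p_{g^*}(Y^n|X^n)}{p_{g_j}(Y^n|X^n)} - 1\bigr]$ is \emph{not} valid (wrong direction), so instead bound via $D_{\mathrm{KL}} \le \chi^2$ together with the fact that $\chi^2(\Ber(a)\|\Ber(b)) = \frac{(a-b)^2}{b(1-b)}$; alternatively, and most robustly, I would avoid $\t^*$-dependence entirely by a change-of-measure / Donsker--Varadhan style argument: $\Expt_{Y^n|X^n}\log\frac{p_{g^*}(Y^n|X^n)}{\int\int p_{g_j,\t_0,\t_1}(Y^n|X^n)d\t_0 d\t_1}$ can be bounded by first noting the denominator's integral over $\t_0,\t_1$ actually ``self-corrects'' for the at most one discrepant block-type --- concretely, group the $n$ indices by the pair $(g^*(X_i), g_j(X_i))$ into four groups, observe the $\int\int\,d\t_0 d\t_1$ averages over both parameters, and since the number of indices with $g^*(X_i)\neq g_j(X_i)$ has expectation $\le 1$ (as $\Prob(g^*(X)\neq g_j(X))\le 1/n$ and there are $n$ i.i.d.\ samples), the expected extra redundancy from the mismatch is $O(1)$ by a direct computation treating each mismatched coordinate as contributing at most $\Expt[\log\frac{1}{q_{\mathrm L}(Y_i|\cdot)}]$ which is $\le \log 2 = 1$ bit per mismatched coordinate on average. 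Collecting: $d\log(e^4n)$ from the covering cardinality, $O(\log n)$ from the two genuine blocks under $g_j$, and $O(1)$ from the expected single mismatched coordinate, and bounding constants generously gives $(d+8)\log(e^4n) + 6$. The delicate point to get right in the writeup is making the ``expected number of mismatches is $\le 1$, hence $O(1)$ extra bits'' step rigorous despite the per-coordinate log-ratio being unbounded --- this is handled by taking the $Y$-expectation inside and using $\Expt_{Y_i}[\log(1/q_{\mathrm L}(Y_i|X^i,Y^{i-1}))] \le 1$ uniformly (Laplace assigns every symbol probability $\ge$ some value, but more simply, $-\log q_{\mathrm L}(Y_i|\cdot)$ has expectation at most $1$ since the true conditional is Bernoulli and cross-entropy with Laplace's add-one rule is bounded) rather than trying to bound the likelihood ratio against the specific $\t^*$.
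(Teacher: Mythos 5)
Your overall architecture matches the paper's: reduce via Proposition~\ref{prop:ChainRuleMixture}, pay $d\log(e^4 n)$ for dropping all but the covering function $\gt$ with $\Prob(\gt(X)\neq g^*(X))\le 1/n$, and then argue that the residual cost of substituting $\gt$ for $g^*$ is small because the expected number of mismatched indices is at most $1$. You also correctly identify the central obstacle: the pointwise (and even the $Y$-averaged) log-likelihood ratio between $p_{g^*,\t_0^*,\t_1^*}$ and anything built on $\gt$ with the \emph{same} parameters is unbounded in $\t_0^*,\t_1^*$. However, the proposal does not actually resolve this obstacle. The decomposition through the middle term $p_{\gt,\t_0^*,\t_1^*}$ is a dead end (as you half-acknowledge), the KL and $\chi^2$ routes fail, and the final fallback --- that each mismatched coordinate contributes at most $1$ bit in expectation because ``cross-entropy with Laplace's add-one rule is bounded'' --- is false as stated: at a mismatched index $Y_i\sim\Ber(\t^*_{g^*(X_i)})$ while the Laplace predictor for the $\gt(X_i)$-block has counts reflecting $\t^*_{\gt(X_i)}$, and the resulting per-symbol cross-entropy can be as large as $\log(i+1)$ (Laplace only guarantees each symbol probability at least $1/(i+1)$). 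Moreover, a per-coordinate accounting is not licensed without further argument, because the mismatched indices also perturb the counts that the Laplace mixture uses at all the \emph{matched} indices of the same block.

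The paper's resolution, which is the idea missing from your proposal, is to never compare against $p_{\gt,\t_0^*,\t_1^*}$ at all. Writing $N_j,K_j$ for the counts under $g^*$ and $\Nt_j,\Kt_j$ for those under $\gt$, one has $p_{f^*}(Y^n|X^n)=\prod_j \t_j^{*K_j}(1-\t_j^*)^{N_j-K_j}$ and $\int\!\!\int p_{\gt,\t_0,\t_1}\,d\t_0 d\t_1 = \prod_j \bigl((\Nt_j+1){\Nt_j \choose \Kt_j}\bigr)^{-1}$; multiplying and dividing by ${N_j\choose K_j}$ and using ${N_j\choose K_j}\t^{K_j}(1-\t)^{N_j-K_j}\le 1$ kills the $\t^*$-dependence entirely and leaves the purely combinatorial quantity $(n+1)^2\prod_j {\Nt_j\choose\Kt_j}/{N_j\choose K_j}$. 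This ratio of binomial coefficients is then controlled by the Hamming distance $\d_n = d_H(g^*(X^n),\gt(X^n))$ via a Stirling continuity estimate ($\log\frac{(a+b)!}{a!}\le b\log(a+b+1)+b+1$, with $|\Nt_j-N_j|,|\Kt_j-K_j|\le\d_n$), giving a cost of order $\d_n\log n$ --- note $\log n$, not $O(1)$, per mismatch --- and finally $\Expt[\d_n]\le n\cdot\frac1n=1$. Without this (or an equivalent) mechanism for simultaneously eliminating $\t^*$ and quantifying the effect of the Hamming perturbation on the mixture, the proof is incomplete.
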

%%%%%%%%%%%%%%%%%%%%%%%%%%%%%%%%%%%%%%%%%%%%%%%%%%%%%%%%%%%%%%%%%%%%%%%%%%%%%%%%%%%%%%%%%%%%%%%%%%%%%%%%%%%%%%%%%%%%%%%%%%%%%%%%%%%%%%%%%%%%%%%%%%%%%%%%%%%%%%%%%%%%%%%%%%%%%%%%%%%%%%%%%%%%%%%%%%%%%%%%%%%%%%%%%%%%%%%%%
%%%%%%%%%%%%%%%%%%%%%%%%%%%%%%%%%%%%%%%%%%%%%%%%%%%%%%%%%%%%%%%%%%%%%%%%%%%%%%%%%%%%%%%%%%%%%%%%%%%%%%%%%%%%%%%%%%%%%%%%%%%%%%%%%%%%%%%%%%%%%%%%%%%%%%%%%%%%%%%%%%%%%%%%%%%%%%%%%%%%%%%%%%%%%%%%%%%%%%%%%%%%%%%%%%%%%%%%%%%%%%%%%%%%%%%%%%%%%%%%%%%%%%%%%%%%%%%%%%%%%%%%%%%%%%%%%%%%%%%%%%%%%%%%%%%%%%%%%%%%%%%%%%%%%%%%%%%%%%%%%%%%%%%%%%%%%%%%%%%%%%%%%%%%%%%%%%%%%%%%%%%%%%%%%%%%%%%%%%%%%%%%%%%%%%%%%%%%%%%%%%%%%%%%%%%%%%%%%%%%%%%%%%%%%%%%%%%%%%%%%%%%%%%%%%%%%%%%%%%%%%%%%%%%%%%%%%%%%%%%%%%%%%%%%%%%%%%%%%%%%%%%%%%%%%%%%%%%%%%%%%%%%%%%%%%%%%%%%%%%%%%%%%%%%%%%%%%%%%%%%%%%%%%%%%%%%%%%%%%%%%%%%%%%%%%%%%%%%%%%%%%%%%%%%%%%%%%%%%%%%%%%%%%%%%%%%%%%%%%%%%%%%%%%%%%%%%%%%%%%%%%%%%%%%%%%%%%%%%%%%%%%%%%%%%%%%%%%%%%%%%%%%%%%%%%%%%%%%%%%%%%%%%%%%%%%%%%%%%%%%%%%%%%%%%%%%%%%%%%%
\section{Logarithmic upper bounds}\label{sec:lognRegret}
In this section, we consider some special instances of the function class $\Gc$ and distributions $P_X$ for which we can establish that the probability assignment $q_{\mix}$ in~\eqref{eq:MixtureProbAssgn} achieves $O(d \log n)$ regret for an appropriate choice of the mixture distribution (i.e. the distribution over the class $\Fc$). 

\subsection{Finite Function Class}\label{subsec:FiniteIsLogarithimic}
When $|\Gc| < \infty$, we have already shown in Lemma~\ref{lem:finiteHypClass} in Section~\ref{subsec:FiniteHypothesisClass} that the regret is logarithmic for any distribution $P_X$.  

\subsection{Function Class of Halfspaces}\label{subsec:Halfspaces}
In this subsection, we will consider the case when $\Gc$ is the class of \emph{halfspaces}, defined precisely as follows. Let $X \in \Xc = \sphere^{d-1}$. Recall that $\sphere^{d-1} = \{x \in \Real^d : \|x\|_2 = 1\}$. Define the function $g_{a}(x) : \Xc \to \{-1,1\}$ as $g_a(x) = \sign\left(a^T x\right)$. The class of functions $\mathrm{HS}_d := \{g_a, a \in \sphere^{d-1}\}$ is known as the class of $d-$dimensional (homogenous) halfspaces, and is known to have VCdim$(\Gc) = d$~\cite{shalev2014understanding}. Consider $X_1^n \sim \Unif(\sphere^{d-1})$ i.i.d. We will now evaluate the regret of $q_{\mix}$ in~\eqref{eq:MixtureProbAssgn}. 

As in the previous section, characterizing $q_{\mix}$ requires specifying a distribution over the hypothesis class $\Fc$, which in turn requires specifying a distribution over the function class $\mathrm{HS}_d$ (recall that $\T_0$ and $\T_1$ are chosen to be $\mathrm{Beta}(1/2,1/2)$ independently of each other and of $G$). We will choose $A \sim \Unif[\sphere^{d-1}]$. We then have  
\begin{align}
    \Expt_F[p_F(y^i|x^i)] = \Expt_{A}[\Expt_{\T_0,\T_1}[p_{A,\T_0,\T_1}(y^i|x^i)]]
\end{align}
Now, using the notation
\begin{align}\label{eq:mixoverThetasOnlyHS}
   q_{a,\mix}(y^i|x^i) := \Expt_{\T_0,\T_1}[p_{a,\T_0,\T_1}(y^i|x^i)] = \int_{0}^1\int_{0}^1 p_{a,\t_0,\t_1}(y^i|x^i)w(\t_0)w(\t_1) d\t_0 d\t_1
\end{align}
for an $a \in \sphere^{d-1}$, we see that 
\begin{align}\label{eq:MixtureProbHS}
    \Expt_F[p_F(y^i|x^i)] = \Expt_{A}[q_{A,\mix}(y^i|x^i)].
\end{align}
where $A \sim \Unif[\sphere^{d-1}]$ as mentioned previously. We can make the following assertion.

\begin{proposition}\label{prop:Halfspaceslogn}
If $P_X = \Unif[\sphere^{d-1}]$, then for the mixture probability assignment $q_{\mix}$ as defined in~\eqref{eq:MixtureProbAssgn}, with choice of mixture as in~\eqref{eq:MixtureProbHS}, we have 
\begin{align}
   \max_{f} R_{n,P_X}(q_{\mix},f) \le (2d +1) \log n + d \log(48d) + \log\frac{\pi^2}{8}. 
\end{align}
\end{proposition}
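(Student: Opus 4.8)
The plan is to reduce the halfspace case to the noncausal side-information bound (Lemma~\ref{lem:noncausalsideInfo}) together with a counting argument for the number of sign patterns, plus a concentration argument showing that with $X_1^n \sim \Unif(\sphere^{d-1})$ the random mixture over $A \sim \Unif(\sphere^{d-1})$ puts enough mass on each realized sign pattern. By Proposition~\ref{prop:ChainRuleMixture}, the pointwise regret of $q_{\mix}$ equals $\log \frac{p_f(Y^n|X^n)}{\Expt_F[p_F(Y^n|X^n)]}$, so I need a lower bound on $\Expt_F[p_F(Y^n|X^n)]$. Writing $A$ for the random halfspace normal, condition on the event that $A$ realizes the same sign pattern $(g_a(X_1),\dots,g_a(X_n))$ as nature's halfspace $a$; on that event $p_{A,\T_0,\T_1}(y^n|x^n)$ has exactly the same functional form as $p_{a,\T_0,\T_1}(y^n|x^n)$, so
\[
\Expt_F[p_F(Y^n|X^n)] \ge \Prob_A\!\left(\text{same sign pattern as } a \,\middle|\, X^n\right)\cdot q_{a,\mix}(Y^n|X^n).
\]
The KT-type mixture $q_{a,\mix}$ over the two $\T$'s contributes, by the same computation as in Lemma~\ref{lem:BlkKTUpBd}, at most $\log(\tfrac n2+1)+\log\tfrac{\pi^2}{8}$ to the regret (two blocks, one per value of $g_a$; in fact the noncausal bound already packages this). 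So the crux is lower-bounding $\Prob_A(\text{same sign pattern})$ given the sample.

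The key geometric step: for $X_1^n\sim\Unif(\sphere^{d-1})$ i.i.d., and any fixed $a$, the set of $a'\in\sphere^{d-1}$ giving the same sign pattern on $X_1^n$ is a spherical cell in the hyperplane arrangement dual to $X_1,\dots,X_n$. I would argue that, with high probability over $X_1^n$, this cell has normalized measure at least $(cn)^{-d}$ (or some $n^{-\Theta(d)}$ bound) — equivalently, there is a cap of angular radius $\gtrsim 1/n$ around $a$ that avoids all the hyperplanes $\{x: a'^Tx=0\}\perp X_i$. The probability that a single $X_i$ lands within angular distance $\le \varepsilon$ of the equator of $a$ is $O(\varepsilon)$ uniformly in $d$ (this is a one-dimensional marginal bound — the first coordinate of a uniform point on $\sphere^{d-1}$ has density bounded near $0$), so by a union bound over $i\in[n]$, taking $\varepsilon\asymp 1/n^2$, with probability $\ge 1-1/n$ no $X_i$ is within $1/n^2$ of $a$'s equator, and then every $a'$ within angle $1/n^2$ of $a$ has the same sign pattern. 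That gives $\Prob_A(\cdot\mid X^n)\ge \Vol(\text{cap of radius }1/n^2)/\Vol(\sphere^{d-1}) \ge (c/n^2)^{d-1}$, contributing $\le (2d-2)\log n + d\log(48d)$-type terms to the regret, matching the stated bound up to constants; the low-probability bad event for $X^n$ is absorbed because $p_f(Y^n|X^n)\le 1$ always, so the regret contribution from that event is controlled by the noncausal bound $d\log(en/d)+\dots$ of Lemma~\ref{lem:noncausalsideInfo} scaled by $1/n$.

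I would assemble these as follows: (1) invoke Proposition~\ref{prop:ChainRuleMixture} to express the regret; (2) split the expectation over $X^n$ into the "good" event $E$ (no sample point near $a$'s equator) and its complement; (3) on $E$, use the cap bound plus the two-block KT bound to get the $(2d+1)\log n + d\log(48d) + \log\tfrac{\pi^2}{8}$ estimate; (4) on $E^c$, bound the regret crudely using $p_f\le 1$ and $\Expt_F[p_F]\ge$ (noncausal mixture lower bound), getting something like $\Prob(E^c)\cdot(\text{noncausal regret}) = o(1)$, which is dominated. The main obstacle is step (3)'s geometric claim: getting a clean, dimension-robust lower bound on the angular margin — specifically the uniform-in-$d$ bound on $\Prob(|a^TX_i|\le\varepsilon)$ and verifying that an angular-$\varepsilon$ cap around $a$ indeed gives a cap of normalized volume at least roughly $\varepsilon^{d-1}$ (rather than $\varepsilon^{d-1}$ times a $d$-dependent constant that blows up); tracking these constants is what produces the $\log(48d)$ term, and care is needed so they land inside the stated numbers.
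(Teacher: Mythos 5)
Your overall strategy is the paper's: express the regret via Proposition~\ref{prop:ChainRuleMixture}, lower-bound $\Expt_A[q_{A,\mix}(Y^n|X^n)]$ by the probability that $A$ reproduces nature's sign pattern times $q_{a^*,\mix}(Y^n|X^n)$, control that probability by the volume of a cap of radius $\delta=\min_i|a^{*T}X_i|$ around $a^*$, and finish with the two-block KT bound of Lemma~\ref{lem:BlkKTUpBd}. Two points, however, need repair.

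First, your claimed ``uniform in $d$'' bound $\Prob(|a^TX_i|\le\varepsilon)=O(\varepsilon)$ is false: the first coordinate of a uniform point on $\sphere^{d-1}$ is $2Z-1$ with $Z\sim\mathrm{Beta}(d/2,d/2)$, and its density at the origin scales like $\sqrt{d}$ (this is exactly Claim~\ref{claim:DensityOfFappox}, with $c_d=2\sqrt d$). You flag this as the obstacle, and it is real: that $\sqrt d$ factor, raised through the $d$-fold cap-volume exponent, is precisely the source of the $d\log(48d)$ term, so it cannot be waved away.

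Second, and more substantively, your step (4) does not close. On the bad event $E^c$ you propose to bound the regret ``crudely'' by the noncausal bound of Lemma~\ref{lem:noncausalsideInfo}, but that lemma analyzes a \emph{different} probability assignment (a uniform mixture over the partition induced by $X^n$), not the fixed mixture $q_{\mix}$ over $A\sim\Unif(\sphere^{d-1})$ whose regret you must control. For your $q_{\mix}$ there is no uniform pointwise regret bound on $E^c$: when the margin $\delta$ is tiny, the spherical cell containing $a^*$ can have arbitrarily small measure, so $-\log\Expt_A[q_{A,\mix}]$ is not bounded by any fixed function of $n$ and $d$ on that event, and multiplying an unbounded quantity by $\Prob(E^c)\le 1/n$ proves nothing. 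The paper avoids the single-threshold split entirely: it bounds $\Expt[-\log\delta]$ directly, using the exact Beta law of $|a^{*T}X_i|$ to show $W=-\ln|a^{*T}X_i|$ has sub-exponential tails with density at most $c_d e^{-w}$, whence $\Expt[\max_i W_i]\le 2\ln(2c_dn)$. If you want to keep your high-probability framing, you would need to integrate over all scales of $\delta$ (e.g.\ $\Expt[-\log\delta]=\int_0^\infty\Prob(-\log\delta>t)\,dt$ with a union bound at every level $t$), which amounts to the same computation.
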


%%%%%%%%%%%%%%%%%%%%%%%%%%%%%%%%%%%%%%%%%%%%%%%%%%%%%%%%%%%%%%%%%%%%%%%%%%%%%%%%%%%%%%%%%%%%%%%%%%
%%%%%%%%%%%%%%%%%%%%%%%%%%%%%%%%%%%%%%%%%%%%%%%%%%%%%%%%%%%%%%%%%%%%%%%%%%%%%%%%%%%%%%%%%%%%%%%%%%%
\subsection{Hypothesis Class of Axis-Aligned Rectangles}\label{subsec:AxisAlignedRects}
In this subsection, we will consider the case when $\Gc$ is the class of axis-aligned rectangles, defined precisely as follows. For\footnote{In this subsection, for clarity we will use boldface to denote vectors.} ${\bf a} := \{a_i\}_{i=1}^d$ and ${\bf b} := \{b_i\}_{i=1}^d$ that are such that $0 \le a_i \le b_i \le 1, i \in [d]$ define the function $g_{{\bf a},{\bf b}} : \Real^d \to \{0,1\}$ as $g_{{\bf a},{\bf b}}(\xv) = \prod_{i=1}^d \indic\{a_i \le x_i \le b_i\}$. Then the hypothesis class $\mathrm{RECT}_d := \{g_{{\bf a},{\bf b}}, {\bf a},{\bf b} \in [0,1]^d , a_i \le b_i\}$ is known as the class of axis aligned rectangles. It is well-known that VCdim(RECT$_d) = 2d$~\cite{shalev2014understanding}. Consider $\Xv_{1}^n \sim \Unif[0,1]^d$ iid. We will then evaluate the regret of the probability assignment $q_{\mix}$ in~\eqref{eq:MixtureProbAssgn}. 

As before, characterizing $q_{\mix}$ requires specifying a distribution over the hypothesis class $\Fc$, which in turn requires specifying a distribution over the function class $\mathrm{RECT}_d$ (recall that $\T_0$ and $\T_1$ are chosen to be $\mathrm{Beta}(1/2,1/2)$ independently of each other and of $G$). We will chose $(A_i,B_i) \sim \Unif\{(a,b) \in [0,1]\times[0,1], b \ge a\}$, and $(A_i,B_i) \perp \!\!\! \perp  (A_j,B_j)$ for $i \neq j$. Denoting $\Av := (A_1,\dotsc,A_d)$ and $\Bv := (B_1,\dotsc,B_d)$, for the aforementioned choice of distribution over $\Fc$, we have 
\begin{align}
    \Expt_F[p_F(y^i|\xv^i)] = \Expt_{\Av,\Bv}[\Expt_{\T_0,\T_1}[p_{\Av,\Bv,\T_0,\T_1}(y^i|\xv^i)]]
\end{align}
Now, using the notation
\begin{align}\label{eq:mixoverThetasOnly}
   q_{\av,\bv,\mix}(y^i|\xv^i) := \Expt_{\T_0,\T_1}[p_{\av,\bv,\T_0,\T_1}(y^i|\xv^i)] = \int_{0}^1\int_{0}^1 p_{\av,\bv,\t_0,\t_1}(y^i|\xv^i)w(\t_0)w(\t_1) d\t_0 d\t_1
\end{align}
we see that 
\begin{align}\label{eq:MixtureProbRects}
    \Expt_F[p_F(y^i|\xv^i)] = \Expt_{\Av,\Bv}[q_{\Av,\Bv,\mix}(y^i|\xv^i)].
\end{align}
We can then make the following assertion.

\begin{proposition}\label{prop:AxisAlignedRectslogn}
If $P_X = \Unif[0,1]^d$, then for  the probability assignment $q_{\mix}$ as defined in~\eqref{eq:MixtureProbAssgn}, with choice of mixture as in~\eqref{eq:MixtureProbRects}, we have 
\begin{align}
    \max_{f \in \Fc} R_{n,P_X}(q_{\mix},f) \le (2d+1)\log (n+1) + \log\frac{\pi^2}{8}. 
\end{align}
\end{proposition}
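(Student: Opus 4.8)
The plan is to combine the chain-rule identity of Proposition~\ref{prop:ChainRuleMixture} with a lower bound on the mixture that isolates two independent sources of redundancy: the cost of pinning down nature's rectangle on the realized side information, and the cost of learning $\t_0,\t_1$ once the rectangle's labelling of $\bX^n$ is fixed.

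First I would apply Proposition~\ref{prop:ChainRuleMixture} to nature's hypothesis $f=(g_{\av^*,\bv^*},\t_0^*,\t_1^*)$, giving
\[
R_{n,P_X}(q_{\mix},f)=\Expt\!\left[\log\frac{p_f(Y^n|\bX^n)}{\Expt_F[p_F(Y^n|\bX^n)]}\right],
\]
and then bound the denominator from below. Since $\Expt_F[p_F(Y^n|\bX^n)]=\Expt_{\Av,\Bv}\big[q_{\Av,\Bv,\mix}(Y^n|\bX^n)\big]$ by~\eqref{eq:MixtureProbRects}, I would restrict this expectation to the ($\bX^n$-measurable) event $S$ that the rectangle $[\Av,\Bv]$ contains exactly the same subset of $\{\bX_1,\dots,\bX_n\}$ as $[\av^*,\bv^*]$ does. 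On $S$ we have $g_{\Av,\Bv}(\bX_i)=g_{\av^*,\bv^*}(\bX_i)$ for every $i$, hence $q_{\Av,\Bv,\mix}(Y^n|\bX^n)$ coincides with $q_{g^*,\mix}(Y^n|\bX^n):=\Expt_{\T_0,\T_1}[p_{g_{\av^*,\bv^*},\T_0,\T_1}(Y^n|\bX^n)]$, the ``two-block'' KT mixture for the known function $g_{\av^*,\bv^*}$. Therefore $\Expt_F[p_F(Y^n|\bX^n)]\ge\Prob(S\mid\bX^n)\,q_{g^*,\mix}(Y^n|\bX^n)$, and splitting the logarithm,
\[
R_{n,P_X}(q_{\mix},f)\le\Expt_{\bX^n}\!\big[-\log\Prob(S\mid\bX^n)\big]+\Expt\!\left[\log\frac{p_f(Y^n|\bX^n)}{q_{g^*,\mix}(Y^n|\bX^n)}\right].
\]
The second term is exactly the regret of the two-block KT assignment in the known-$g^*$ problem, so by (the proof of) Lemma~\ref{lem:BlkKTUpBd} it is at most $\log(\tfrac n2+1)+\log\tfrac{\pi^2}{8}\le\log(n+1)+\log\tfrac{\pi^2}{8}$, uniformly over $f$. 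It then remains to show $\Expt_{\bX^n}[-\log\Prob(S\mid\bX^n)]\le 2d\log(n+1)$.

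For this I would exploit the product structure of rectangles together with the independence of the $d$ coordinates of $\bX$. Let $u^{(l)}_1<\dots<u^{(l)}_n$ be the order statistics of the $l$-th coordinates of $\bX_1,\dots,\bX_n$, with $u^{(l)}_0:=0$, $u^{(l)}_{n+1}:=1$. To reproduce nature's labelling it suffices that in every coordinate $l$ the endpoint $A_l$ lie in a suitable interval around $a^*_l$ delimited by some $u^{(l)}_j$'s, and $B_l$ in a suitable interval around $b^*_l$; the $A_l$-interval always lies weakly to the left of the $B_l$-interval, so the triangle constraint $A_l\le B_l$ is free and $\Prob(S\mid\bX^n)=2^d\prod_{l=1}^d L^a_lL^b_l$, where $L^a_l,L^b_l$ are the lengths of those intervals (the factor $2$ per coordinate is the reciprocal area of the triangle). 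Because an excluded point needs to be ``witnessed'' by only one coordinate, one has freedom in assigning witnesses; assigning them so as to keep each $L^a_l$ (resp.\ $L^b_l$) a union of at least two consecutive uniform spacings, each such length is stochastically at least a $\mathrm{Beta}(2,n-1)$ variable, for which $\Expt[-\log(\cdot)]=(H_n-1)/\ln 2\le\log(n+1)$ since $H_n\le 1+\ln(n+1)$. Summing over the $2d$ factors and the $2^d$, $\Expt_{\bX^n}[-\log\Prob(S\mid\bX^n)]\le -d+2d\log(n+1)\le 2d\log(n+1)$, which with the previous display gives the claimed bound in the non-degenerate regime.

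The hard part is making the last paragraph rigorous, in particular the witness assignment and the degenerate shapes of nature's rectangle. When $[\av^*,\bv^*]$ excludes no point of $\bX^n$ (e.g.\ the entire cube), or excludes points only ``barely'' and in a single coordinate, the admissible interval $L^a_l$ can shrink to a single spacing, whose $\Expt[-\log(\cdot)]=H_n/\ln 2$ slightly exceeds $\log(n+1)$; in such cases the clean factorization must be supplemented by a complementary estimate, e.g.\ the fact that any two-block split of an i.i.d.\ Bernoulli source of length $n$ incurs KT redundancy at most $\log(\tfrac n2+1)+O(1)$, which shows that even rectangles producing the ``wrong'' partition supply mixture mass no smaller than a polynomial-in-$n$ fraction of $p_f(Y^n|\bX^n)$. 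Quantifying this case split (or a single argument subsuming it) so the constants land on $(2d+1)\log(n+1)+\log\tfrac{\pi^2}{8}$ is where the real work lies; everything else is order-statistics bookkeeping plus the already-established single-block KT bound.
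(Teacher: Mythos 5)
Your skeleton is the same as the paper's: apply Proposition~\ref{prop:ChainRuleMixture}, lower-bound the mixture mass by restricting to rectangles that reproduce nature's labelling of $\Xv^n$ (so that $q_{\Av,\Bv,\mix}$ collapses to the known-$g^*$ block-KT mixture and Lemma~\ref{lem:BlkKTUpBd} handles that term), and then control an order-statistics quantity. But the step you yourself flag as ``where the real work lies'' is a genuine gap, not bookkeeping. The identity $\Prob(S\mid\Xv^n)=2^d\prod_l L^a_lL^b_l$ does not hold: the same-labelling event $S$ does not factorize over coordinates (precisely because an excluded point need only be witnessed in one coordinate, the admissible set of $(\Av,\Bv)$ is a union of products rather than a product), and the witness assignment that is supposed to make every interval a union of at least two consecutive spacings is neither constructed nor shown to exist; as you note, it cannot exist in the degenerate configurations, and the complementary estimate you propose there is only sketched. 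So the key inequality $\Expt[-\log\Prob(S\mid\Xv^n)]\le 2d\log(n+1)$ is unproven as written.

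The fix is to do less, which is what the paper does: lower-bound $\Prob(S\mid\Xv^n)$ by the probability of the smaller product event in which, for every coordinate $l$, $A_l$ falls between the same two consecutive order statistics of $\Xv_{1,l},\dotsc,\Xv_{n,l}$ as $a_l^*$, and likewise $B_l$ with $b_l^*$. On this event the labelling agrees coordinate by coordinate (the paper's Claim~\ref{claim:ContinuityOrderStats}), the probability is an exact product over the $2d$ endpoints of single uniform spacings (with an extra factor $\tfrac12$ in a coordinate where the two gaps coincide, to respect $A_l\le B_l$), and each spacing is $\mathrm{Beta}(1,n)$. No witness assignment or case split on the shape of nature's rectangle is needed. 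Your observation that $\Expt[-\log Z]=H_n/\ln 2$ strictly exceeds $\log(n+1)$ for $Z\sim\mathrm{Beta}(1,n)$ is correct, and is evidently what drove you toward the $\mathrm{Beta}(2,n-1)$ refinement; it does mean that landing exactly on $(2d+1)\log(n+1)+\log\frac{\pi^2}{8}$ requires either completing your refinement or tolerating an extra additive $O(d)$ (a looseness also present in the paper's own constant accounting), but this affects only lower-order terms and does not justify leaving the main probabilistic step open.
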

    
\begin{remark}
In Sections~\ref{subsec:Halfspaces} and~\ref{subsec:AxisAlignedRects}, we have fixed $P_X$ to be the uniform distribution. Considering the proofs, it appears to be a reasonable guess that the mixture probability assignment $q_{\mix}$ employed to prove the regret guarantees would work for other distributions $P_X$ that are sufficiently ``smooth". Thus, finding non-uniform $P_X$ for which the given $q_{\mix}$ achieves logarithmic regret is an intriguing question.
\end{remark}

%%%%%%%%%%%%%%%%%%%%%%%%%%%%%%%%%%%%%%%%%%%%%%%%%%%%%%%%%%%%%%%%%%%%%%%%%%%%%%%%%%%%%%%%%%%%%%%%%%%
%%%%%%%%%%%%%%%%%%%%%%%%%%%%%%%%%%%%%%%%%%%%%%%%%%%%%%%%%%%%%%%%%%%%%%%%%%%%%%%%%%%%%%%%%%%%%%%%%%%%%%%%%%%%%%%%%%%%%%%%%%%%%%%%%%%%%%%%%%%%%%%%%%%%%%%%%%%%%%%%%%%%%%%%%%%%%%%%%%%%%%%%%%%%%%%%%%%%%%%%%%%%%%%%%%%%%%%%%%%%%%%%%%%%%%%%%%%%%%%%%%%%%%%%%%%%%%%%%%%%%%%%%%%%%%%%%%%%%%%%%%%%%%%%%%%%%%%%%%%%%%%%%%%%%%%%%%%%%%%%%%%%%%%%%%%%%%%%%%%%%%%%%%%%%%%%%%%%%%%%%%%%%%%%%%%%%%%%%%%%%%%%%%%%%%%%%%%%%%%
\section{Proof of Theorem~\ref{thm:genupperbd} }\label{sec:genupperbdpf}
In this section, we prove Theorem~\ref{thm:genupperbd}. 
To motivate the main proof idea, recall the case discussed in Section~\ref{subsec:noncausalsideinfo} when noncausal side information is available. In that case, using the Sauer--Shelah lemma we argued that given $X^n$, the (possibly infinite) class of functions $\Gc$ could be effectively reduced to a class of at most $\left(\frac{en}{d}\right)^d$ functions, and using the mixture probability assignment that took a uniform mixture over these functions yielded an $O(d \log n)$ regret. This leads to us considering the following alternative to noncausal side information being available: what if \emph{another} sequence $\Xt^n \stackrel{(d)}{=} X^n$ is available noncausally? The sequence $\Xt^n$ also reduces the class $\Gc$ to at most $\left(\frac{en}{d}\right)^d$ functions (albeit not the same reduction as that of $\Gc$ by $X^n$). We establish in Section~\ref{subsec:auxseq} that a uniform mixture over the finite reduction of $\Gc$ induced by $\Xt^n$ achieves an $O(\sqrt{dn}\log n)$ regret. We then use this result in Section~\ref{subsec:epochmix} to establish a general $O(\sqrt{nd}\log n)$ regret when the side information $X^n$ is available sequentially. 

For clarity, throughout this section we will use $g(Z^n)$ to denote $(g(Z_1),\dotsc,g(Z_n)) \in \{0,1\}^n$. 

\subsection{Sequence $\Xt^n$ available noncausally}\label{subsec:auxseq}

Consider a sequence $\Xt^n \stackrel{(d)}{=} X^n, \Xt^n \dperp X^n$, with $X_i$ having distribution $P_X$ iid. In this subsection we consider the regret 
\begin{align}\label{eq:regretauxnoncausal}
\widetilde{R}_{n,P_X}(q,f) := \Expt_{X^n,Y^n,\Xt^n}\left[\sum_{i=1}^n \log\frac{1}{q(Y_i|X^i,Y^{i-1},\Xt^n)} - \sum_{i=1}^n\log\frac{1}{p_f(Y_i|X_i)}\right]
\end{align}
and in particular the worst-case regret attained by a probability assignment $q$ 
\begin{align}\label{eq:noncausalauxregminmax}
   \widetilde{R}_n(q) :=  \max_{f \in \Fc, P_X} \widetilde{R}_{n,P_X}(q,f).
\end{align}
Now, using the same notation as in Section~\ref{subsec:noncausalsideinfo}, let $\Pc_n(\Xt^n) = \{g(\Xt^n), g \in \Gc\} \subseteq \{0,1\}^n$ with $|\Pc_n(\Xt^n)| \le \left(\frac{en}{d}\right)^d$ by the Sauer--Shelah lemma. 
Pick $\gt_1,\gt_2,\dotsc,\gt_{|\Pc_n(\Xt^n)|} \in \Gc$ such that $\gt_j(\Xt^n) \in \Pc_n(\Xt^n)$, and $\gt_i(\Xt^n) \neq \gt_j(\Xt^n)$ if $i \neq j$.
Thus, for every $g \in \Gc$, there exists a $j \in \left[|\Pc_n(\Xt^n)|\right]$ such that $g(\Xt^n) = \gt_j(\Xt^n)$. 
Therefore, the class $\Gc$ has been effectively reduced to $|\Pc_n(\Xt^n)|$ functions by $\Xt^n$.
Consider now a mixture probability assignment, akin to~\eqref{eq:MixtureProbAssgn}, as 
\begin{align}\label{eq:mixtureauxprobassgn}
    \qt_{\mix}(y_i|x^i,y^{i-1},\xt^n) := \frac{\frac{1}{|\Pc(\xt^n)|}\sum_{j=1}^{|\Pc(\xt^n)|} \int_{0}^1 \int_{0}^1 p_{\gt_j,\t_0,\t_1}(y^i|x^i) d\t_0 d\t_1}{\frac{1}{|\Pc(\xt^n)|}\sum_{j=1}^{|\Pc(\xt^n)|} \int_{0}^1 \int_{0}^1 p_{\gt_j,\t_0,\t_1}(y^{i-1}|x^{i-1}) d\t_0 d\t_1}.
\end{align}
Note that this is indeed a mixture probability assignment in the sense of~\eqref{eq:MixtureProbAssgn}---$F = (G,\T_0,\T_1)$ has the distribution where $G \sim \Unif\{\gt_1,\dotsc,\gt_{|\Pc_n(\Xt^n)|}\}$, $\T_0, \T_1 \sim \Unif[0,1]$\footnote{The choice of taking a uniform prior for $\T_0$ and $\T_1$ instead of the Jeffreys prior is simply because using the uniform prior (which, recalling Remark~\ref{rem:LaplaceProbAssgn}, corresponds to the Laplace probability assignment) makes some calculations far simpler in the proof at just the cost of a worse constant factor in the regret.} and $G,\T_0,$ and $\T_1$ are mutually independent. We can now state the following.
\begin{lemma}\label{lem:regauxnoncausal}
For $\qt_{\mix}$ defined in~\eqref{eq:mixtureauxprobassgn}, we have for an absolute constant $C \le 250$,
\begin{align}
    \widetilde{R}_n(\qt_{\mix}) \le d\log (en/d) + 16C\sqrt{nd}\log(6n+2)
\end{align}
and moreover, for any $P_X$ and $h$ we have 
\begin{align}
   \sum_{i=1}^n \log\frac{1}{\qt_{\mix}(Y_i|X^i,Y^{i-1},\Xt^n)} - &\sum_{i=1}^n\log\frac{1}{p_f(Y_i|X_i)} 
    \nonumber\\
    &\le d\log (en/d) + 16\sqrt{n}\log(6n+2)\left(C\sqrt{d} + \sqrt{2 \log \frac{2}{\d}}\right).
\end{align}
\end{lemma}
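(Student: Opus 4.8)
The plan is to peel off the randomness of $\Xt^n$ via the chain-rule identity of Proposition~\ref{prop:ChainRuleMixture}, replace the full mixture in the denominator by a single well-chosen term using the Sauer--Shelah lemma, and then pay for using the ``wrong'' representative of $\Gc$ through a Hamming-distance penalty that I control by symmetrization. \textbf{Step 1: reduction to a single representative.} Fix $P_X$ and a hypothesis $f=(g^\ast,\t_0^\ast,\t_1^\ast)$, and condition on $\Xt^n=\xt^n$. Then $\qt_{\mix}(\cdot\mid\cdot,\xt^n)$ is a mixture assignment of the form~\eqref{eq:MixtureProbAssgn} with $G\sim\Unif\{\gt_1,\dots,\gt_{|\Pc_n(\xt^n)|}\}$ and $\T_0,\T_1\sim\Unif[0,1]$ independent, so Proposition~\ref{prop:ChainRuleMixture} gives, for every realization of $(X^n,Y^n,\Xt^n)$,
\[
\sum_{i=1}^n\log\frac1{\qt_{\mix}(Y_i\mid X^i,Y^{i-1},\Xt^n)}-\sum_{i=1}^n\log\frac1{p_f(Y_i\mid X_i)}=\log\frac{p_f(Y^n\mid X^n)}{\Expt_{F\mid\Xt^n}[p_F(Y^n\mid X^n)]}.
\]
Since $g^\ast(\Xt^n)\in\Pc_n(\Xt^n)$, one representative, say $\gt_{j^\star}$, satisfies $\gt_{j^\star}(\Xt^n)=g^\ast(\Xt^n)$; keeping only this term in the uniform mixture and applying Sauer--Shelah ($|\Pc_n(\Xt^n)|\le(en/d)^d$) yields
\[
\log\frac{p_f(Y^n\mid X^n)}{\Expt_{F\mid\Xt^n}[p_F(Y^n\mid X^n)]}\ \le\ d\log(en/d)+\log\frac{p_f(Y^n\mid X^n)}{q_{\gt_{j^\star},\mathrm L}(Y^n\mid X^n)},
\]
where $q_{g,\mathrm L}(y^n\mid x^n):=\int_0^1\!\!\int_0^1 p_{g,\t_0,\t_1}(y^n\mid x^n)\,d\t_0\,d\t_1$.

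\textbf{Step 2: cost of the wrong representative.} Put $K:=d_{\mathrm H}(g^\ast(X^n),\gt_{j^\star}(X^n))$. By the add-$1$ formula of Remark~\ref{rem:LaplaceProbAssgn} applied block by block, $q_{g,\mathrm L}(y^n\mid x^n)=\prod_{b\in\{0,1\}}\big((n_b+1)\binom{n_b}{k_b}\big)^{-1}$, where $n_b=|\{i:g(x_i)=b\}|$ and $k_b$ is the number of $1$'s in that block. The estimate $\binom{n_b}{k_b}\le n_b^{n_b}/(k_b^{k_b}(n_b-k_b)^{n_b-k_b})$ gives the usual per-block Laplace-redundancy bound $\log\big(p_f(Y^n\mid X^n)/q_{g^\ast,\mathrm L}(Y^n\mid X^n)\big)\le 2\log(n+1)$. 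For the mismatch I transform $g^\ast$'s partition into $\gt_{j^\star}$'s by relocating one coordinate between the two blocks at a time ($K$ steps); a direct computation shows each step multiplies $(n_0+1)\binom{n_0}{k_0}(n_1+1)\binom{n_1}{k_1}$ by at most $n+1$, so $\log\big(q_{g^\ast,\mathrm L}(Y^n\mid X^n)/q_{\gt_{j^\star},\mathrm L}(Y^n\mid X^n)\big)\le(K+2)\log(n+1)$. Hence, for every realization of the data,
\[
\log\frac{p_f(Y^n\mid X^n)}{\Expt_{F\mid\Xt^n}[p_F(Y^n\mid X^n)]}\ \le\ d\log(en/d)+(K+4)\log(n+1).
\]

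\textbf{Step 3: bounding $K$ by symmetrization.} This is the heart of the argument. Because $\gt_{j^\star}$ agrees with $g^\ast$ on every coordinate of $\Xt^n$, $\sum_{i=1}^n\indic\{\gt_{j^\star}(\Xt_i)\ne g^\ast(\Xt_i)\}=0$, so
\[
K\ \le\ \sup_{g\in\Gc}\Big(\sum_{i=1}^n\indic\{g(X_i)\ne g^\ast(X_i)\}-\sum_{i=1}^n\indic\{g(\Xt_i)\ne g^\ast(\Xt_i)\}\Big).
\]
The class of sets $\{\{x:g(x)\ne g^\ast(x)\}:g\in\Gc\}$ shatters a point set exactly when $\Gc$ does, so it has VC dimension $\mathrm{VCdim}(\Gc)=d$, uniformly in $g^\ast$. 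Introducing Rademacher signs over the exchangeable pairs $(X_i,\Xt_i)$, the expectation of the right-hand side is at most $2n$ times the (distribution-free) Rademacher complexity of this class, which for a VC class of dimension $d$ is at most $C'\sqrt{d/n}$ for an absolute constant $C'\le125$ (Dudley's entropy integral with Haussler's covering-number bound); thus $\Expt[K]\le 2C'\sqrt{dn}$, and plugging this into Step 2 and using $\log(n+1)\le\log(6n+2)$ gives the in-expectation bound (uniform in $P_X,f$) with $C=2C'\le250$ after absorbing the lower-order additive terms. For the high-probability version, the supremum above, viewed as a function of the $2n$ i.i.d.\ samples $X_1,\Xt_1,\dots,X_n,\Xt_n$, has bounded differences $1$ in each coordinate, so McDiarmid gives $K\le 2C'\sqrt{dn}+\sqrt{2n\log(2/\d)}$ with probability at least $1-\d$ over $(X^n,\Xt^n)$; combining this event with the deterministic bound of Step 2 (and writing $2C'\sqrt{dn}+\sqrt{2n\log(2/\d)}=\sqrt{n}(C\sqrt d+\sqrt{2\log(2/\d)})$) yields the claimed inequality.

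\textbf{Main obstacle.} Steps 1 and 2 are routine: the chain rule, Sauer--Shelah, the explicit Laplace probabilities, and elementary binomial manipulations. The substantive step is Step 3 --- converting ``$\gt_{j^\star}$ matches $g^\ast$ on the independent sample $\Xt^n$'' into a $\sqrt{dn}$ bound on $d_{\mathrm H}(g^\ast(X^n),\gt_{j^\star}(X^n))$ with an absolute constant and, crucially, \emph{without} an extra $\sqrt{\log n}$ factor, which a naive union bound over $\Pc_n(\Xt^n)$ in place of symmetrization would incur. The care there is in identifying the correct symmetric-difference class and checking its VC dimension, invoking the VC-to-Rademacher bound with a universal constant, and verifying the bounded-differences property for the McDiarmid step.
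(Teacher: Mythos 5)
Your proposal is correct and follows the same overall route as the paper's proof: chain rule plus Sauer--Shelah to reduce to a single representative $\gt_{j^\star}$ matching $g^*$ on $\Xt^n$, a mismatch cost of order $K\log n$ in the Hamming distance $K$, a two-sample/VC uniform-convergence argument giving $\Expt[K]=O(\sqrt{dn})$, and bounded differences for the high-probability version. You differ in two local sub-arguments, both valid and arguably cleaner. First, for the mismatch cost the paper bounds ratios of factorials via a Stirling-type continuity estimate (its Propositions~\ref{prop:HammDistNK} and~\ref{prop:StirlingContinuity}, yielding $8K\log(6n+2)+6$), whereas your one-point-at-a-time relocation of the $K$ disagreement coordinates gives roughly $(K+2)\log(n+2)$ directly from the add-one formula; the per-step factor is $n+2$ rather than $n+1$ as you wrote, but this is immaterial. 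Second, for bounding $K$ the paper passes through the population discrepancy $\D(g_1,g_2)$ by a triangle inequality and uses the pair-indexed symmetric-difference class of VC dimension at most $2d$, while you fix $g^*$, note the XOR class $\{x:g(x)\neq g^*(x)\}$ has VC dimension exactly $d$, and symmetrize the two empirical measures directly; your route saves a factor in $d$ and both are standard. Since your constants are strictly smaller than those in the lemma statement, the claimed bounds follow a fortiori.
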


\begin{remark}[Empirical covering]
The probability assignment $\qt_{\mix}$ can also be motivated by considering the scenario in Section~\ref{subsec:PxKnown} where $P_X$ is known. Recall that there, we took a uniform mixture over a $1/n$-covering of $\Gc$ in the metric $d$ with $d(g_1,g_2) = \Prob(g_1(X) \neq g_2(X))$. If we have $\Xt^n$, as an alternative to a mixture over a covering in the metric $d$, we can take an empirical $1/n$ covering of $\Gc$, i.e. a covering in the metric $\Dt_n(g_1,g_2) = \frac{1}{n}d_H(g_1(\Xt^n),g_2(\Xt^n))$. Indeed, the functions $\gt_1,\gt_2,\dotsc,\gt_{|\Pc_n(\Xt^n)|}$ form not just a $1/n$ covering but a 0-covering  of $\Gc$ in the metric $\Dt_n$.
\end{remark}
%%%%%%%%%%%%%%%%%%%%%%%%%%%%%%%%%%%%%%%%%%%%%%%%%%%%%%%%%%%%%%%%%%%%%%%%%%%%%%%%%%%%%%%%%%%%%%%%%%%%%%%%%%%%%%%%%%%%%%%%%%%%%%%%%%%%%%%%%%%%%%%%%%%%%%%%%%%%%%%%%%%%%%%%%%%%%%%%%%%%%%%%%%%%%%%

\subsection{Epoch-based mixture probability} \label{subsec:epochmix}
In this subsection, we use Lemma~\ref{lem:regauxnoncausal} to construct a general probability assignment when side information is available sequentially. In this scenario, we do not have access to another sequence $\Xt^n$. However, at time step $i+1$, we have access to the past sequence $X^{i}$ which could be used, as done in~\cite{Lazaric--Munos12}, in lieu of  $\Xt^i$. We now precisely define and analyze this probability assignment. 

For simplicity, assume that $n = 2^k$ for some integer $k$. The analysis is easily extended to any arbitrary $n$. We will split the $n$ time steps into $\log n$ ``epochs". Starting from $j = 1$, define the $j-$the epoch to consist of the time steps $2^{j-1} + 1 \le i \le 2^j$. So, the first epoch consists of $X_2$, the second epoch consists of $X_3^4$, the third epoch consists of $X_5^8$ and so on. Consider the the following probability assignment $q^*$. 
\begin{enumerate}
    \item $q^*(Y_1|X_1) = 1/2$
    \item For $i \ge 2$, if $2^{j-1} + 1 \le i \le 2^j$, i.e. if the time step $i$ falls within the $j-$th epoch, then
    \begin{align}
        q^*(Y_i|X^i,Y^{i-1}) = \frac{q_{\mix,j}(Y_{2^{j-1}+1}^i|X_{2^{j-1}+1}^i)}{q_{\mix,j}(Y_{2^{j-1}+1}^{i-1}|X_{2^{j-1}+1}^{i-1})}
    \end{align}
    where 
    \begin{align}
        q_{\mix,j}((Y_{2^{j-1}+1}^i&|X_{2^{j-1}+1}^i))  \nonumber\\
        &:= \frac{1}{|\Pc(X^{2^{j-1}})|}\sum_{k=1}^{|\Pc(X^{2^{j-1}})|} \int_0^1 \int_0^1 p_{\t_0,\t_1,g_k}(Y_{2^{j-1}+1}^i|X_{2^{j-1}+1}^i) d\t_0 d\t_1
    \end{align}
    is the finite mixture over the $|\Pc(X^{2^{j-1}})|$ partition of $\Gc$ induced by $X^{2^{j-1}}$. This is the same probability assignment as in~\eqref{eq:mixtureauxprobassgn}.
\end{enumerate}
Using Lemma~\ref{lem:regauxnoncausal} and an epoch-wise analysis of $q^*$ we can establish Theorem~\ref{thm:genupperbd}.
%%%%%%%%%%%%%%%%%%%%%%%%%%%%%%%%%%%%%%%%%%%%%%%%%%%%%%%%%%%%%%%%%%%%%%%%%%%%%%%%%%%%%%%%%%%%%%%%%%%%%%%%%%%%%%%%%%%%%%%%%%%%%%%%%%%%%%%%%%%%%%%%%%%%%%%%%%%%%%%%%%%%%%%%%%%%%%%%%%%%%%%%%%%%%%%%%%%%%%%%%%%%%%%%%%%%%%%%%%%%%%%%%%%%%%%%%%%%%%%%%%%%%%%%%%%%%%%%%%%%%%%%%%%%%%%%%%%%%%%%%%%%%%%%%%%%%%%%%%%%%%%%%%%%%%%%%%%%%%%%%%%%%%%%%%%%%%%%%%%%%%%%%%%%%%%%%%%%%%%%%%%%%%%%%%%%%%%%%%%%%%%%%%%%%%%%%%%%%%%%%%%%%%%%%%%%%%%%%%%%%%%%%%%%%%%%%%%%%%%%%%%%%%%%%%%%%%%%%%%%%%%%%%%%%%%%%%%%%%%%%%%%%%%%%%%%%%%%%%%%%%%%%%%%%%%%%%%%%%%%%%%%%%%%%%%%%%%%%%%%%%%%%%%%%%%%%%%%%%%%%%%%%%%%%%%%%%%%%%%%%%%%%%%%%%%%%%%%%%%%%%%%%%%%%%%%%%%%%%
\section{Proof of Theorem~\ref{thm:genlowerbd}}\label{sec:lowerbd}

In this section we prove Theorem~\ref{thm:genlowerbd}. A key component of the proof is the redundancy-capacity theorem~\cite{merhav1995strong}.  

First, note that the class of probability assignments that utilize the side information $X^n$ causally is a subset of the set of probability assignments that utilize the side information $X^n$ noncausally. This implies
\begin{align}
    R_n &= \min_q \max_{P_X, f} \Expt_{X^n,Y^n}\left[\sum_{i=1}^n \log\frac{1}{q(Y_i|X^i,Y^{i-1})} - \sum_{i=1}^n\log\frac{1}{p_f(Y_i|X_i)}\right] \nonumber \\
    &\ge \min_q \max_{P_X, f} \Expt_{X^n,Y^n}\left[\sum_{i=1}^n \log\frac{1}{q(Y_i|X^n,Y^{i-1})} - \sum_{i=1}^n\log\frac{1}{p_f(Y_i|X_i)}\right] \label{eq:causalsubsetnoncausal}
 \end{align}
 and therefore 
 \begin{align}
    R_n &\ge \min_q \max_{P_X, f} \Expt_{X^n,Y^n}\left[\log\frac{p_f(Y^n|X^n)}{q(Y^n|X^n)}\right] \nonumber\\
    &= \min_q \max_{P_X, P_F} \Expt_{F,X^n,Y^n}\left[\log\frac{p_F(Y^n|X^n)}{q(Y^n|X^n)}\right] \label{eq:innermaxmixture} \\
    &\ge \max_{P_X,P_F} \min_q \Expt_{F,X^n,Y^n}\left[\log\frac{p_F(Y^n|X^n)}{q(Y^n|X^n)}\right] \label{eq:minimaxIneq} 
\end{align}
where $P_F$ denotes a distribution over $\Fc$ in~\eqref{eq:innermaxmixture}, and~\eqref{eq:minimaxIneq} follows since $\min \max (\cdot) \ge \max \min (\cdot)$. By a conditional variant of the redundancy-capacity theorem we have for a fixed $P_X$ and $P_F$ (recall that $F = (\T_0, \T_1, G)$) 
\begin{align}\label{eq:CondRedCapThmOne}
\min_q \Expt_{F,X^n,Y^n}\left[\log\frac{p_F(Y^n|X^n)}{q(Y^n|X^n)}\right] = I(F;Y^n|X^n)
\end{align}
and so
\begin{align}\label{eq:ChoosePxPh}
    R_n &\ge \max_{P_X,P_F} I(F;Y^n|X^n)
\end{align}
where recall $F = (\T_0, \T_1, G)$.
\begin{remark}
The result in~\eqref{eq:ChoosePxPh} holds for any class of conditional distributions $\Fc$, not just the VC class under consideration.
\end{remark}
We will first provide a lower bound on $R_n$ when $|\Xc| < \infty$ which we will then use to achieve a general lower bound on $R_n$.  
\begin{lemma}\label{lem:MinimaxLBRegretFinite}
If $|\Xc| = m < \infty$ and $\Gc = 2^{[m]}$ so that $|\Gc| = 2^m$, we have 
\begin{align}
    R_{n} %&\ge \max_{P_X,P_H} I(\T_0,\T_1,G;Y^n|X^n) \nonumber\\
    &\ge  m + \log(n+1) - \log(\pi e) - 2\sqrt{e}m^2e^{-3n/100m}.
\end{align}
\end{lemma}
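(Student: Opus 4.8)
The plan is to exploit the lower bound $R_n \ge \max_{P_X, P_F} I(F; Y^n \mid X^n)$ from~\eqref{eq:ChoosePxPh} by choosing $P_X$ and $P_F$ cleverly. First I would take $X \sim \Unif(\Xc) = \Unif[m]$, so that in expectation each of the $m$ symbols appears $n/m$ times in the sequence $X^n$. Since $\Gc = 2^{[m]}$, specifying $g \in \Gc$ is equivalent to specifying $(g(1), \dots, g(m)) \in \{0,1\}^m$; I would take $G \sim \Unif(\{0,1\}^m)$ and, independently, $\T_0, \T_1 \sim \mathrm{Beta}(1/2,1/2)$. Conditioned on $X^n$, the coordinates of $Y^n$ split into (at most) $m$ independent sub-blocks indexed by the distinct symbols appearing in $X^n$: within the block of indices $i$ with $X_i = x$, the $Y_i$ are i.i.d.\ $\Ber(\t_{g(x)})$. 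The key structural observation is that, conditioned on $X^n$, the pair $(G(x), \t_{G(x)})$ governing block $x$ is itself distributed as: $G(x) \sim \Ber(1/2)$ and then $\t_{G(x)} \sim \mathrm{Beta}(1/2,1/2)$ — but the two Bernoulli parameters $\t_0, \t_1$ are \emph{shared} across blocks. However, for a mutual-information lower bound I can afford to lower bound $I(F; Y^n \mid X^n)$ by discarding the sharing; more precisely, $I((\T_0,\T_1,G); Y^n \mid X^n) \ge \sum_{x \in \Xc} I(\t_{G(x)}; Y^n_{(x)} \mid X^n)$ where $Y^n_{(x)}$ is the sub-block for symbol $x$ and $N_x := |\{i : X_i = x\}|$ its (random) length, because conditioned on $X^n$ the blocks are independent and one can peel off one block at a time using the chain rule and nonnegativity of conditional mutual information. (One has to be slightly careful that $G(x)$ and $\t_{G(x)}$ together determine the per-block parameter, but since $G(x) \sim \Ber(1/2)$ independently, the marginal law of $\t_{G(x)}$ is exactly $\mathrm{Beta}(1/2,1/2)$ — which is the point of the Jeffreys prior.)

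Next I would reduce to a one-block computation: for each $x$, $I(\t_{G(x)}; Y^n_{(x)} \mid X^n) = I(\T; Z^{N_x})$ where $\T \sim \mathrm{Beta}(1/2,1/2)$, $Z_j \mid \T \sim \Ber(\T)$ i.i.d., and $N_x \sim \Bin(n, 1/m)$. This is exactly the mutual information quantity at the heart of the $\frac12 \log n$ result for i.i.d.\ Bernoulli universal compression: it is a classical fact (going back to the redundancy–capacity / Clarke–Barron asymptotics, and available in non-asymptotic form) that $I(\T; Z^N) = \frac12 \log N - O(1)$, and more to the point that with the Jeffreys prior $I(\T; Z^N) \ge \frac12 \log(N) + c$ for an explicit constant, or at least $\ge \frac12 \log(N+1) - \frac12\log(\pi e/2)$ or similar. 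I would plug in a concrete bound of the form $I(\T; Z^N) \ge \frac{1}{2}\log(N+1) - \frac{1}{2}\log(\pi e)$ valid for all $N \ge 0$ (checking the $N=0$ edge case, where both sides handle the degenerate block gracefully). Then
\begin{align}
R_n &\ge \sum_{x \in \Xc} \Expt_{N_x}\left[\tfrac{1}{2}\log(N_x + 1)\right] - \tfrac{m}{2}\log(\pi e) \nonumber \\
&= m \cdot \Expt_{N}\left[\tfrac{1}{2}\log(N+1)\right] - \tfrac{m}{2}\log(\pi e), \nonumber
\end{align}
where $N \sim \Bin(n, 1/m)$, since the $N_x$ are identically (though not independently) distributed and expectation is linear.

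The remaining work is to lower bound $\Expt_N[\frac12 \log(N+1)]$ for $N \sim \Bin(n,1/m)$, which has mean $n/m$. The heuristic target is $\frac12 \log(n/m)$, and summing over the $m$ blocks yields $\frac{m}{2}\log(n/m) = \frac{m}{2}\log n - \frac{m}{2}\log m$, which — comparing with the claimed bound $R_n \ge m + \log(n+1) - \log(\pi e) - 2\sqrt e\, m^2 e^{-3n/100m}$ — suggests that the intended final bookkeeping is actually cruder/different: the statement has a \emph{single} $\log(n+1)$ and an additive $m$, not $\frac m2 \log n$. So I suspect the actual argument restricts attention to one block for the $\log(n+1)$ term and uses the other $m-1$ blocks only to harvest $\approx 1$ bit each (giving the additive $+m$), or uses a different prior on $G$. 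I would reconcile this by: (i) for one fixed symbol $x_0$, using $\Expt[\frac12\log(N_{x_0}+1)] \ge \frac12 \log(\Expt[N_{x_0}]+1) \cdot(1 - o(1))$ — but Jensen goes the wrong way, so instead a Bernstein/Chernoff concentration bound $\Pr(N_{x_0} \ge n/(2m)) \ge 1 - e^{-3n/(100m)}$ (roughly matching the exponent in the theorem) to get $\Expt[\frac12\log(N_{x_0}+1)] \ge \frac12\log(n/2m + 1)(1 - e^{-3n/(100m)})$; and (ii) for each of the $m-1$ remaining symbols, $I(\T;Z^{N_x}) \ge \Pr(N_x \ge 1)\cdot(\text{1-sample info}) = (1-(1-1/m)^n)\cdot I(\T; Z^1)$, and $I(\T;Z^1)$ with the Beta$(1/2,1/2)$ prior is a fixed positive constant, while $1-(1-1/m)^n$ is close to $1$; with the right constants this contributes $\ge m - (\text{error})$ and the errors aggregate into the $2\sqrt e\, m^2 e^{-3n/100m}$ slack (the $m^2$ because we have $m$ blocks each contributing an $e^{-cn/m}$-type error, and a union/linearity step, plus the $\sqrt e$ from loosening a constant).

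The main obstacle I anticipate is the last step: getting the concrete constants — the $3/(100m)$ exponent, the factor $2\sqrt e$, the additive $m$ versus $\frac m2 \log m$ — to line up exactly with the claimed inequality. This requires (a) a clean non-asymptotic lower bound on $I(\T; Z^N)$ with the Jeffreys prior that is tight enough to survive, (b) a concentration inequality for $\Bin(n,1/m)$ with exactly the stated exponent (a Chernoff bound on the lower tail $\Pr(N \le n/2m)$ gives $\exp(-n/(8m))$ or so, and $3/100 < 1/8$, so there is room), and (c) careful handling of the degenerate/small-$N$ regime so that the bound is genuinely valid for all $n, m$ and not just asymptotically. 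The mutual-information decomposition into independent blocks is conceptually the heart of the proof but should be routine given that conditioning on $X^n$ makes the blocks independent; the real grind is the constant-chasing in the tail estimates.
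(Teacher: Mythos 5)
Your starting point matches the paper's (the redundancy--capacity bound~\eqref{eq:ChoosePxPh} with $P_X=\Unif[m]$ and $G\sim\Unif(2^{[m]})$), but the core of your argument has two genuine gaps. First, the block decomposition $I(F;Y^n\mid X^n)\ge\sum_{x}I(\t_{G(x)};Y^n_{(x)}\mid X^n)$ does not follow from ``chain rule plus nonnegativity,'' and in fact the inequality runs the other way. Since the blocks are conditionally independent given $F$ and each block depends on $F$ only through its own effective parameter $\t_{G(x)}$, one has $H(Y^n\mid X^n,F)=\sum_x H(Y^n_{(x)}\mid X^n,\t_{G(x)})$ and $H(Y^n\mid X^n)\le\sum_x H(Y^n_{(x)}\mid X^n)$, so $I(F;Y^n\mid X^n)\le\sum_x I(\t_{G(x)};Y^n_{(x)}\mid X^n)$; equality (the ``parallel channels'' identity) would require the block parameters to be mutually independent, whereas here all $m$ of them are copies of just two shared values $\t_0,\t_1$. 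So the superadditive step you rely on is false as stated, and even your target quantity $\tfrac{m}{2}\log(n/m)$ would not imply the lemma uniformly (e.g.\ at $m=1$ it gives only $\tfrac12\log n$ against the claimed $\approx\log n$).

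Second, and more fundamentally, your choice of prior cannot produce the $+m$ term, and you correctly sense this but your proposed repair does not work. In the paper the $+m$ is $H(G)$, recovered as $H(G)-H(G\mid X^n,Y^n)$ with $H(G\mid X^n,Y^n)$ driven to $o(1)$ via Fano's inequality applied to an explicit estimator $\Gh$ that thresholds the per-symbol add-$1/2$ frequencies at $(\ph_{\max}+\ph_{\min})/2$. This estimator succeeds only because the paper deliberately takes $(\T_0,\T_1)\sim\Unif\{\t_1-\t_0\ge 1/2\}$ rather than independent $\mathrm{Beta}(1/2,1/2)$: the separation $\t_1-\t_0\ge 1/2$ makes $G$ identifiable from $O(1)$ samples per symbol, and the resulting error probability is exactly the source of the $2\sqrt{e}\,m^2e^{-3n/100m}$ slack in the statement. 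With your independent Jeffreys priors, $\t_0$ and $\t_1$ can be arbitrarily close, $G$ is then unidentifiable, and no estimator extracts the $m$ bits; your fallback of harvesting $I(\T;Z^1)$ per extra block also falls short, since $I(\T;Z^1)=1-\Expt[h_2(\T)]$ is a constant strictly below one bit. The missing idea is precisely the separated prior on $(\T_0,\T_1)$ together with the Fano/estimator argument for $G$; the $\log(n+1)$ piece then comes from bounding $h(\T_0\mid G,X^n,Y^n)+h(\T_1\mid G,X^n,Y^n)$ via the Gaussian max-entropy bound applied to the add-$1/2$ estimators, not from per-block Clarke--Barron asymptotics.
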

Lemma~\ref{lem:MinimaxLBRegretFinite} is proved by choosing a particular $P_X$ and $P_F$ and analyzing the right hand side of~\eqref{eq:ChoosePxPh}. 
\begin{remark}[Tightness for finite $\Xc$]
Combining Lemma~\ref{lem:MinimaxLBRegretFinite} and Lemma~\ref{lem:finiteHypClass} with $|\Gc| = 2^m$, we see that for $\Xc = m, \Gc = 2^{[m]}$, we can obtain a tight characterization of the regret $R_n$ on $n$ and $m$.
\end{remark}

Consider now the case when $\Xc$ is possibly infinite. Since VCdim$(\Gc) = d, \exists x_1,\dotsc,x_d \in \Xc$ such that $|\{(g(x_1),\dotsc,g(x_d)), g \in \Gc\}| = 2^d$. Theorem~\ref{thm:genlowerbd} then follows as a corollary to Lemma~\ref{lem:MinimaxLBRegretFinite} by substituting $m = d$ and choosing the distributions of $P_X, P_F$ as in the proof of Lemma~\ref{lem:MinimaxLBRegretFinite}.
\section{Discussion}\label{sec:conclusion}
We considered the problem of sequential prediction under log-loss with side information. This can be considered as an extension of the well-studied information-theoretic problem of universal compression of an i.i.d. binary source, and the regret incurred can be characterized via the value of a minmax game. We provided upper bounds on the regret via construction of a probability assignment, and lower bounds by the redundancy-capacity theorem. There are several open directions. Previous results established an $O(d \log n)$ upper bound via minmax duality. Even though our upper and lower bounds are off by a $\sqrt{n}$ factor, we suspect that a variant of the mixture probability assignment from information theory can achieve the optimal $O(d \log n)$ upper bound. We provided some special cases and a probability assignment where $O(d\log n)$ redundancy is achieved to provide evidence for this. It would also be interesting to answer the weaker question of whether the current upper bound on $R_n$ can be improved upon (constructively) under certain further restrictions on the class of functions $\Gc$. Moreover, even though the lower bound cannot be improved in order, it may be possible to get a better dependence on $d$. Finally, we have not considered complexity concerns for actual algorithmic implementation. Computing the coverings may be probihitively expensive in several cases, so finding efficient algorithms for sequential probability assignment is yet another avenue to be explored. All these directions are promising for further study. 
%%%%%%%%%%%%%%%%%%%%%%%%%%%%%%%%%%%%%%%%%%%%%%%%%%%%%%%%%%%%%%%%%%%%%%%%%%%%%%%%%%%%%%%%%%%%%%%%%%%
%%%%%%%%%%%%%%%%%%%%%%%%%%%%%%%%%%%%%%%%%%%%%%%%%%%%%%%%%%%%%%%%%%%%%%%%%%%%%%%%%%%%%%%%%%%%%%%%%%%

%%%%%%%%%%%%%%%%%%%%%%%%%%%%%%%%%%%%%%%%%%%%%%%%%%%%%%%%%%%%%%%%%%%%%%%%%%%%%%%%%%%%%%%%%%%%%%%%%%%
%%%%%%%%%%%%%%%%%%%%%%%%%%%%%%%%%%%%%%%%%%%%%%%%%%%%%%%%%%%%%%%%%%%%%%%%%%%%%%%%%%%%%%%%%%%%%%%%%%%
%%%%%%%%%%%%%%%%%%%%%%%%%%%%%%%%%%%%%%%%%%%%%%%%%%%%%%%%%%%%%%%%%%%%%%%%%%%%%%%%%%%%%%%%%%%%%%%%%%%
%%%%%%%%%%%%%%%%%%%%%%%%%%%%%%%%%%%%%%%%%%%%%%%%%%%%%%%%%%%%%%%%%%%%%%%%%%%%%%%%%%%%%%%%%%%%%%%%%%%

\appendix
\section{Skipped Proofs from Section~\ref{sec:MathPrelim}}

\subsection{Proof of Proposition~\ref{prop:qmixisprobassgn}}
\begin{align*}
  \sum_{y_i \in \Yc}  q_{\mix}(y_i|x^i,y^{i-1})  &=  \frac{\sum_{y_i \in \Yc} \Expt[p_{F}(y^i|x^i)]}{ \Expt[p_{F}(y^{i-1}|x^{i-1}))} \\
  &= \frac{\Expt[ \sum_{y_i \in \Yc}  p_{F}(y^i|x^i)]}{ \Expt[p_{F}(y^{i-1}|x^{i-1}))} \\
  &= \frac{\Expt[ p_{F}(y^{i-1}|x^{i-1})\sum_{y_i \in \Yc} p_{F}(y_i|x_i)]}{ \Expt[p_{F}(y^{i-1}|x^{i-1}))} \\
  &= \frac{\Expt[p_{F}(y^{i-1}|x^{i-1})]}{ \Expt[p_{F}(y^{i-1}|x^{i-1}))} = 1
\end{align*}
and so $q_{\mix}(y_i|x^i,y^{i-1})$ is a valid probability assignment. 
%%%%%%%%%%%%%%%%%%%%%%%%%%%%%%%%%%%%%%%%%%%%%%%%%%%%%%%%%%%%%%%%%%%%%%%%%%%%%%%%%%%%%%%%%%%%%%%%%%%%%%%%%%%%%%%%%%%%%%%%%%%%%%%%%%%%%%%%%%%%%%%%%%%%%%%%%%%%%%%%%%%%%%%%%%%%%%%%%%%%%%%%%%%%%%%%%%%%%%

\subsection{Proof of Lemma~\ref{lem:BlkKTUpBd}}
Since the function $g^*$ is known and the range of $g^*$ is only 0 and 1, we can assume without loss of generality that the side information is binary, i.e. $\Xc = \{0,1\}$. Now define  
\begin{align}
     n_l &:= \sum_{i=1}^n \indic\{x_i = l\}, l \in \{0,1\} \label{eq:nlDef} \\
   k_l &:= \sum_{i=1}^n \indic\{y_i = 1, x_i = l\}, l \in \{0,1\}. \label{eq:klDef} 
\end{align}
Note that 
\begin{align}
  \log \frac{p_{\t_0,\t_1,g^*}(Y^n|X^n)}{\int_{0}^1\int_{0}^1 p_{\t_0,\t_1,g^*}(y^n|x^n)w(\t_0)w(\t_1)d\t_0} =  \sum_{i=1}^n \log\frac{1}{q_{\KT}(Y_i|X^i,Y^{i-1})} - \sum_{i=1}^n\log\frac{1}{p_f(Y_i|X_i)} 
\end{align}
and
\begin{align}
    \sum_{i=1}^n \log\frac{1}{q_{\KT}(Y_i|X^i,Y^{i-1})} &- \sum_{i=1}^n\log\frac{1}{p_f(Y_i|X_i)} \nonumber\\
    & = \sum_{l=0}^{1} \sum_{i: X_i = l}\left[ \log\frac{1}{q_{\KT}(Y_i|X^i,Y^{i-1})} - \log\frac{1}{p_f(Y_i|X_i)}\right] \nonumber\\
    &=\sum_{l=0}^{1} \left[\log \frac{\prod_{i: X_i = l}p_f(Y_i|X_i)}{\prod_{i: X_i = l}q_{\KT}(Y_i|X^i,Y^{i-1})} \right] \label{eq:blkWiseRegret}.
\end{align}
Now, if $n_l = 0$, we have 
\begin{align}\label{eq:nlZero}
    \log \frac{\prod_{i: X_i = l}p_f(Y_i|X_i)}{\prod_{i: X_i = l}q_{\KT}(Y_i|X^i,Y^{i-1})} = 0
\end{align}
And if $n_l \ge 1$, we have 
\begin{align}
    \log \frac{\prod_{i: X_i = l}p_f(Y_i|X_i)}{\prod_{i: X_i = l}q_{\KT}(Y_i|X^i,Y^{i-1})} = \log\frac{\t_{l}^{k_l}(1-\t_{l})^{n_l - k_l}}{\frac{1}{4^{n_l}}\frac{{n_l \choose k_l} {2n_l \choose n_l}}{{2n_l \choose 2k_l}}} \label{eq:KTcumulativeProb}
\end{align}
where~\eqref{eq:KTcumulativeProb} follows from properties of the KT sequential probability assignment. Moreover, we have 
\begin{align}\label{eq:maxLikelihood}
    \t_{l}^{k_l}(1-\t_{l})^{n_l - k_l} \le \left(\frac{k_l}{n_l}\right)^{k_l} \left(1 - \frac{k_l}{n_l}\right)^{n_l - k_l} = 2^{-n_l h\left(\frac{k_l}{n_l}\right)}
\end{align}
which can be established by noting that the binary KL divergence 
\[
d\left(\frac{k_l}{n_l}||\t_{l}\right) = \frac{1}{n_l}\log\frac{\left(\frac{k_l}{n_l}\right)^{k_l} \left(1 - \frac{k_l}{n_l}\right)^{n_l - k_l}}{\t_{g(l)}^{k_l}(1-\t_{g(l)})^{n_l - k_l}} \ge 0,
\] and furthermore, using a Sterling approximation we can establish 
\begin{align}\label{eq:KTProbBd}
    \frac{1}{4^{n_l}}\frac{{n_l \choose k_l} {2n_l \choose n_l}}{{2n_l \choose 2k_l}} \le \sqrt{\frac{8}{\pi^2}} \frac{2^{-n_l  h\left(\frac{k_l}{n_l}\right) } }{\sqrt{n_l}}.
\end{align}
Plugging~\eqref{eq:maxLikelihood} and~\eqref{eq:KTProbBd} into~\eqref{eq:KTcumulativeProb} yields 
\begin{align}\label{eq:NlnonzeroUB}
    \log \frac{\prod_{i: X_i = l}p_f(Y_i|X_i)}{\prod_{i: X_i = l}q_{\KT}(Y_i|X^i,Y^{i-1})} \le \log\sqrt{\frac{\pi^2}{8}}\frac{ 2^{-n_l h\left(\frac{k_l}{n_l}\right)} \sqrt{n_l}}{  2^{-n_l h\left(\frac{k_l}{n_l}\right)} } = \frac{1}{2} \log n_l + \frac{1}{2}\log \frac{\pi^2}{8} 
\end{align}
when $n_l \ge 1$. Combining~\eqref{eq:nlZero} and~\eqref{eq:NlnonzeroUB} we can establish 
\begin{align}\label{eq:CombinednlUB}
     \log \frac{\prod_{i: X_i = l}p_f(Y_i|X_i)}{\prod_{i: X_i = l}q_{\KT}(Y_i|X^i,Y^{i-1})}  &\le \frac{1}{2}\log(n_l + 1) + \frac{1}{2}\log\frac{\pi^2}{8}
\end{align}
for all $n_l \ge 0$. Plugging the upper bound~\eqref{eq:CombinednlUB} into~\eqref{eq:blkWiseRegret} yields
\begin{align}
    \sum_{l=0}^1 \left[\log \frac{\prod_{i: X_i = l}p_f(Y_i|X_i)}{\prod_{i: X_i = l}q_{\KT}(Y_i|X^i,Y^{i-1})} \right] &\le \sum_{l=0}^{1} \frac{1}{2}\log(n_l + 1) + \frac{1}{2}\log\frac{\pi^2}{8} \nonumber \\
    &= \frac{1}{2}\log \prod_{l=0}^{1}(n_l + 1) + \log\frac{\pi^2}{8} \nonumber \\
    &\le \frac{1}{2}\log\left(\frac{n}{2}+1\right)^2 +  \log\frac{\pi^2}{8} \label{eq:AMGMineq}\\
    &= \log\left(\frac{n}{2}+1\right) +  \log\frac{\pi^2}{8} 
\end{align}
where the inequality~\eqref{eq:AMGMineq} follows by noting that $\sum_{l=0}^{m-1}(n_l + 1) = n + m$ and the using the AM-GM inequality. We have now established 
\[
\sum_{i=1}^n \log\frac{1}{q_{\KT}(Y_i|X^i,Y^{i-1})} - \sum_{i=1}^n\log\frac{1}{p_f(Y_i|X_i)}  \le \log\left(\frac{n}{2}+1\right) + \log\frac{\pi^2}{8} 
\]
and monotonicity of expectation followed by taking supremum over $\t_0,\t_1$ then yields the result. 
%%%%%%%%%%%%%%%%%%%%%%%%%%%%%%%%%%%%%%%%%%%%%%%%%%%%%%%%%%%%%%%%%%%%%%%%%%%%%%%%%%%%%%%%%%%%%%%%%%%%%%%%%%%%%%%%%%%%%%%%%%%%%%%%%%%%%%%%%%%%%%%%%%%%%%%%%%%%%
%%%%%%%%%%%%%%%%%%%%%%%%%%%%%%%%%%%%%%%%%%%%%%%%%%%%%%%%%%%%%%%%%%%%%%%%%%%%%%%%%%%%%%%%%%%%%%%%%%%%%%%%%%%%%%%%%%%%%%%%%%%%%%%%%%%%%%%%%%%%%%%%%%%%%%%%%%%
\subsection{Proof of Lemma~\ref{lem:finiteHypClass}}

By Proposition~\ref{prop:ChainRuleMixture}, we see that for a fixed $f^* = (g^*, \t_0^*, \t_1^*)$ and $P_X$, the regret achieved by the probability assignment $q_{\mix}$ characterized by~\eqref{eq:finiteSumMixture} is
\[
R_{n,P_X}(q_{\mix},f^*) =\Expt_{X^n,Y^n}\left[ \log\frac{p_{f^*}(Y^n|X^n)}{\Expt_F\left[p_F(Y^n|X^n)\right]}\right] 
\]
and we have 
\begin{align}
  \Expt_{X^n,Y^n}&\left[ \log\frac{p_{f^*}(Y^n|X^n)}{\Expt_F\left[p_F(Y^n|X^n)\right]}\right] \nonumber\\
   &\qquad=  \Expt\left[\log\frac{p_{f^*}(Y^n|X^n)}{\frac{1}{|\Gc|}\sum_{g \in \Gc} \int_{0}^1 \int_{0}^1 p_{g, \t_0, \t_1}(Y^n|X^n)w(\t_0)w(\t_1) d\t_0d\t_1}\right] \nonumber \\
    &\qquad= \log |\Gc| + \Expt\left[\log \frac{p_{f^*}(Y^n|X^n)}{\sum_{g \in \Gc} \int_{0}^1 \int_{0}^1 p_{g, \t_0, \t_1}(Y^n|X^n)w(\t_0)w(\t_1) d\t_0d\t_1}\right] \nonumber \\ 
    &\qquad\le \log |\Gc| + \Expt\left[\log \frac{p_{f^*}(Y^n|X^n)}{\int_{0}^1 \int_{0}^1 p_{g^*, \t_0, \t_1}(Y^n|X^n)w(\t_0)w(\t_1) d\t_0d\t_1}\right] \label{eq:ProbsGeqZero} \\
    &\qquad\le  \log |\Gc| + \log\left(\frac{n}{2}+1\right) + \log\frac{\pi^2}{8} \label{eq:fixedGmixthetas}
\end{align}
where~\eqref{eq:ProbsGeqZero} follows since each of the summands in the denominator of the second term are nonnegative, and~\eqref{eq:fixedGmixthetas} is a consequence of Lemma~\ref{lem:BlkKTUpBd}. 

%%%%%%%%%%%%%%%%%%%%%%%%%%%%%%%%%%%%%%%%%%%%%%%%%%%%%%%%%%%%%%%%%%%%%%%%%%%%%%%%%%%%%%%%%%%%%%%%%%%%%%%%%%%%%%%%%%%%%%%%%%%%%%%%%%%%%%%%%%%%%%%%%%%%%%%%%%%%%
%%%%%%%%%%%%%%%%%%%%%%%%%%%%%%%%%%%%%%%%%%%%%%%%%%%%%%%%%%%%%%%%%%%%%%%%%%%%%%%%%%%%%%%%%%%%%%%%%%%%%%%%%%%%%%%%%%%%%%%%%%%%%%%%%%%%%%%%%%%%%%%%%%%%%%%%%%%

\subsection{Proof of Lemma~\ref{lem:noncausalsideInfo}}

Following the proof of Lemma~\ref{lem:finiteHypClass} up to~\eqref{eq:ProbsGeqZero}, for any fixed $f^* = (g^*,\t_0^*,\t_1^*)$ the probability assignment $q_{\mix}$ characterized by the mixture~\eqref{eq:noncausalMixture} has 
\begin{align}
\Expt_{X^n,Y^n}&\left[ \sum_{i=1}^n \log \frac{1}{q_{\mix}(Y_i|X^n,Y^{i-1})} - \sum_{i=1}^n \log\frac{1}{p_{f^*}(Y_i|X_i)} \right] \nonumber\\ &\qquad\qquad\le \Expt\left[|\Pc_n|\right] + \Expt\left[\log \frac{p_{f^*}(Y^n|X^n)}{\int_{0}^1 \int_{0}^1 p_{g_j, \t_0, \t_1}(Y^n|X^n)w(\t_0)w(\t_1) d\t_0d\t_1}\right] \label{eq:useSauerLemma}
\end{align}

Where $j \in \left[|\Pc_n|\right]$ is such that 
\[
(g^*(X_1),\dotsc,g^*(X_n)) = (g_j(X_1),\dotsc,g_j(X_n)).
\]
If VCdim$(\Gc) = d < \infty$, we can control $|\Pc_n|$ using the following standard result~\cite[Chapter 8]{vershynin2018high}.
\begin{lemma}[Sauer--Shelah]\label{lem:SauerShelah}
If VCdim($\Gc) = d < \infty$, then $|\Pc_n| \le \left(\frac{en}{d}\right)^d$.
\end{lemma}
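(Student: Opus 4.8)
The plan is to establish the classical Sauer--Shelah--Perles inequality $|\Pc_n| \le \sum_{i=0}^d\binom{n}{i}$ and then bound the right-hand side by $(en/d)^d$. To begin, I would pass to a purely combinatorial statement: identifying each vector of $\Pc_n \subseteq \{0,1\}^n$ with the subset of $[n]$ it indicates turns $\Pc_n$ into a set system $\Fc \subseteq 2^{[n]}$, and I would record that every $T \subseteq [n]$ \emph{shattered} by $\Fc$ (in the sense that $\{S \cap T : S \in \Fc\} = 2^T$) has $|T| \le d$: shattering forces the sample points $\{X_i : i \in T\}$ to be pairwise distinct (a repeated point cannot be shattered) and to be shattered by $\Gc$, so $|T| \le \mathrm{VCdim}(\Gc) = d$.

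Next I would prove the key combinatorial fact, Pajor's lemma: any $\Fc \subseteq 2^{[n]}$ shatters at least $|\Fc|$ distinct subsets of $[n]$. The proof is by induction on $n$, the case $n = 0$ being immediate ($|\Fc| \le 1$, and $\Fc$ shatters $\emptyset$ if it is nonempty). For the step, project onto $[n-1]$: put $\Gc_0 = \{S \setminus \{n\} : S \in \Fc\}$ and $\Hc = \{S \subseteq [n-1] : S \in \Fc \text{ and } S \cup \{n\} \in \Fc\}$, so that $|\Fc| = |\Gc_0| + |\Hc|$ (a member of $\Gc_0$ is the projection of two members of $\Fc$ precisely when it lies in $\Hc$, and of one otherwise). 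Apply the inductive hypothesis to $\Gc_0$ and to $\Hc$; a set $T \subseteq [n-1]$ shattered by $\Gc_0$ is shattered by $\Fc$, and a set $T \subseteq [n-1]$ shattered by $\Hc$ yields $T \cup \{n\}$ shattered by $\Fc$ (for $A \subseteq T$ take $S \in \Hc$ with $S \cap T = A$; then $S, S \cup \{n\} \in \Fc$ realize $A$ and $A \cup \{n\}$ on $T \cup \{n\}$). These two collections of shattered sets are disjoint by membership of $n$, so $\Fc$ shatters $\ge |\Gc_0| + |\Hc| = |\Fc|$ sets. Together with the first paragraph, and since there are $\sum_{i=0}^d\binom{n}{i}$ subsets of $[n]$ of size $\le d$, this gives $|\Pc_n| \le \sum_{i=0}^d\binom{n}{i}$.

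Finally I would apply the standard estimate: for $1 \le d \le n$, using $(d/n)^d \le (d/n)^i$ for $0 \le i \le d$,
\[
\left(\frac{d}{n}\right)^d\sum_{i=0}^d\binom{n}{i} \;\le\; \sum_{i=0}^d\binom{n}{i}\left(\frac{d}{n}\right)^i \;\le\; \sum_{i=0}^n\binom{n}{i}\left(\frac{d}{n}\right)^i \;=\; \left(1+\frac{d}{n}\right)^n \;\le\; e^d,
\]
so $\sum_{i=0}^d\binom{n}{i} \le (en/d)^d$; the degenerate case $d = 0$ (where $\Gc$ is a single function and $|\Pc_n| = 1$) is trivial.

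The only mildly delicate point is the reduction in the first paragraph --- reconciling ``subset of the index set $[n]$ shattered by $\Pc_n$'' with the hypothesis $\mathrm{VCdim}(\Gc) = d$, which is phrased via shattered subsets of $\Xc$ --- and this is settled by the observation that a shattered index set cannot contain two coordinates carrying equal sample points. Everything else is the textbook induction and a one-line binomial estimate; as an alternative to Pajor's lemma one can run a direct double induction on $n$ and $d$ for the function $\sum_{i=0}^d\binom{n}{i}$ using the same projection.
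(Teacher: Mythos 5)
Your proof is correct, but note that the paper does not actually prove this lemma: it is quoted as a standard result with a citation to~\cite[Chapter 8]{vershynin2018high}, so there is no in-paper argument to compare against. What you have written is a complete, self-contained proof and it is essentially the canonical one (indeed the same route as in the cited reference): reduce to the set system of projections, prove Pajor's lemma by the one-coordinate projection/induction on $n$, conclude $|\Pc_n| \le \sum_{i=0}^d \binom{n}{i}$, and finish with the $(d/n)^d$-weighting trick to get $(en/d)^d$. Your handling of the only genuinely delicate point --- that a shattered index set $T \subseteq [n]$ cannot contain two coordinates carrying the same sample point, so shattered index sets correspond to subsets of $\Xc$ shattered by $\Gc$ and hence have size at most $d$ --- is exactly right, and the counting identity $|\Fc| = |\Gc_0| + |\Hc|$ together with the disjointness (by membership of $n$) of the two families of shattered sets is handled correctly. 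One small caveat worth recording: the final estimate $\sum_{i=0}^d \binom{n}{i} \le (en/d)^d$ is valid for $1 \le d \le n$ (as you state), and the inequality in the lemma can actually fail when $n < d$ (e.g.\ $n=1$, $d=2$ gives $|\Pc_1| = 2 > (e/2)^2$); the lemma is implicitly stated, and used in the paper, in the regime $n \ge d$, so your restriction is the right one, but it deserves an explicit mention alongside the trivial $d=0$ case.
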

Finally, using Lemma~\ref{lem:BlkKTUpBd} and Lemma~\ref{lem:SauerShelah} in~\eqref{eq:useSauerLemma} yields
\begin{align}\label{eq:regNoncausalPartitionSize}
    R_{n,\mathrm{nc}}(q_{\mix}) \le d \log (en/d)  + \log\left(\frac{n}{2}+1\right) + \log\frac{\pi^2}{8}.
\end{align}

%%%%%%%%%%%%%%%%%%%%%%%%%%%%%%%%%%%%%%%%%%%%%%%%%%%%%%%%%%%%%%%%%%%%%%%%%%%%%%%%%%%%%%%%%%%%%%%%%%%%%%%%%%%%%%%%%%%%%%%%%%%%%%%%%%%%%%%%%%%%%%%%%%%%%%%%%%%%%%%%%%%%%%%%%%%%%%%%%%%%%%%%%%%%%%%%%%%%%%%%%%%%%%%%%%%%%%%%%%%%%%%%%%%%%%%%%%%%%%%%%%%%%%%%%%%%%%%%%%%%%%%%%%%%%%%%%%%%%%%%%%%%%%%%%%%%%%%%%%%%%%%%%%%%%%%%%%%%%%%%%%%%%%%%%%%%%%%%%%%%%%%%%%%%%%%%%%%%%%%%%%%%%%%%%%%%%%%%%%%%%%%%%%%%%%%%%%%%%%%%%%%%%%%%%%%%%%%%%%%%%%%%%%%%%%%%%%%%%%%%%%%%%%%%%%%%%%%%%%%%%%%%%%%%%%%%%%%%%%%%%%%%%%%%%%%%%%%%%%%%%%%%%%%%%%%%%%%%%%%%%%%%%%%%%%%%%%%%%%%%%%%%%%%%%%%%%%%%%%%%%%%%%%%%%%%%%%%%%%%%%%%%%%%%%%%%%%%%%%%%%%%%%%%%%%%%%%%%%%%%%%%%
\subsection{Proof of Lemma~\ref{lem:PxKnownReg}}

By Proposition~\ref{prop:ChainRuleMixture}, we have for a fixed $f^* = (g^*, \t_0^*, \t_1^*)$ 
\begin{align}
    R_{n,P_X}(q_{\mix},f^*) &= \Expt\left[\frac{p_{f^*}(Y^n|X^n)}{ \Expt_{H}[p_F(Y^n|X^n)]}\right] \nonumber \\
    &= \Expt \left[\log \frac{p_{f^*}(Y^n|X^n) }{\frac{1}{\lfloor (e^4n)^d \rfloor} \sum_{i=1}^{\lfloor (e^4n)^d\rfloor }\int_0^1 \int_0^1 p_{g_i,\t_0,\t_1}(Y^n|X^n) d\t_0 d\t_1}\right] \nonumber \\
    &\le d \log  (e^4n)  + \Expt\left[\log \frac{ p_{f^*}(Y^n|X^n)}{\sum_{i=1}^{\lfloor (e^4n)^d\rfloor }\int_0^1 \int_0^1 p_{g_i,\t_0,\t_1}(Y^n|X^n) d\t_0 d\t_1}\right] \label{eq:SubVcCovering} 
\end{align}
Let $\gt \in \{g_1,g_2,\dotsc,g_{\lfloor (e^4n)^d \rfloor}\}$ be such that $\Prob(\gt(X) \neq g^*(X)) = d(\gt,g^*) \le 1/n$. Such a $\gt$ exists since $g_1,\dotsc,g_{\lfloor (e^4n)^d\rfloor}$ form a $1/n$ covering of $\Gc$ in the metric $d$. We then have from~\eqref{eq:SubVcCovering} 
\begin{align}
  R_{n,P_X}(q_{\mix},f^*) &\le  d \log (e^4n) + \Expt\left[\log \frac{ p_{f^*}(Y^n|X^n)}{\int_0^1 \int_0^1 p_{\gt,\t_0,\t_1}(Y^n|X^n) d\t_0 d\t_1}\right] \label{eq:SubVcCoveringTwo}.
\end{align}
Now, defining
\begin{align}
     N_j &:= \sum_{i=1}^n \indic\{g^*(X_i) = j\}, j \in \{0,1\}  \nonumber \\
   K_j &:= \sum_{i=1}^n \indic\{g^*(X_i) = j, Y_i = 1\}, j \in \{0,1\} \nonumber \\
    \Nt_j &:= \sum_{i=1}^n \indic\{\gt(X_i) = j\}, j \in \{0,1\}  \nonumber \\
   \Kt_j &:= \sum_{i=1}^n \indic\{\gt(X_i) = j, Y_i = 1\}, j \in \{0,1\} \nonumber
\end{align} 
we have 
\begin{align} \label{eq:IIDconditionalsProb}
p_{f^*}(Y^n|X^n) = p_{g^*,\t_0^*,\t_1^*}(Y^n|X^n) = \t_0^{*K_0}(1-\t_0^*)^{N_0-K_0}\t_1^{*K_1}(1-\t_1^*)^{N_1-K_1}
\end{align}
and 
\begin{align}
    \int_{0}^1 \int_{0}^1 p_{\gt,\t_0,\t_1}(Y^n|X^n) d\t_0 d\t_1 &=  \int_{0}^1 \int_{0}^1 \t_0^{\Kt_0}(1-\t_0)^{\Nt_0-\Kt_0}\t_1^{\Kt_1}(1-\t_1)^{\Nt_1-\Kt_1}  d\t_0 d\t_1 \nonumber \\
    &= \int_{0}^1  \t_0^{\Kt_0}(1-\t_0)^{\Nt_0-\Kt_0} d\t_0 \int_0^1 \t_1^{\Kt_1}(1-\t_1)^{\Nt_1-\Kt_1}  d\t_1 \nonumber \\
    &= \frac{1}{(\Nt_0+1){\Nt_0 \choose \Kt_0}(\Nt_1+1){\Nt_1 \choose \Kt_1}} \label{eq:LaplaceMixtureIntegral}
\end{align}
where~\eqref{eq:LaplaceMixtureIntegral} follows from properties of the Laplace probability assignment. Now, from~\eqref{eq:IIDconditionalsProb} and~\eqref{eq:LaplaceMixtureIntegral}, we have 
\begin{align}
   &\frac{p_{f^*}(Y^n|X^n)}{\int_0^1 \int_0^1 p_{\gt,\t_0,\t_1}(Y^n|X^n) d\t_0d\t_1} \nonumber\\
   &\qquad= (\Nt_0+1)(\Nt_1+1){\Nt_0 \choose \Kt_0}\t_0^{*K_0}(1-\t_0^*)^{N_0-K_0}{\Nt_1 \choose \Kt_1}\t_1^{*K_1}(1-\t_1^*)^{N_1-K_1} \nonumber \\
    &\qquad\le (n+1)^2 {\Nt_0 \choose \Kt_0}\t_0^{*K_0}(1-\t_0^*)^{N_0-K_0}{\Nt_1 \choose \Kt_1}\t_1^{*K_1}(1-\t_1^*)^{N_1-K_1}  \label{eq:nOlessn} \\
    &\qquad\le (n+1)^2 \frac{{\Nt_0 \choose \Kt_0}}{{N_0 \choose K_0}} \frac{{\Nt_1 \choose \Kt_1}}{{N_1 \choose K_1}} \label{eq:binomialproblessone}
 % &\le (n+1)^2 {\Nt_0 \choose \Kt_0}/{N_0 \choose K_0} {\Nt_1 \choose \Kt_1}/{N_1 \choose K_1} \label{eq:binomialproblessone}
\end{align}
where~\eqref{eq:nOlessn} follows because $\Nt_0,\Nt_1 \le n$, and~\eqref{eq:binomialproblessone} follows since ${n \choose k}x^k(1-x)^{n-k} \le 1$ for any $x \in [0,1]$.  Substituting~\eqref{eq:binomialproblessone} into~\eqref{eq:SubVcCoveringTwo} yields  
\begin{align}
     R_{n,P_X}(q_{\mix},f^*) \le d \log (e^4n) + \Expt_{X^n,Y^n}\left[\log \frac{{\Nt_0 \choose \Kt_0}}{{N_0 \choose K_0}}\right] +  \Expt_{X^n,Y^n}\left[\log \frac{{\Nt_1 \choose \Kt_1}}{{N_1 \choose K_1}}\right] \label{eq:ratiosofnchoosek}
\end{align}
Recall that $d_H(\cdot,\cdot)$ denotes the Hamming distance. We can then easily verify the following proposition.
\begin{proposition}\label{prop:HammDistNK}
We have 
\[
|\Nt_j - N_j| \le d_H(g^*(X^n),\gt(X^n)), j \in \{0,1\} \text{ and }|\Kt_j - K_j| \le d_H(g^*(X^n),\gt(X^n)), j \in \{0,1\}.
\]
\end{proposition}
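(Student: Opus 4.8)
The plan is to establish all four inequalities through one elementary observation followed by the triangle inequality, treating the $N$-type and $K$-type quantities in parallel.

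First I would write the difference of counts as a coordinatewise sum: $\Nt_j - N_j = \sum_{i=1}^n\big(\indic\{\gt(X_i)=j\}-\indic\{g^*(X_i)=j\}\big)$. The key point is that $g^*$ and $\gt$ are $\{0,1\}$-valued in this section, so for a fixed $j\in\{0,1\}$ the $i$-th summand is $0$ whenever $\gt(X_i)=g^*(X_i)$ and has absolute value $1$ precisely when $\gt(X_i)\neq g^*(X_i)$; in other words $\big|\indic\{\gt(X_i)=j\}-\indic\{g^*(X_i)=j\}\big|=\indic\{\gt(X_i)\neq g^*(X_i)\}$. Applying the triangle inequality for sums and then summing over $i$ gives $|\Nt_j-N_j|\le\sum_{i=1}^n\indic\{\gt(X_i)\neq g^*(X_i)\}=d_H(g^*(X^n),\gt(X^n))$, which is the first claim.

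For $|\Kt_j-K_j|$ I would carry the same computation with the extra factor $\indic\{Y_i=1\}$: $\Kt_j-K_j=\sum_{i=1}^n\indic\{Y_i=1\}\big(\indic\{\gt(X_i)=j\}-\indic\{g^*(X_i)=j\}\big)$. Since $\indic\{Y_i=1\}\in\{0,1\}$, each summand has absolute value at most $\big|\indic\{\gt(X_i)=j\}-\indic\{g^*(X_i)=j\}\big|=\indic\{\gt(X_i)\neq g^*(X_i)\}$, and summing again yields $|\Kt_j-K_j|\le d_H(g^*(X^n),\gt(X^n))$.

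There is essentially no obstacle here: the entire content is the remark that, for binary-valued functions, the difference of the two ``level-$j$'' indicators equals the disagreement indicator in absolute value, so that a single Hamming-distance bound simultaneously controls all of $\Nt_0,\Nt_1,\Kt_0,\Kt_1$ relative to $N_0,N_1,K_0,K_1$. In particular the bound does not depend on $j$, which is exactly what makes its later use in~\eqref{eq:ratiosofnchoosek} clean.
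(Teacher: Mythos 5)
Your proof is correct, and since the paper simply asserts this proposition as "easily verified" without supplying an argument, your coordinatewise decomposition plus the observation that $\bigl|\indic\{\gt(X_i)=j\}-\indic\{g^*(X_i)=j\}\bigr|=\indic\{\gt(X_i)\neq g^*(X_i)\}$ for binary-valued functions is exactly the intended elementary verification. Nothing is missing.
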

We now wish to use Proposition~\ref{prop:HammDistNK}, to obtain a bound on $\log \frac{{\Nt_0 \choose \Kt_0}}{{N_0 \choose K_0}}$. For this, we will need an additional proposition.

\begin{proposition}\label{prop:StirlingContinuity}
For any two nonnegative integers $a,b$, we have 
\begin{align}\label{eq:StirlingContinuity}
    \log\frac{(a+b)!}{a!} \le  b \log (a+b+1) + b + 1
\end{align}
\end{proposition}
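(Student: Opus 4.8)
The plan is to prove $\log\frac{(a+b)!}{a!} \le b\log(a+b+1) + b + 1$ by writing the ratio $(a+b)!/a!$ as a product of $b$ consecutive integers and bounding it. Explicitly, $\frac{(a+b)!}{a!} = \prod_{j=1}^{b}(a+j)$, so that $\log\frac{(a+b)!}{a!} = \sum_{j=1}^b \log(a+j)$. Since each factor $a+j$ is at most $a+b$ (and hence at most $a+b+1$), one immediately gets the crude bound $\sum_{j=1}^b \log(a+j) \le b\log(a+b)$; but this does not handle the degenerate cases cleanly (e.g. $a=b=0$, where the claim reads $0 \le 1$), so I will use $a+b+1$ in place of $a+b$ to keep every logarithm of a quantity $\ge 1$. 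With the replacement $a+j \le a+b+1$ for all $j\in[b]$ we obtain $\log\frac{(a+b)!}{a!} \le b\log(a+b+1)$, which is in fact already stronger than the stated inequality (the extra $+b+1$ on the right is slack). So the ``hard'' direction is essentially free.

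If instead the intended route is via Stirling's approximation --- which the name of the proposition and the phrasing ``Stirling approximation'' in the proof of Lemma~\ref{lem:BlkKTUpBd} suggest --- then I would use the standard two-sided bound $\sqrt{2\pi m}\,(m/e)^m \le m! \le e\sqrt{m}\,(m/e)^m$ (valid for $m\ge 1$, with the $m=0$ case checked separately). Applying the upper bound to $(a+b)!$ and the lower bound to $a!$ gives
\begin{align}
\log\frac{(a+b)!}{a!} \le \log\frac{e\sqrt{a+b}}{\sqrt{2\pi a}} + (a+b)\log\frac{a+b}{e} - a\log\frac{a}{e}.
\end{align}
The logarithmic prefactor is bounded by an absolute constant (at most $\log(e/\sqrt{2\pi}) \le 1$ when $a+b \le$ something, and one argues the ratio $\sqrt{(a+b)/a}$ contributes at most $b$-worth when $a$ is small), and the main term $(a+b)\log(a+b) - a\log a - b\log e$ is handled by the convexity/monotonicity estimate $(a+b)\log(a+b) - a\log a = \int_a^{a+b}(\log t + \log e)\,dt \le b\log(a+b) + b\log e$, so the $\log e$ terms cancel and one is left with $b\log(a+b) + (\text{const})$, which is $\le b\log(a+b+1) + b + 1$. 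Again the degenerate case $a = 0$ (so $a! = 1$) must be treated by hand, where the bound reduces to $\log b! \le b\log(b+1) + b + 1$, which follows from $\log b! \le \sum_{j=1}^b \log j \le b\log b \le b\log(b+1)$.

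The only real subtlety --- and the step I would be most careful about --- is the boundary behavior: $a=0$ or $b=0$ (and in particular $a=b=0$), where factorials equal $1$ and Stirling-type bounds degenerate or require $m\ge 1$. In every such case the inequality is true with room to spare, so I would simply dispatch them first as trivial/separate checks and then assume $a,b\ge 1$ for the main argument. Given how much slack the stated right-hand side has, I expect the cleanest write-up is the elementary product argument ($\log\frac{(a+b)!}{a!} = \sum_{j=1}^b\log(a+j) \le b\log(a+b+1) \le b\log(a+b+1)+b+1$), with the Stirling language reserved for motivation; this avoids any appeal to the two-sided factorial bounds altogether.
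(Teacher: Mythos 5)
Your elementary argument is correct and is genuinely different from (and simpler than) the paper's proof. You write $\frac{(a+b)!}{a!}=\prod_{j=1}^{b}(a+j)$ and bound each of the $b$ factors by $a+b+1$, giving $\log\frac{(a+b)!}{a!}\le b\log(a+b+1)$, which is strictly stronger than the stated bound and handles the degenerate cases $a=0$, $b=0$, $a=b=0$ automatically (an empty product for $b=0$, and $\log(a+b+1)\ge 0$ always). The paper instead invokes the two-sided Stirling bounds $\sqrt{2\pi}\,m^{m+1/2}e^{-m}\le m!\le e\,m^{m+1/2}e^{-m}$, applies the upper bound to $(a+b)!$ and the lower bound to $a!$, and then must separately dispatch the boundary cases $a=0$ or $b=0$ where those bounds require $m\ge 1$; this is essentially your second sketch, carried out in full. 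What your route buys is a shorter proof, a sharper conclusion (the $+b+1$ on the right-hand side is pure slack), and no case analysis; what the Stirling route buys is nothing additional here, since the proposition only needs the crude form. Your instinct to reserve the Stirling language for motivation and write up the product argument is the right call.
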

\begin{proof}
By the Stirling approximation, for any positive integer $m$, we have 
\begin{align}\label{eq:stirlingapprox}
    \sqrt{2\pi}m^{m+1/2}e^{-m} \le m! \le e m^{m+1/2}e^{-m}.
\end{align}
We now use this to claim that when $a,b \ge 1$ 
\begin{align}
    \ln(a+b)! - \ln a! &\le \ln \left(e (a+b)^{a+b+1/2}e^{-(a+b)}\right) - \ln \left( \sqrt{2\pi}a^{a+1/2}e^{-a} \right) \nonumber \\
    &= \ln\frac{e}{\sqrt{2 \pi}} + (a+b+1/2)\ln(a+b) - (a+1/2) \ln a + a - (a+b) \nonumber \\
    &= \ln\frac{e}{\sqrt{2 \pi}} + b\ln(a+b) + (a+1/2)\ln(1+b/a) - b \nonumber \\
    &\le \ln\frac{e}{\sqrt{2 \pi}} + b\ln(a+b) + (a+1/2)\ln(1+b/a) - b \nonumber \\
    &\le  \ln\frac{e}{\sqrt{2 \pi}} + b\ln(a+b) + b/2a  \label{eq:lnoneplusx}\\
     &\le  \ln\frac{e}{\sqrt{2 \pi}} + b\ln(a+b) + b/2  \label{eq:apositive}
\end{align}
where~\eqref{eq:lnoneplusx} follows since for $x \ge 0, \ln(1+x) \le x$ and~\eqref{eq:apositive} follows since $a \ge 1$. When $a = b = 0$ and when $b = 0, a \ge 1$, the proposition is immediate. Finally, when $a = 0$ and $b \ge 1$, we have by the upper bound on $b!$ in~\eqref{eq:stirlingapprox} that
\begin{align}
    \ln b! \le \left(b + \frac{1}{2}\right) \ln b + 1 - b 
\end{align}
and after some algebraic manipulations we can see that the proposition holds in this case as well. 
\end{proof}

For convenience, define $\d_n := d_H(g^*(X^n),\gt(X^n))$. Note that 
\begin{align}
    \log \frac{{\Nt_0 \choose \Kt_0}}{{N_0 \choose K_0}} &= \log \frac{\Nt_0! K_0! (N_0-K_0)!}{N_0!\Kt_0! (\Nt_0 - \Kt_0)!} \nonumber \\
    &= \log \frac{\Nt_0!}{N_0!} + \log \frac{K_0!}{\Kt_0!} + \log \frac{(N_0-K_0)!}{(\Nt_0-\Kt_0)!}. \label{eq:sumoffactorials}
\end{align}
We will now bound each of the three terms in the RHS of~\eqref{eq:sumoffactorials}. 
We have 
\begin{align}
    \log \frac{\Nt_0!}{N_0!} &\le \log \frac{(N_0+\d_n)!}{N_0!} \label{eq:useHammUB} \\
    &\le \d_n\log(N_0+\d_n+1) + \d_n + 1 \label{eq:useStirlingCont} \\
    &\le \d_n\log(2n+1) + \d_n + 1 \label{eq:logrationfactorials} 
\end{align}
where~\eqref{eq:useHammUB} follows from Proposition~\ref{prop:HammDistNK}, ~\eqref{eq:useStirlingCont} follows from Proposition~\ref{prop:StirlingContinuity} and~\eqref{eq:logrationfactorials} follows since $N_0,\d_n \le n$. Using the same reasoning, we conclude
\begin{align}\label{eq:logratiokfactorials}
     \log \frac{K_0!}{\Kt_0!} \le \d_n\log(2n+1) + \d_n + 1.
\end{align}
and 
\begin{align}\label{eq:logrationkfactorials}
     \log \frac{(N_0-K_0)!}{(\Nt_0-\Kt_0)!} \le 2\d_n\log(3n+1) + 2\d_n + 1
\end{align}
where in~\eqref{eq:logrationkfactorials} we additionally use the fact that $|(N_0-K_0)-(\Nt_0-\Kt_0)| \le |N_0-\Nt_0| + |K_0-\Kt_0| \le  2\d_n$. Substituting~\eqref{eq:logrationfactorials}---~\eqref{eq:logrationkfactorials} into~\eqref{eq:sumoffactorials} yields 
\begin{align}\label{eq:nchoosekratioslogzero}
    \log \frac{{\Nt_0 \choose \Kt_0}}{{N_0 \choose K_0}} \le 4\d_n \log(3n+1) + 4\d_n + 3.
\end{align}
Similarly, we have
\begin{align}\label{eq:nchoosekratioslogone}
    \log \frac{{\Nt_1 \choose \Kt_1}}{{N_1 \choose K_1}} \le 4\d_n \log(3n+1) + 4\d_n + 3
\end{align}
and substituting~\eqref{eq:nchoosekratioslogzero} and~\eqref{eq:nchoosekratioslogone} into~\eqref{eq:ratiosofnchoosek} yields 
\begin{align}
      R_{n,P_X}(q_{\mix},f^*) &\le d \log\left(e^4n\right) + \Expt_{X^n,Y^n}\left[ 8\d_n \log(3n+1) + 8\d_n + 6\right] \nonumber \\
      &=  d \log\left(e^4n\right) + 8\log(6n+2)\Expt_{X^n,Y^n}[\d_n] + 6 \label{eq:SubstituteExpectedDelta}
\end{align}
Now, we have
\begin{align*}
    \Expt_{X^n,Y^n}[\d_n] =  \Expt_{X^n}\left[
    \sum_{i=1}^n \indic\{g^*(X_i) \neq \gt(X_i)\}\right] 
    = n \Prob(g^*(X_1) \neq \gt(X_1)) \le 1
\end{align*}
Since by design $d(\gt,g^*) = \Prob(\gt(X) \neq g^*(X)) \le 1/n$ where $X$ is distributed as $P_X$. Substituting this into~\eqref{eq:SubstituteExpectedDelta} yields 
\begin{align}
     R_{n,P_X}(q_{\mix},f^*) &\le (d+8)\log\left(e^4 n \right) + 6.
\end{align}
%%%%%%%%%%%%%%%%%%%%%%%%%%%%%%%%%%%%%%%%%%%%%%%%%%%%%%%%%%%%%%%%%%%%%%%%%%%%%%%%%%%%%%%%%%%%%%%%%%%%%%%%%%%%%%%%%%%%%%%%%%%%%%%%%%%%%%%%%%%%%%%%%%%%%%%%%%%%%%%%%%%%%%%%%%%%%%%%%%%%%%%%%%%%%%%%%%%%%%%%%%%%%%%%%%%%%%%%%%%%%%%%%%%%%%%%%%%%%%%%%%%%%%%%%%%%%%%%%%%%%%%%%%%%%%%%%%%%%%%%%%%%%%%%%%%%%%%%%%%%%%%%%%%%%%%%%%%%%%%%%%%%%%%%%%%%%%%%%%%%%%%%%%%%%%%%%%%%%%%%%%%%%%%%%%%%%%%%%%%%%%%%%%%%%%%%%%%%%%%%%%%%%%%%%%%%%%%%%%%%%%%%%%%%%%%%%%%%%%%%%%%%%%%%%%%%%%%%%%%%%%%%%%%%%%%%%%%%%%%%%%%%%%%%%%%%%%%%%%%%%%%%%%%%%%%%%%%%%%%%%%%%%%%%%%%%%%%%%%%%%%%%%%%%%%%%%%%%%%%%%%%%%%%%%%%%%%%%%%%%%%%%%%%%%%%%%%%%%%%%%%%%%%%%%%%%%%%%%%%%%%%%%%%%%%

\section{Skipped proofs from Section~\ref{sec:lognRegret}}
\subsection{Proof of Proposition~\ref{prop:Halfspaceslogn}}
By Proposition~\ref{prop:ChainRuleMixture}, we have for a fixed $f^* = (a^*, \t_0^*, \t_1^*)$ 
\begin{align}
    R_{n,P_X}(q_{\mix},f^*) &= \Expt\left[\frac{p_{f^*}(Y^n|X^n)}{ \Expt_{F}[p_F(Y^n|X^n)]}\right] \nonumber \\
    &= \Expt \left[\log \frac{p_{f^*}(Y^n|X^n) }{\Expt_{A}[q_{A,\mix}(Y^n|X^n)]}\right]  \label{eq:ChainRuleHalfspaces} 
\end{align}
We will need the following claim. 
\begin{claim}\label{claim:continuityHS}
Let $a^* \in \sphere^{d-1}$ denote the function picked by the adversary, and $\d := \min_i|a^{*T}X_i|$. Then, for all $a \in \sphere^{d-1}$ such that $\|a-a^*\| < \d$, we have $q_{a,\mix}(Y^i|X^i) = q_{a^*, \mix}(Y^i|X^i)$. 
\end{claim}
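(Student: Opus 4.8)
The plan is to prove Claim~\ref{claim:continuityHS} by showing that the hypotheses $a$ and $a^*$ induce exactly the same labelling on the observed side-information sequence $X^i$, which immediately forces the mixtures over $\t_0,\t_1$ to coincide. Concretely, recall that $g_a(x) = \sign(a^T x)$ and that $q_{a,\mix}(y^i|x^i)$ depends on $a$ \emph{only} through the label vector $(g_a(X_1),\dotsc,g_a(X_i))$, since $p_{a,\t_0,\t_1}(y^i|x^i) = \prod_{t=1}^i \t_{g_a(X_t)}^{y_t}(1-\t_{g_a(X_t)})^{1-y_t}$, and integrating against $w(\t_0)w(\t_1)$ yields a quantity that is a function of the counts $n_\ell = \sum_t \indic\{g_a(X_t)=\ell\}$ and $k_\ell = \sum_t \indic\{g_a(X_t)=\ell, y_t=1\}$ alone. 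Thus it suffices to show that $g_a(X_t) = g_{a^*}(X_t)$ for all $t \le i$.

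First I would fix $t \le i$ and bound $|a^T X_t - a^{*T} X_t| = |(a-a^*)^T X_t| \le \|a - a^*\|_2 \,\|X_t\|_2 = \|a-a^*\|_2$, using Cauchy--Schwarz and $\|X_t\|_2 = 1$ since $X_t \in \sphere^{d-1}$. By the definition of $\d = \min_j |a^{*T} X_j|$ and the assumption $\|a - a^*\|_2 < \d$, we get $|a^T X_t - a^{*T} X_t| < \d \le |a^{*T} X_t|$. A one-line argument then shows $a^T X_t$ and $a^{*T} X_t$ have the same sign: if $a^{*T} X_t > 0$ then $a^T X_t > a^{*T} X_t - \d \ge 0$ (in fact strictly positive since $|a^{*T}X_t| \ge \d$ gives $a^{*T}X_t \ge \d$), and symmetrically if $a^{*T} X_t < 0$; the case $a^{*T} X_t = 0$ is excluded because then $\d = 0$, contradicting $\|a-a^*\|_2 < \d$ (or is vacuous). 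Hence $g_a(X_t) = \sign(a^T X_t) = \sign(a^{*T} X_t) = g_{a^*}(X_t)$ for every $t \le i$.

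Having established that the two label vectors agree on $X^i$, I would conclude by noting that $p_{a,\t_0,\t_1}(y^i|x^i) = p_{a^*,\t_0,\t_1}(y^i|x^i)$ pointwise for every $(\t_0,\t_1)$, and therefore
\[
q_{a,\mix}(y^i|x^i) = \int_0^1\int_0^1 p_{a,\t_0,\t_1}(y^i|x^i)w(\t_0)w(\t_1)\,d\t_0\,d\t_1 = \int_0^1\int_0^1 p_{a^*,\t_0,\t_1}(y^i|x^i)w(\t_0)w(\t_1)\,d\t_0\,d\t_1 = q_{a^*,\mix}(y^i|x^i),
\]
as claimed. I do not anticipate a serious obstacle here — the proof is essentially the observation that a strict perturbation smaller than the margin $\d$ cannot flip any sign — but the one point to handle carefully is the degenerate case $\d = 0$ (some $X_j$ orthogonal to $a^*$), where the hypothesis $\|a - a^*\|_2 < \d$ is vacuous and the claim holds trivially; and, if one wants to be fully rigorous, confirming that the definition of $p_f$ with the convention $0^0 = 1$ makes $p_{a,\t_0,\t_1}(y^i|x^i)$ genuinely a function of the label vector only (which is immediate from the product form). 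This claim will then feed into the main proof of Proposition~\ref{prop:Halfspaceslogn} by letting one replace the mixture over all of $\sphere^{d-1}$ with a mixture concentrated near $a^*$, at the cost of a term controlled by the probability mass of the $\d$-ball, which is where the $d\log(48d)$ and $(2d+1)\log n$ terms will come from.
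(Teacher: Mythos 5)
Your proof is correct and follows essentially the same route as the paper: show that the perturbation $|(a-a^*)^T X_t| \le \|a-a^*\| < \d \le |a^{*T}X_t|$ cannot flip any sign, so the label vectors agree and the mixtures over $(\t_0,\t_1)$ coincide. Your explicit handling of the degenerate case $\d = 0$ and the observation that $q_{a,\mix}$ depends on $a$ only through the label counts are both sound refinements of what the paper leaves implicit.
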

\begin{proof}
Note that by definition of $q_{a,\mix}(Y^i|X^i)$, showing that
\[
 (g_a(X_1),\dotsc,g_a(X_n)) = (g_{a^{*}}( X_1),\dotsc,g_{a^*}(X_n))
\]
or equivalently that 
\begin{align}\label{eq:SphereLabelContinuity}
    (\sign(a^T X_1),\dotsc,\sign(a^T X_n)) = (\sign(a^{*T} X_1),\dotsc,\sign(a^{*T} X_n))
\end{align}
for all $\{a: \|a-a^*\| < \d\}$ suffices to prove the claim. Observe now that for all $a \in \sphere^{d-1}$ we have $\sign(a^T X_i) = \sign(a^{*T} X_i + (a-a^*)^T X_i)$, and if $\|a^* - a\| < \d$, we have $|(a-a^*)^T X_i| \le \|a-a^*\| < \d$, and therefore $\sign(w^T X_i) = \sign(w^{*T} X_i)$ for all $i = 1,\dotsc,n$ (since $|w^{*T} X_i| \ge \d$). This proves~\eqref{eq:SphereLabelContinuity} and consequently the claim.
\end{proof}
    
We now have 
\begin{align}
    q_{A,\mix}(Y^n|X^n) &\ge q_{A,\mix}(Y^n|X^n) \indic\{|a^*-A| < \d\} \nonumber \\
    &= q_{a^*,\mix}(Y^n|X^n) \indic\{|a^*-A| < \d\} \label{eq:useHScont}
\end{align}
where~\eqref{eq:useHScont} follows from Claim~\ref{claim:continuityHS}. Then, 
\begin{align}
    \Expt_A[q_{A, \mix}(Y^n|X^n)] &\ge \Expt_A[q_{a^*,\mix}(Y^n|X^n) \indic\{|a^*-A| < \d\}] \nonumber \\
    &= q_{a^*,\mix}(Y^n|X^n) \Expt_A[\indic\{|a^*-A| < \d\}] \nonumber \\
    &=  q_{a^*,\mix}(Y^n|X^n) \frac{\mathrm{Area}(\{\|a-a^*\| \le \d \}\cap \sphere^{d-1})}{\mathrm{Area}(\sphere^{d-1})} \label{eq:expectedIndicIsArea}
\end{align}
where~\eqref{eq:expectedIndicIsArea} follows since $A \sim \Unif[\sphere^{d-1}]$.

We now bound $\frac{\mathrm{Area}(\{\|a-a^*\| \le \d \}\cap \sphere^{d-1})}{\mathrm{Area}(\sphere^{d-1})}$ by a simple covering number argument explained next. Consider $\mathcal{N}(d,\d)$ to be a $\d-$covering of $\sphere^{d-1}$, which consists of the points $z_1,\dotsc,z_{|\mathcal{N}(d,\d)|}$. Then by definition of a covering, 
\[
\sphere^{d-1} = \cup_{i = 1}^{|\mathcal{N}(d,\d)|} \left(\{\|z-z_i\| \le \d\} \cap \sphere^{d-1}\right)
\]
and subsequently, 
\begin{align}
\mathrm{Area}(\sphere^{d-1}) &= \mathrm{Area}\left(\cup_{i = 1}^{|\mathcal{N}(d,\d)|} \left(\{\|z-z_i\| \le \d \}\cap \sphere^{d-1}\right)\right) \nonumber \\
&\le \sum_i \mathrm{Area}\left(\{\|z-z_i\| \le \d \}\cap \sphere^{d-1}\right) \nonumber \\
&= |\mathcal{N}(d,\d)| \mathrm{Area}(\{\|z-a^*\| \le \d \} \cap \sphere^{d-1}) \label{eq:CoveringNumBdOnarea}
\end{align}
where~\eqref{eq:CoveringNumBdOnarea} follows by symmetry of $\sphere^{d-1}$, which implies that any for each point $z \in \sphere^{d-1}$ the $\d-$neighbourhood is isomorphic. This establishes that 
\begin{align} \label{eq:AreaRatioCovering}
    \frac{\mathrm{Area}(\{\|a-a^*\| \le \d \}\cap \sphere^{d-1})}{\mathrm{Area}(\sphere^{d-1})} \ge \frac{1}{|\mathcal{N}(d,\d)|}.
\end{align}
Finally, we can show that when $\d \le 1$, 
\[
|\mathcal{N}(d,\d)| \le \left(\frac{3}{\d}\right)^d
\]
since $\mathcal{N}(d,\d) \le \mathcal{N}(\mathbb{B}_d,\d)$, the covering number of the unit ball, and $\mathcal{N}(\mathbb{B}_d,\d) \le \left(\frac{3}{\d}\right)^d$ ~\cite[Chapter 4]{vershynin2018high}. This implies that $\frac{\mathrm{Area}(\{\|a-a^*\| \le \d \}\cap \sphere^{d-1})}{\mathrm{Area}(\sphere^{d-1})} \ge \left(\frac{\d}{3}\right)^d$. Now, substituting this back in~\eqref{eq:expectedIndicIsArea} yields 
\begin{align}\label{eq:LBonQa}
  \Expt_A[q_{A,\mix}(Y^n|X^n)] \ge  q_{a^*,\mix}(Y^n|X^n) \left(\frac{\d}{3}\right)^d
\end{align}
and by substituting~\eqref{eq:LBonQa} into~\eqref{eq:ChainRuleHalfspaces}, we have 
\begin{align}\label{eq:substituteExptlogd}
    R_n(f^*, q_{\mix}) &\le \Expt\left[\log \frac{p_{f^*}(Y^n|X^n)}{q_{a^*,\mix}(Y^n|X^n)}\right] + d\Expt\left[\log \frac{3}{\d}\right].
\end{align}
We now consider $\Expt\left[\log \frac{1}{\d}\right]$. Recall that we have $\d = \min_{i} |a^{*T} X_i|$, where $X_i \sim \Unif(\sphere^{d-1})$ i.i.d. By symmetry, for any $a_1,a_2 \in \sphere^{d-1}$ we have 
\[
(|a_1^T X_1|,\dotsc,|a_1^T X_n|) \stackrel{(d)}{=} (|a_1^T X_1|,\dotsc,|a_2^T X_n|).
\]
In particular, choosing $a_1 = a^*$ and $a_2 = \begin{bmatrix} 1 & 0 \cdots  0 \end{bmatrix}$ we have 
\[
(|a^{*T} X_1|,\dotsc,|a^{*T} X_n|) \stackrel{(d)}{=}  (|X_{1,1}|,\dotsc,|X_{n,1}|) 
\]
where $X_{i,1}$ denotes the first co-ordinate of $X_i$. Now, for $X_i \sim \Unif(\sphere^{d-1})$, $X_{i,1} = 2Z-1$ where $Z \sim \mathrm{Beta}(d/2,d/2)$ (this follows directly from the formula for the surface area of the hyperspherical cap, see for example~\cite{li2011concise}). So, $X_{1,1},\dotsc,X_{n,1}$ are i.i.d. samples from a shifted and rescaled beta distribution. Thus, we can explicitly calculate $\Expt_{X^n}[- \log \d]$, which is simply $\Expt_{Z^n}[- \log (\min |2Z_i-1|)] = \Expt_{Z^n}[\max_i -\log|2Z_i-1|]$ where $Z_i \sim \mathrm{Beta}(d/2,d/2)$. We will next show that $\Expt[-\log \d] \le 2\ln(n) + o(1)$. 

Let $Z \sim \mathrm{Beta}(d/2,d/2)$, and $W := -\ln |2Z - 1|$. Since $Z \in [0,1]$, we have $W \ge 0$. Recalling that the density of $Z$ is $f_Z(z) = \frac{(z(1-z))^{d/2-1}}{\mathrm{B}(d/2,d/2)}, 0 \le z \le 1$, we can then calculate the density $f_W(w)$ as follows. We have, for any $w \ge 0$,
\begin{align*}
    1 -F_W(w) &= \Prob(W > w) \\
    &= \Prob(- \ln |2Z-1| > w) \\
     &= \Prob\left(\frac{1-e^{-w}}{2} < Z < \frac{1+e^{-w}}{2}\right) \\
     &= F_Z\left(\frac{1+e^{-w}}{2} \right) - F_Z\left(\frac{1-e^{-w}}{2}\right). \label{eq:DistOfW}
\end{align*}
Since $f_W(w) = \frac{d F_W(w)}{dw}$, taking derivative with respect to $w$ on both sides of~\eqref{eq:DistOfW} yields  
 \begin{align}
     f_W(w) &= \frac{d  F_Z\left(\frac{1-e^{-w}}{2}\right)}{dw} - \frac{d  F_Z\left(\frac{1+e^{-w}}{2}\right)}{dw} \nonumber\\
    %  &= \frac{e^{-w}}{2}f_Z\left(\frac{1-e^{-w}}{2}\right) + \frac{e^{-w}}{2}f_Z\left(\frac{1+e^{-w}}{2}\right) \nonumber\\
     &= \frac{1}{\mathrm{B}(d/2,d/2)}e^{-w}\left(\frac{1 - e^{-2w}}{4}\right)^{d/2-1} 
 \end{align}
Since $W$ is sub-exponential, we expect the scaling of $\Expt[\max\{ W_1,\dotsc,W_n\}]$ with $n$ to be $O(\log n)$ (i.e. similar to the dependence on $n$ of expected maximum for an exponential distribution). We next formalize this using a standard technique for bounding maximum of independent random variables. First, we provide a useful claim.

\begin{claim}\label{claim:DensityOfFappox}
For all $w \geq 0 $, we have $\frac{1}{\mathrm{B}(d/2,d/2)} \left(\frac{1 - e^{-2w}}{4}\right)^{d/2-1} \le c_d $ where $c_d := 2\sqrt{d}$, and subsequently $f_W(w) \le c_d e^{-w}$.
\end{claim}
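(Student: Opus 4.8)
The plan is to turn the claimed uniform-in-$w$ bound into a single inequality for the Beta function and then settle that inequality by an elementary integral estimate. Throughout I would assume $d\ge 2$, which is the only regime that matters here: for $d=1$ the class $\mathrm{HS}_1$ is a two-element class already covered by Lemma~\ref{lem:finiteHypClass} (and in fact the stated bound genuinely fails for $d=1$, since $f_W$ is unbounded as $w\to 0$).

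First I would use monotonicity to locate the worst $w$. For $d\ge 2$ the exponent $d/2-1$ is nonnegative and $1-e^{-2w}$ increases from $0$ to $1$ as $w$ ranges over $[0,\infty)$, so $w\mapsto\left(\frac{1-e^{-2w}}{4}\right)^{d/2-1}$ is nondecreasing with supremum $\left(\frac14\right)^{d/2-1}=2^{2-d}$. Hence it suffices to show $\mathrm{B}(d/2,d/2)\ge 2^{1-d}/\sqrt d$, because this is exactly the statement $\frac{2^{2-d}}{\mathrm{B}(d/2,d/2)}\le 2\sqrt d=c_d$, which then gives $\frac{1}{\mathrm{B}(d/2,d/2)}\left(\frac{1-e^{-2w}}{4}\right)^{d/2-1}\le c_d$ for every $w\ge 0$.

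For the Beta-function inequality I would write $\mathrm{B}(d/2,d/2)=\int_0^1 (t(1-t))^{d/2-1}\,dt$ and substitute $t=(1+u)/2$, which yields $\mathrm{B}(d/2,d/2)=2^{1-d}\int_{-1}^1(1-u^2)^{d/2-1}\,du$, so the goal reduces to $\int_{-1}^1(1-u^2)^{d/2-1}\,du\ge 1/\sqrt d$. Restricting the integral to $|u|\le 1/\sqrt d\le 1$, on that range $1-u^2\ge 1-1/d$, and since the exponent is nonnegative, $(1-u^2)^{d/2-1}\ge (1-1/d)^{d/2-1}\ge (1-1/d)^{d/2}=\sqrt{(1-1/d)^d}\ge \tfrac12$, using the elementary fact that $(1-1/d)^d\ge 1/4$ for every integer $d\ge 2$ (the sequence is nondecreasing and equals $1/4$ at $d=2$). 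Therefore $\int_{-1}^1(1-u^2)^{d/2-1}\,du\ge \frac{2}{\sqrt d}\cdot\frac12=\frac{1}{\sqrt d}$, which is precisely what was needed; multiplying the resulting bound through by $e^{-w}$ and recalling the closed form of $f_W$ derived just before the claim gives $f_W(w)\le c_d e^{-w}$.

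I do not expect a real obstacle. The one point needing a little care is the choice of truncation radius $1/\sqrt d$: it is tuned exactly so that the crude estimate $(1-1/d)^d\ge 1/4$ reproduces the stated constant. A sharper route through Wendel's inequality for $\Gamma(d/2+1/2)/\Gamma(d/2)$ would replace $2\sqrt d$ by the smaller $\sqrt{2/\pi}\,\sqrt d$, but since $c_d=2\sqrt d$ already suffices for the subsequent maximal-inequality bound on $\Expt[-\log\d]$, the elementary estimate is all that is required.
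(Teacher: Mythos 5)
Your proof is correct and complete. The paper itself only sketches this claim (``simple properties of the beta function and a Stirling approximation''), the intended route being the duplication formula plus a Stirling or Wendel-type bound on $\Gamma(d/2+1/2)/\Gamma(d/2)$ to get $\mathrm{B}(d/2,d/2)\ge \sqrt{2\pi}\,2^{1-d}/\sqrt d$, which gives the sharper constant $\sqrt{2/\pi}\,\sqrt d$ you mention. Your argument replaces that with a self-contained elementary estimate: reduce the uniform-in-$w$ bound to the worst case $w=\infty$ (valid since the exponent $d/2-1$ is nonnegative for $d\ge 2$), rewrite $\mathrm{B}(d/2,d/2)=2^{1-d}\int_{-1}^1(1-u^2)^{d/2-1}\,du$, and lower-bound the integral by truncating to $|u|\le 1/\sqrt d$ and using $(1-1/d)^d\ge 1/4$. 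Every step checks out, including the substitution constant and the monotonicity of $(1-1/d)^d$. You also correctly observe that the claim as literally stated fails for $d=1$ (the density is unbounded near $w=0$ there), a caveat the paper glosses over; excluding $d=1$ is harmless since $\mathrm{HS}_1$ is a finite class handled by Lemma~\ref{lem:finiteHypClass}. The trade-off is exactly as you say: your route loses a constant factor relative to Wendel's inequality but requires nothing beyond the integral definition of the beta function, and $c_d=2\sqrt d$ is all that the downstream bound on $\Expt[-\log\d]$ needs.
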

\begin{proof}
Uses simple properties of the beta function and a Stirling approximation.
\end{proof}

For $n$ i.i.d. samples from $W$, denoted $W^n$, we next show that $\Expt[\max \{W_1,\dotsc,W_n\}] \le 2 \ln( 2 c_d n)$. We have 
\begin{align}
    \Expt[\max\{ W_1,\dotsc,W_n\}] &= 2\Expt\left[\ln \max \{e^{W_1/2},\dotsc, e^{W_n/2}\}\right] \\ 
    &\le 2 \ln\left(\Expt\left[\max  \{ e^{W_1/2},\dotsc, e^{W_n/2} \}\right]\right) \label{eq:JensenMaxIneq} \\
    &\le 2 \ln\left(\Expt\left[\sum_{i=1}^n e^{W_i}/2\right]\right) \\
    &= 2\ln\left(n\Expt\left[ e^{W_1/2}\right]\right) \\
    &= 2 \ln\left(n \int_0^{\infty} e^{w/2}f_W(w) dw \right) \\
    &\le 2 \ln\left(n \int_0^{\infty} e^{w/2}c_d e^{-w}dw \right) \label{eq:maximalIneqUB}\\ 
    &\le 2 \ln\left(c_d n  \int_0^{\infty} e^{-w/2} dw \right)=  2 \ln\left(2 c_d n\right).
\end{align}
where~\eqref{eq:JensenMaxIneq} follows from the Jensen inequality and~\eqref{eq:maximalIneqUB} follows from Claim~\ref{claim:DensityOfFappox}. Therefore, 
\begin{align}\label{eq:finalUBlogd}
\Expt[-\log \d] \le 2\ln n + 2 \ln (4 \sqrt{d}). 
\end{align}
Going back to~\eqref{eq:substituteExptlogd}, and using Lemma~\ref{lem:BlkKTUpBd} and~\eqref{eq:finalUBlogd} yields 
\begin{align}\label{eq:FinalRegHS}
    \max_{f \in \Fc} R_{n,P_X}(q_{\mix},f) \le (2 \ln 2d +1) \log n + d \log(48d) + \log\frac{\pi^2}{8}. 
\end{align}
%%%%%%%%%%%%%%%%%%%%%%%%%%%%%%%%%%%%%%%%%%%%%%%%%%%%%%%%%%%%%%%%%%%%%%%%%%%%%%%%%%%%%%%%%%%%%%%%%%%%%%%%%%%%%%%%%%%%%%%%%%%%%%%%%%%%%%%%%%%%%%%%%%%%%%%%%%%%%%%%%%%%%%%%%%%%%%%%%%%%%%%%%%%%%%%%%%%%%%%%%%%%%%%%%%%%%%%%%%%%%%%%%%%%%%%%%%%%%%%%%%%%%%%%%%%%%%%%%%%%%%%%%%%%%%%%%%%%%%%%%%%%%%%%%%%%%%%%%%%%%%%%%%%%%%%%%%%%%%%%%%%%%%%%%%%%%%%%%%%%%%%%%%%%%%%%%%%%%%%%%%%%%%%%%%%%%%%%%%%%%%%%%%%%%%%%%%%%%%%%%%%%%%%%%%%%%%%%%%%%%%%%%%%%%%%%%%%%%%%%%%%%%%%%%%%%%%%%%%%%%%%%%%%%%%%%%%%%%%%%%%
\subsection{Proof of Proposition~\ref{prop:AxisAlignedRectslogn}}

The flow of this proof is almost the same as that of Proposition~\ref{prop:Halfspaceslogn}.
By Proposition~\ref{prop:ChainRuleMixture}, we have for a fixed $f^* = (\av^*, \bv^*, \t_0^*, \t_1^*)$ 
\begin{align}
    R_{n,P_X}(q_{\mix},f^*) &= \Expt\left[\frac{p_{f^*}(Y^n|X^n)}{ \Expt_{F}[p_F(Y^n|\Xv^n)]}\right]  \label{eq:ChainRuleAxisAlignedRects} 
\end{align}
Fix some $f^* = (\av^*,\bv^*,\t_0^*,\t_1^*)$. Now, recall that $\Xv_j \in \Real^d, j \in [n]$. Denote the $i-$th coordinate of $\Xv_j$ by $\Xv_{j,i}$. Now, given the $n$ real numbers $\Xv_{1,i},\dotsc,\Xv_{n,i}$ (i.e. the $i-$th coordinates of $\Xv_1,\dotsc,\Xv_n$) we can arrange these in order as $\Xv^i_{(1)} \le \Xv^i_{(2)} \le \dotsc \le \Xv^i_{(n)}$. Thus, $\Xv^i_{(1)},\dotsc,\Xv^i_{(n)}$ denote the order statistics of the $i-$th component of $\Xv_1,\dotsc,\Xv_n$. Now, clearly, there exist unique $k_i, l_i \in {0,\dotsc,n}, l_i \ge k_i$ such that $\Xv^i_{(k_i)} \le a_i^* \le  \Xv^i_{(k_i+1)}$ and $\Xv^i_{(l_i)} \le b_i^* \le  \Xv^i_{(l_i+1)}$. This holds for all $i \in [d]$. We then make the following claim.
\begin{claim}\label{claim:ContinuityOrderStats}
For any $\av, \bv$ that are such that for all $i \in [d], \Xv^i_{(k_i)} \le a_i \le  \Xv^i_{(k_i+1)}, \Xv^i_{(l_i)} \le b_i \le  \Xv^i_{(l_i+1)}$ and $a_i \le b_i$, we have $q_{\av,\bv,\mix}(Y^i|\Xv^i) = q_{\av^*,\bv^*,\mix}(Y^i|\Xv^i)$. 
\end{claim}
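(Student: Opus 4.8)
The plan is to reprise the structure of the proof of Claim~\ref{claim:continuityHS}. Write $m$ for the time horizon to avoid a clash with the coordinate index $i$. Since
\[
p_{\av,\bv,\t_0,\t_1}(y^m\mid \xv^m) \;=\; \prod_{t=1}^m \t_{g_{\av,\bv}(\xv_t)}^{\,y_t}\bigl(1-\t_{g_{\av,\bv}(\xv_t)}\bigr)^{1-y_t},
\]
the law $p_{\av,\bv,\t_0,\t_1}$ depends on $(\av,\bv)$ only through the label sequence $\bigl(g_{\av,\bv}(\Xv_1),\dots,g_{\av,\bv}(\Xv_m)\bigr)$, and $q_{\av,\bv,\mix}(Y^m\mid\Xv^m)$ is just this quantity integrated against $w(\t_0)w(\t_1)$. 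Hence it suffices to show that $g_{\av,\bv}$ and $g_{\av^*,\bv^*}$ agree on the whole sample, i.e. $g_{\av,\bv}(\Xv_j)=g_{\av^*,\bv^*}(\Xv_j)$ for every $j\in[n]$; the claimed equality of $q_{\av,\bv,\mix}$ and $q_{\av^*,\bv^*,\mix}$ then follows (for every prefix length $m\le n$) by integrating.

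The core is a one-line observation. Fix $j\in[n]$ and a coordinate $i\in[d]$, and write $\Xv_{j,i}=\Xv^i_{(r)}$ for its rank $r$ among $\Xv_{1,i},\dots,\Xv_{n,i}$. By the definition of $k_i$ and the hypothesis on $\av$, both $a_i$ and $a_i^*$ lie in $[\Xv^i_{(k_i)},\Xv^i_{(k_i+1)}]$; therefore if $r\le k_i$ then $\Xv_{j,i}\le \Xv^i_{(k_i)}\le \min(a_i,a_i^*)$, and if $r\ge k_i+1$ then $\Xv_{j,i}\ge \Xv^i_{(k_i+1)}\ge\max(a_i,a_i^*)$, so in either case $\indic\{a_i\le \Xv_{j,i}\}=\indic\{a_i^*\le \Xv_{j,i}\}$. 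The identical argument with $l_i$, $b_i$, $b_i^*$ in place of $k_i$, $a_i$, $a_i^*$ gives $\indic\{\Xv_{j,i}\le b_i\}=\indic\{\Xv_{j,i}\le b_i^*\}$. Taking the product over the $d$ coordinates of $\Xv_j$ yields
\[
g_{\av,\bv}(\Xv_j)=\prod_{i=1}^d \indic\{a_i\le \Xv_{j,i}\le b_i\}=\prod_{i=1}^d \indic\{a_i^*\le \Xv_{j,i}\le b_i^*\}=g_{\av^*,\bv^*}(\Xv_j),
\]
as needed.

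The only delicate point is boundary/tie behaviour: when some $\Xv_{j,i}$ equals one of $a_i,a_i^*,b_i,b_i^*$, or when two of the $i$-th coordinates coincide so that $k_i$ or $l_i$ is not uniquely pinned down. Since $\Xv_1^n\sim\Unif[0,1]^d$ i.i.d., all such events have probability zero, and because the claim is used only inside the expectation $R_{n,P_X}(q_{\mix},f^*)=\Expt[\log(p_{f^*}/\Expt_F[p_F])]$ they may be discarded; equivalently one restricts to $a_i,a_i^*\in(\Xv^i_{(k_i)},\Xv^i_{(k_i+1)})$ (and likewise for $b_i,b_i^*$), on which all the inequalities above are strict and the argument is verbatim. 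I would flag this measure-zero caveat explicitly. With the claim established, the remainder of the proof of Proposition~\ref{prop:AxisAlignedRectslogn} follows the halfspace template: one lower-bounds $\Expt_F[p_F(Y^n\mid\Xv^n)]\ge q_{\av^*,\bv^*,\mix}(Y^n\mid\Xv^n)\cdot\Prob_{\Av,\Bv}\bigl(\text{each }A_i,B_i\text{ lands in the correct gap}\bigr)$, and the latter probability factorizes over coordinates into gap-length terms whose expected $\log$ is $O(\log n)$ per coordinate, giving the stated bound $(2d+1)\log(n+1)+\log\frac{\pi^2}{8}$.
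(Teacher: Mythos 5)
Your proof is correct and follows essentially the same route as the paper's: reduce the claim to the equality of the label sequences $g_{\av,\bv}(\Xv_j)=g_{\av^*,\bv^*}(\Xv_j)$ for all $j$, then verify coordinate-wise that sandwiching $a_i,a_i^*$ (resp.\ $b_i,b_i^*$) between the same consecutive order statistics forces the indicators to agree. Your explicit flagging of the measure-zero tie/boundary cases is a small refinement the paper leaves implicit, but it does not change the argument.
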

\begin{proof}
To prove this claim, note first that from the definition of $q_{\av,\bv,\mix}$ in~\eqref{eq:mixoverThetasOnly}, we have that if
\begin{align}\label{eq:continuityorderstatsSame}
(g_{\av^*,\bv^*}(\Xv_1),\dotsc,g_{\av^*,\bv^*}(\Xv_n)) = (g_{\av,\bv}(\Xv_1),\dotsc,g_{\av,\bv}(\Xv_n))
\end{align}
the claim holds. Then, for $\Xv_1$, we have $g_{\av^*,\bv^*}(\Xv_1) = \prod_{i=1}^d \indic\{a_i^* \le \Xv_{1,i} \le b_i^*\}$. Now, for any $i \in [d]$, since $\Xv^i_{(k_i)} \le a_i^* \le  \Xv^i_{(k_i+1)}$, this implies that if $\Xv^i_{(k_i)} \le a_i \le  \Xv^i_{(k_i+1)}$ we have that $\{j: \Xv_{j,i} \ge a^*_i\} = \{j: \Xv_{j,i} \ge a_i\}$ and therefore $\indic\{a_i^* \le \Xv_{1,i}\} = \indic\{a_i \le \Xv_{1,i}\}$. Similarly $\{j: \Xv_{j,i} \le b^*_i\} = \{j: \Xv_{j,i} \le b_i\}$ if $\Xv^i_{(l_i)} \le b_i \le  \Xv^i_{(l_i+1)}$, and consequently $\indic\{ \Xv_{1,i} \le b_i^*\} = \indic\{ \Xv_{1,i} \le b_i\}$. This implies that for any $\av,\bv$ satisfying the conditions of the claim, we have $\indic\{a_i^* \le \Xv_{1,i} \le b_i^*\} = \indic\{ a_i \le \Xv_{1,i} \le b_i\}$ for all $i \in [d]$, thereby implying that $g_{\av^*,\bv^*}(\Xv_1) = g_{\av,\bv}(\Xv_1)$. The same argument applied to $\Xv_2,\dotsc,\Xv_n$ implies~\eqref{eq:continuityorderstatsSame} and therefore the claim. 
\end{proof}

Now, we have 
\begin{align}
     q_{\Av,\Bv,\mix}(Y^n|\Xv^n) &\ge   q_{\Av,\Bv,\mix}(Y^n|\Xv^n) \prod_{i=1}^d \indic\{ \Xv^i_{(k_i)} \le A_i \le  \Xv^i_{(k_i+1)} \}\indic\{ \Xv^i_{(l_i)} \le B_i \le  \Xv^i_{(l_i+1)} \} \nonumber \\
     &= q_{\av^*,\bv^*,\mix}(Y^n|\Xv^n) \prod_{i=1}^d \indic\{ \Xv^i_{(k_i)} \le A_i \le  \Xv^i_{(k_i+1)} \}\indic\{ \Xv^i_{(l_i)} \le B_i \le  \Xv^i_{(l_i+1)}\} \label{eq:UseClaimOrderStats}
\end{align}
where~\eqref{eq:UseClaimOrderStats} follows from Claim~\ref{claim:ContinuityOrderStats}. Now, we have 
\begin{align}
    \Expt_F&\left[p_F(Y^n|\Xv^n)\right]
    \nonumber\\
    &= \Expt_{\Av,\Bv}\left[q_{\Av,\Bv,\mix}(Y^n|\Xv^n)\right] \nonumber \\
    &\ge \Expt_{\Av,\Bv}\left[q_{\av^*,\bv^*,\mix}(Y^n|\Xv^n) \prod_{i=1}^d \indic\{ \Xv^i_{(k_i)} \le A_i \le  \Xv^i_{(k_i+1)} \}\indic\{ \Xv^i_{(l_i)} \le B_i \le  \Xv^i_{(l_i+1)}\}\right] \label{eq:ContinuityArndActualAB} \\
     &= q_{\av^*,\bv^*,\mix}(Y^n|\Xv^n) \Expt_{\Av,\Bv}\left[\prod_{i=1}^d \indic\{ \Xv^i_{(k_i)} \le A_i \le  \Xv^i_{(k_i+1)} \}\indic\{ \Xv^i_{(l_i)} \le B_i \le  \Xv^i_{(l_i+1)}\}\right] \nonumber \\
     &=  q_{\av^*,\bv^*,\mix}(Y^n|\Xv^n) \prod_{i=1}^d \Expt_{A_i,B_i}\left[ \indic\{ \Xv^i_{(k_i)} \le A_i \le  \Xv^i_{(k_i+1)} \}\indic\{ \Xv^i_{(l_i)} \le B_i \le  \Xv^i_{(l_i+1)}\}\right] \label{eq:indepABcomponents} \\
     &\ge q_{\av^*,\bv^*,\mix}(Y^n|\Xv^n) \prod_{i=1}^d (\Xv^i_{(k_i+1)}-\Xv^i_{(k_i)})(\Xv^i_{(l_i+1)}-\Xv^i_{(l_i)})/2 \label{eq:AreaUnifAB}
\end{align}
 where~\eqref{eq:ContinuityArndActualAB} follows from~\eqref{eq:UseClaimOrderStats},~\eqref{eq:indepABcomponents} follows since the $(A_i,B_i)$ are all mutually independent, and~\eqref{eq:AreaUnifAB} follows since 
  \begin{align}
    \Expt_{A_i,B_i}&\left[ \indic\{ \Xv^i_{(k_i)} \le A_i \le  \Xv^i_{(k_i+1)} \}\indic\{ \Xv^i_{(l_i)} \le B_i \le  \Xv^i_{(l_i+1)}\}\right] \nonumber\\
    &=\Bigg\{ \begin{array}{lr}
      (\Xv^i_{(k_i+1)}-\Xv^i_{(k_i)})(\Xv^i_{(l_i+1)}-\Xv^i_{(l_i)})/2 & \text{for } l_i = k_i\\
       (\Xv^i_{(k_i+1)}-\Xv^i_{(k_i)})(\Xv^i_{(l_i+1)}-\Xv^i_{(l_i)}) & \text{for } l_i > k_i.
         \end{array}
   \end{align}
 By substituting~\eqref{eq:AreaUnifAB} into~\eqref{eq:ChainRuleAxisAlignedRects}, we get 
 \begin{align}
     R_n(q_{\mix}) \le
     \Expt\left[\log \frac{p_{\av^*,\bv^*,\t_0,\t_1}(Y^n|\Xv^n)}{q_{\av^*,\bv^*,\mix}(Y^n|\Xv^n)}\right] + \sum_{i=1}^d &\Expt_{\Xv^n} \left[\log \frac{2}{\Xv^i_{(k_i+1)}-\Xv^i_{(k_i)}}\right] \nonumber\\
     &+ \sum_{i=1}^d \Expt_{\Xv^n} \left[\log \frac{2}{\Xv^i_{(l_i+1)}-\Xv^i_{(l_i)}}\right] \label{eq:substituteExpectationsInReg}
 \end{align}
 Now, consider $\Expt_{\Xv^n} \left[\log \frac{2}{\Xv^i_{(k_i+1)}-\Xv^i_{(k_i)}}\right]$. Clearly, this quantity depends only on the $i-$th coordinates of $\Xv^n$, $\Xv_{1,i},\dotsc,\Xv_{n,i}$. Since $\Xv^n \sim \Unif[0,1]^d$ i.i.d, we can see that $\Xv_{1,i},\dotsc,\Xv_{n,i} \sim \Unif[0,1]$ i.i.d. 
 Now, it is known that for $Z^n \sim \Unif[0,1]$ i.i.d., $Z_{(k+1)}-Z_{(k)} \sim \mathrm{Beta}(1,n)$ for all $k \in \{0,\dotsc,n\}$. Moreover, for $Z' \sim \mathrm{Beta}(\a,\b)$, it can be shown that $\Expt[-\log Z'] \le \log(\a+\b)$. Using these two results, we can conclude that 
 \begin{align}\label{eq:BetaDistExpt}
 \Expt_{\Xv^n} \left[\log \frac{1}{\Xv^i_{(k_i+1)}-\Xv^i_{(k_i)}}\right],  \Expt_{\Xv^n} \left[\log \frac{1}{\Xv^i_{(l_i+1)}-\Xv^i_{(l_i)}}\right] \le \log(n+1), i \in [d].
 \end{align}
 Finally, using Lemma~\ref{lem:BlkKTUpBd} and~\eqref{eq:BetaDistExpt} in~\eqref{eq:substituteExpectationsInReg} yields 
 \begin{align}\label{eq:finalEqProofRects}
       \max_{f \in \Fc} R_{n,P_X}(q_{\mix},f) \le (2d+1)\log (n+1) + \log\frac{\pi^2}{8}
 \end{align}
 as required.
%%%%%%%%%%%%%%%%%%%%%%%%%%%%%%%%%%%%%%%%%%%%%%%%%%%%%%%%%%%%%%%%%%%%%%%%%%%%%%%%%%%%%%%%%%%%%%%%%%%%%%%%%%%%%%%%%%%%%%%%%%%%%%%%%%%%%%%%%%%%%%%%%%%%%%%%%%%%%%%%%%%%%%%%%%%%%%%%%%%%%%%%%%%%%%%%%%%%%%%%%%%%%%%%%%%%%%%%%%%%%%%%%%%%%%%%%%%%%%%%%%%%%%%%%%%%%%%%%%%%%%%%%%%%%%%%%%%%%%%%%%%%%%%%%%%%%%%%%%%%%%%%%%%%%%%%%%%%%%%%%%%%%%%%%%%%%%%%%%%%%%%%%%%%%%%%%%%%%%%%%%%%%%%%%%%%%%%%%%%%%%%%%%%%%%%%%%%%%%%%%%%%%%%%%%%%%%%%%%%%%%%%%%%%%%%%%%%%%%%%%%%%%%%%%%%%%%%%%%%%%%%%%%%%%%%%%%%%%%%%%%%%%%%%%%%%%%%%%%%%%%%%%%%%%%%%%%%%%%%%%%%%%%%%%%%%%%%%%%%%%%%%%%%%%%%%%%%%%%%%%%%%%%%%%%%%%%%%%%%%%%%%%%%%%%%%%%%%%%%%%%%%%%%%%%%%%%%%%%%%%%%%%%%%%%%%%%%%%%%%%%%%%%%%%%%%%%%%%%%%%%%%%%%%%%%%%%%%%%%%%%%%%%%%%%%%%%%%%%%%%%%%%%%%%%%%%%%%%%%%%%%%%%%%%%%%%%%%%%%%%%%%%%%%%%%%%%%%%%%%%%%%%%%%%%%%%%%%%%%%%%%%%%%%%%%%%%%%%%%%%%%%%%%%%%%%%%%%%%%%%%%%%%%%%%%%%%%%%%%%%%%%%%%%%%%%%%%%%%%%%%%%%%%%%%%%%%%%%%%%%%%%%%%%%%%%%%%%%%%%%%%%%%%%%%%%%%%%%%%%%%%%%%%%%%%%%%%%%%%%%%%%%%%%%%%%%%%%%%%%%%%%%%
\section{Skipped Proofs from Section~\ref{sec:genupperbdpf}}

\subsection{Proof of Lemma~\ref{lem:regauxnoncausal}}

We have
\begin{align}
    \log&\frac{1}{\qt_{\mix}(Y_i|X^i,Y^{i-1},\Xt^n)} - \sum_{i=1}^n\log\frac{1}{p_f(Y_i|X_i)} \nonumber\\
    &\qquad= \log \frac{p_{f^*}(Y^n|X^n)}{\frac{1}{|\Pc(\Xt^n)|}\sum_{j=1}^{|\Pc(\Xt^n)|} \int_{0}^1 \int_{0}^1 p_{\gt_j,\t_0,\t_1}(Y^n|X^n) d\t_0d\t_1} \nonumber \\
    &\qquad \le d\log(en/d) + \log \frac{p_{f^*}(Y^n|X^n)}{\sum_{j=1}^{|\Pc(\Xt^n)|} \int_{0}^1 \int_{0}^1 p_{\gt_j,\t_0,\t_1}(Y^n|X^n) d\t_0d\t_1} \label{eq:sauershelahauxprob}.
\end{align}
So far, the construction and analysis of $\qt_{\mix}$ has paralleled the analysis of the mixture $q_{\mix}$ in Section~\ref{subsec:noncausalsideinfo}. There, the next step was to claim that since $\exists j \in \left[|\Pc_n(X^n)|\right]$ such that $g_j(X^n) = g^*(X^n)$, invoking Lemma~\ref{lem:BlkKTUpBd} yielded an $O(\log n)$ upper bound for the second term in~\eqref{eq:sauershelahauxprob}. Unfortunately we cannot claim the same in the current case. However, we can claim that $\exists \tilde{j} \in \left[|\Pc_n(\Xt^n)|\right]$ such that 
\[
\gt_{\tilde{j}}(\Xt^n) = g^*(\Xt^n).
\]
Since $d_H(\gt_{\tilde{j}}(\Xt^n), g^*(\Xt^n)) = 0$ and $\Xt^n \stackrel{(d)}{=} X^n$, we intuitively expect  $d_H(\gt_{\tilde{j}}(X^n), g^*(X^n))$ to not be too large. We now quantify this intuition more precisely. For brevity, denote 
\[
\gt := \gt_{\tilde{j}}.
\]
We have from~\eqref{eq:sauershelahauxprob}
\begin{align}
    \log\frac{1}{\qt_{\mix}(Y_i|X^i,Y^{i-1},\Xt^n)}& - \sum_{i=1}^n\log\frac{1}{p_f(Y_i|X_i)}
    \nonumber\\
    &\le d\log(en/d) + \log \frac{p_{f^*}(Y^n|X^n)}{ \int_{0}^1 \int_{0}^1 p_{\gt,\t_0,\t_1}(Y^n|X^n) d\t_0d\t_1} \label{eq:mismatchedganalysis} \\
    &\le d\log(en/d) + 8\log(6n+2)d_H(g^*(X^n),\gt(X^n)) + 6 \label{eq:subExpectedHammDist}
\end{align}
Where to get from~\eqref{eq:mismatchedganalysis} to~\eqref{eq:subExpectedHammDist} we follow the exact same steps employed in the proof of Lemma~\ref{lem:PxKnownReg} from~\eqref{eq:SubVcCoveringTwo} to~\eqref{eq:SubstituteExpectedDelta}.

We now focus on  $d_H(g^*(X^n),\gt(X^n))$ and establish that
\begin{align}
\Expt_{X^n,\Xt^n}[d_H(g^*(X^n),\gt(X^n))] &\le 2C\sqrt{dn} \label{eq:ExpectedSupremumVCclass} \\
d_H(g^*(X^n),\gt(X^n)) &\le 2C\sqrt{dn} + 2\sqrt{2n \log \frac{2}{\d}} \text{  , with probability  }\ge 1 - \d \label{eq:SupremumVCclassConc}
\end{align}
for an absolute constant $C \le 250$.

For any $g_1, g_2 \in \Gc$ define 
\begin{align*}
    \D_n(g_1,g_2) &:= \frac{1}{n}d_H(g_1(X^n),g_2(X^n)) \\ 
    \Dt_n(g_1,g_2) &:= \frac{1}{n}d_H(g_1(\Xt^n),g_2(\Xt^n))  \\
    \D(g_1,g_2) &:= \Prob(g_1(X) \neq g_2(X))
    \end{align*}
    for $X \stackrel{(d)}{=} X_1 \stackrel{(d)}{=} \Xt_1$. Recall that $\Dt_n(\gt,g^*) = 0$ by design, and $\D(g_1,g_2) = \Expt_{X^n}[\D_n(g_1,g_2)] = \Expt_{\Xt^n}[\Dt_n(g_1,g_2)]$. We then have
\begin{align}
    \D_n(g^*,\gt) &= \D_n(g^*,\gt) - \Dt_n(g^*,\gt) \nonumber\\
    &\le \sup_{g_1,g_2 \in \Gc}\left|\D_n(g_1,g_2) - \Dt_n(g_1,g_2)\right| \nonumber \\
    &\le \sup_{g_1,g_2 \in \Gc}\left|\D_n(g_1,g_2) - \D(g_1,g_2)\right| +  \sup_{g_1,g_2 \in \Gc}\left| \Dt_n(g_1,g_2) - \D(g_1,g_2)\right|. \label{eq:suptriangleineq}
\end{align}
We first establish~\eqref{eq:ExpectedSupremumVCclass}. Taking expectations on both sides of~\eqref{eq:suptriangleineq}.
\begin{align}
    \Expt_{X^n,\Xt^n}&[\D_n(g^*,\gt)] \nonumber\\
    &\le \Expt_{X^n,\Xt^n}\left[\sup_{g_1,g_2 \in \Gc}\left|\D_n(g_1,g_2) - \D(g_1,g_2)\right| +  \sup_{g_1,g_2 \in \Gc}\left| \Dt_n(g_1,g_2) - \D(g_1,g_2)\right|\right] \nonumber \\
    % &= \Expt_{X^n,\Xt^n}\left[\sup_{g_1,g_2 \in \Gc}\left|\D_n(g_1,g_2) - \D(g_1,g_2)\right|\right] + \Expt_{X^n,\Xt^n}\left[\sup_{g_1,g_2 \in \Gc}\left|\Dt_n(g_1,g_2) - \D(g_1,g_2)\right|\right] \nonumber \\
    % &= \Expt_{X^n}\left[\sup_{g_1,g_2 \in \Gc}\left|\D_n(g_1,g_2) - \D(g_1,g_2)\right|\right] + \Expt_{\Xt^n}\left[\sup_{g_1,g_2 \in \Gc}\left|\Dt_n(g_1,g_2) - \D(g_1,g_2)\right|\right] \nonumber \\
    &= 2\Expt_{X^n}\left[\sup_{g_1,g_2 \in \Gc}\left|\D_n(g_1,g_2) - \D(g_1,g_2)\right|\right] \label{eq:samedistxtilde}
\end{align} 
where~\eqref{eq:samedistxtilde} follows by linearity of expectation and since $X^n \stackrel{(d)}{=} \Xt^n$. 
Finally, we note that 
\begin{align*}
\D_n(g_1,g_2) = \frac{d_H(g_1(X^n),g_2(X^n))}{n} &= \frac{1}{n}\sum_{i=1}^n \indic\{g_1(X_i) \neq g_2(X_i)\} \\
\D(g_1,g_2) &= \Expt[\indic\{g_1(X) \neq g_2(X)\}]
\end{align*}
and the class of boolean functions $\{x \mapsto \indic\{g_1(x) \neq g_2(x)\}, (g_1,g_2) \in \Gc \times \Gc\}$ has VC dimension $\le 2d$. Thus, we can now invoke~\cite[Theorem 8.3.23]{vershynin2018high},~\cite[Theorem 13.7]{Boucheron--Lugosi--Massart13} to claim that 
\begin{align}\label{eq:SupremumGSquareUB}
    \Expt_{X^n}\left[\sup_{g_1,g_2 \in \Gc}|\D_n(g_1,g_2) - \D(g_1,g_2)|\right] \le C\sqrt{\frac{d}{n}} 
\end{align}
for a universal constant $C \le 250$. Consequently, taking expectations on both sides of~\eqref{eq:subExpectedHammDist} and substituting~\eqref{eq:SupremumGSquareUB}, followed by a supremum over $f^*$ and $P_X$ yields 
\begin{align}
    \tilde{R}_n(\qt_{\mix}) \le d\log (en/d) + 16C\sqrt{nd}\log(6n+2)
\end{align}
as required. 

To establish~\eqref{eq:SupremumVCclassConc}, we invoke Theorem 12.1 of~\cite{Boucheron--Lugosi--Massart13} to assert 
\begin{align}
    \sup_{g_1,g_2 \in \Gc}\left|\D_n(g_1,g_2) - \D(g_1,g_2)\right| \le \Expt \left[\sup_{g_1,g_2 \in \Gc}\left|\D_n(g_1,g_2) - \D(g_1,g_2)\right|\right] + \sqrt{\frac{2}{n} \log \frac{2}{\d}}
\end{align}
with probability $1-\d/2$. The same high-probability bound for  $\sup_{g_1,g_2 \in \Gc}\left|\Dt_n(g_1,g_2) - \D(g_1,g_2)\right|$ along with a union bound and~\eqref{eq:SupremumGSquareUB} yields~\eqref{eq:SupremumVCclassConc}. Substituting~\eqref{eq:SupremumVCclassConc} into~\eqref{eq:subExpectedHammDist} yields the second part of the lemma.
%%%%%%%%%%%%%%%%%%%%%%%%%%%%%%%%%%%%%%%%%%%%%%%%%%%%%%%%%%%%%%%%%%%%%%%%%%%%%%%%%%%%%%%%%%%%%%%%%%%%%%%%%%%%%%%%%%%%%%%%%%%%%%%%%%%%%%%%%%%%%%%%%%%%%%%%%%%%%%%%%%%%%%%%%%%%%%%%%%%%%%%%%%%%%%%%%%%%%%%%%%%%%%%%%%%%%%%%%%%%%%%%%%%%%%%%%%%%%%%%%%%%%%%%%%%%%%%%%%%%%%%%%%%%%%%%%%%%%%%%%%%%%%%%%%%%%%%%

\subsection{Proof of Theorem~\ref{thm:genupperbd}}

We have 
\begin{align}
    \sum_{i=1}^n \log &\frac{1}{q^*(Y_i|X^i,Y^{i-1})} - \sum_{i=1}^n \log\frac{1}{p_{f}(Y_i|X_i)} \nonumber\\
    &\le  \sum_{i=2}^n \log \frac{1}{q^*(Y_i|X^i,Y^{i-1})} - \sum_{i=2}^n \log\frac{1}{p_{f}(Y_i|X_i)} + 1 \nonumber \\
    &= \sum_{j=1}^{\log n} \left[\sum_{i=2^{j-1}+1}^{2^j} \log \frac{1}{q^*(Y_i|X^i,Y^{i-1})} - \sum_{i=2^{j-1}+1}^{2^j} \log\frac{1}{p_{f}(Y_i|X_i)} \right]  \label{eq:SplittingIntoEpochs}%\nonumber \\
    % &= \sum_{j=1}^{\log n}  \Expt_{X^n,Y^n} \left[\sum_{i=2^{j-1}+1}^{2^j} \log \frac{1}{q^*(Y_i|X^i,Y^{i-1})} - \sum_{i=2^{j-1}+1}^{2^j} \log\frac{1}{p_{h^*}(Y_i|X_i)} \right]  \nonumber \\
\end{align}
Taking expectation on both sides of~\eqref{eq:SplittingIntoEpochs}, we have 
\begin{align}
    R_{n,P_X}(q^*,f) &\le \sum_{j=1}^{\log n}  \Expt \left[\sum_{i=2^{j-1}+1}^{2^j} \log \frac{1}{q^*(Y_i|X^i,Y^{i-1})} - \hspace{-0.8em}\sum_{i=2^{j-1}+1}^{2^j} \log\frac{1}{p_{f}(Y_i|X_i)} \right] + 1 \nonumber\\
    &\le \sum_{j=1}^{\log n} \tilde{R}_{2^{j-1}}(\qt_{\mix}) \label{eq:SimilarityToAuxProb} 
\end{align}
where, recall, the expectations in the first inequality are w.r.t. $X_{2^{j-1}+1}^{2^j}, Y_{2^{j-1}+1}^{2^j}, X_1^{2^{j-1}}$ and~\eqref{eq:SimilarityToAuxProb} follows since $q^*$ is exactly $\qt_{\mix}$. Using Lemma~\ref{lem:regauxnoncausal}, we have for any $n' \ge 2$ 
\[
\tilde{R}_{n'}(\qt_{\mix}) \le d \log (en'/d) + 16C\sqrt{dn'}\log(6n'+2) \le  d \log n' + 64 C\sqrt{dn'}\log(n')
\]
and therefore, from~\eqref{eq:SimilarityToAuxProb}
\begin{align}
     R_{n,P_X}(q^*,f) &\le \sum_{j=2}^{\log n} \left(d(j-1) + 64 C\sqrt{d} 2^{(j-1)/2}(j-1)\right)  + 2 \nonumber \\
     &\le d(\log n)^2 + 64 C\sqrt{d} \int_{1}^{\log n + 1} x2^{x/2} dx + 2 \label{eq:SumToIntegral}\\
     &\le d(\log n)^2 + 125 C\sqrt{dn}\log(2n) + 2 \label{eq:SumOfEpochRegrets} 
\end{align}
and finally taking supremum over $f$ and $P_X$ concludes the first part of the proof.

For the second part, we have from Lemma~\ref{lem:regauxnoncausal} for any $j \ge 2$,
\begin{align}
\sum_{i=2^{j-1}+1}^{2^j} \log \frac{1}{q^*(Y_i|X^i,Y^{i-1})} &- \sum_{i=2^{j-1}+1}^{2^j} \log\frac{1}{p_{f}(Y_i|X_i)}  \nonumber\\
&\le d(j-1) + 64 2^{(j-1)/2}(j-1)\left(C \sqrt{d} + \sqrt{2 \log \frac{2\log n}{\d}} \right) 
\end{align}
with probability $1 - \d/\log n$. Then from~\eqref{eq:SplittingIntoEpochs}, a union bound and the calculations in~\eqref{eq:SumToIntegral} and~\eqref{eq:SumOfEpochRegrets} we have 
\begin{align}
     \sum_{i=1}^n \log &\frac{1}{q^*(Y_i|X^i,Y^{i-1})} - \sum_{i=1}^n \log\frac{1}{p_{f}(Y_i|X_i)} \nonumber\\
     &\qquad\le d(\log n)^2 + 125 C\sqrt{dn}\log(2n) \left(C \sqrt{d} + \sqrt{2 \log \frac{2\log n}{\d}} \right) + 2
\end{align}
with probability $\ge 1-\d$ as required.
%%%%%%%%%%%%%%%%%%%%%%%%%%%%%%%%%%%%%%
%%%%%%%%%%%%%%%%%%%%%%%%%%%%%%%%%%%%%%%%%%%%%%%%%%%%%%%%%%%%%%%%%%%%%%%%%%%%%%%%%%%%%%%%%%%%%%%%%%%%%%%%%%%%%%%%%%%%%%%%%%%%%%%%%%%%%%%%%%%%%%%%%%%%%%%%%%%%%%%%%%%%%%%%%%%%%%%%%%%%%%%%%%%%%%%%%%%%%%%%%%%%%%%%%%%%%%%%%%%%%%%%%%%%%%%%%%%%%%%%%%%%%%%%%%%%%%%%%%%%%%%%%%%%%%%%%%%%%%%%%%%%%%%%%%%%%%%%%%%%%%%%%%%%%%%%%%%%%%%%%%%%%%%%%%%%%%%%%%%%%%%%%%%%%%%%%%%%%%%%%%%%%%%%%%%%%%%%%%%%%%%%%%%%%%%%%%%%%%%%%%%%%%%%%%%%%%%%%%%%%%%%%%%%%%%%%%%%%%%%%%%%%%%%%%%%%%%%%%%%%%%%%%%%%%%%%%%%%%%%%%%%%%%%%%%%%%%%%%%%%%%%%%%%%%%%%%%%%%%%%%%%%%%%%%%%%%%%%%%%%%%%%%%%%%%%%%%%%%%%%%%%%%%%%%%%%%%%%%%%%%%%%%%%%%%%%%%%%%%%%%%%%%%%%%%%%%%%%%%%%%%%%%%%%%%%%%%%%%
\section{Skipped Proofs from Section~\ref{sec:lowerbd}}

\subsection{Proof of Lemma~\ref{lem:MinimaxLBRegretFinite}}

We have from~\eqref{eq:CondRedCapThmOne}
\begin{align}\label{eq:RegretLBCapacity}
    R_n \ge \max_{P_X,P_F} I(\T_0,\T_1,G; Y^n|X^n)
\end{align}
and therefore a lower bound on $I(\T_0,\T_1,G; Y^n|X^n)$ for any choice of $P_X$ and $P_F$ provides a lower bound on $R_n$. We will choose $P_X$ to be the uniform distribution on $\{1,\dotsc,m\}$ so that 
\[
X \sim \Unif\left([m]\right).
\]
Consider now the following distribution $P_F$ over the hypothesis class $F = (\T_0,\T_1,G)$ that has
\begin{align}
    (\T_0, \T_1) &\sim \Unif\left(\t_0,\t_1 \in [0,1]\times [0,1] \cap \left\{\t_1-\t_0\ge 1/2\right\}\right) \label{eq:DistOfThetas}\\
     G  &\perp \!\!\! \perp (\T_0,\T_1) \text{ and } G \sim \Unif\left\{2^{[m]}\right\} \label{eq:DistOfG}
\end{align}
We then have 
\begin{align}
    I&(\T_0,\T_1,G; Y^n|X^n)\nonumber\\
    &= H(\T_0,\T_1,G|X^n) - H(\T_0,\T_1,G|X^n,Y^n) \nonumber\\
    &= h(\T_0,\T_1) + H(G) - H(\T_0,\T_1,G|X^n,Y^n)  \label{eq:indepOfHypotheses} \\
    &\ge  h(\T_0, \T_1) + H(G) - H(G|X^n,Y^n) - H(\T_0|G,X^n,Y^n) - H(\T_1|G,X^n,Y^n) \label{eq:chainruleEnt} \\
    &= 3 + m - H(G|X^n,Y^n) - h(\T_0|G,X^n,Y^n) - h(\T_1|G,X^n,Y^n) \label{eq:marginalEnt}
\end{align}
where~\eqref{eq:indepOfHypotheses} follows since the $G  \perp \!\!\! \perp (\T_0, \T_1)$ and both $(\T_0,\T_1), G  \perp \!\!\! \perp X^n$,~\eqref{eq:chainruleEnt} follows from the chain rule of entropy and because conditioning reduces entropy, and~\eqref{eq:marginalEnt} follows since by the distribution of $(\T_0,\T_1)$ and $G$ in~\eqref{eq:DistOfThetas},~\eqref{eq:DistOfG}, we have $(\T_0,\T_1)$ is uniform over a set with area $\frac{1}{8}$, and $G \sim 
\Unif(\Gc)$ with $|\Gc| = 2^m$. 

% Now, we have for any estimator $\widehat{G}(X^n,Y^n)$ of $G$, 
% \begin{align}
%     H(G|X^n,Y^n) &\le H(G|\widehat{G}(X^n,Y^n) \label{eq:dataprocineq}\\
%     &\le \Prob(G \neq \widehat{G}(X^n,Y^n))m + 1 \label{eq:FanoIneq}
% \end{align}
% where~\eqref{eq:dataprocineq} follows by the data processing inequality, and~\eqref{eq:FanoIneq} follows from the Fano inequality~\cite{10.5555/1146355}. 

We also have 
\begin{align}
    h(\T_0|X^n,Y^n,G) &= \sum_{g \in \Gc} h(\T_0|X^n,Y^n,G=g)\Prob(G = g)  \nonumber \\
   &= \frac{1}{2^m}  \sum_{g \in \Gc} h(\T_0|X^n,Y^n,G=g) \label{eq:ThZeroEntropy}
\end{align}
Now, define the estimator 
\begin{align}\label{eq:thetahatDef}
    \Th_0(X^n,Y^n,g) = \frac{  \sum_{i=1}^n  \indic\{g(X_i) = 0, Y_i = 1\} + 1/2 }{ \sum_{i=1}^n  \indic\{g(X_i) = 0\} + 1}.
\end{align}
Defining $N_0 := \sum_{i=1}^n  \indic\{g(X_i) = 0\}$ and $K_0 =  \sum_{i=1}^n  \indic\{g(X_i) = 0, Y_i = 1\}$, we have 
\[
\Th_0  = \frac{K_0 + 1/2}{N_0 + 1}.
\]
Now, going back to~\eqref{eq:ThZeroEntropy}, we have 
\begin{align}
    \frac{1}{2^m}  \sum_{g \in \Gc} h(\T_0|X^n,Y^n,G=g) &\le   \frac{1}{2^m}  \sum_{g \in \Gc} h(\T_0|\Th_0,g) \label{eq:DataProcIneq} \\
    &= \frac{1}{2^m}  \sum_{g \in \Gc} h(\T_0 - \Th_0|\Th_0,g) \nonumber \\
    &\le \frac{1}{2^m}  \sum_{g \in \Gc} h(\T_0 - \Th_0|g) \nonumber \\
    &\le \frac{1}{2^m}  \sum_{g \in \Gc} \frac{1}{2}\log(2 \pi e \mathrm{Var}(\T_0 - \Th_0|g)) \label{eq:GaussianMaxEnt}\\
    &\le \frac{1}{2^m}  \sum_{g \in \Gc} \frac{1}{2}\log(2 \pi e \Expt[(\T_0 - \Th_0)^2|g])  \label{eq:MomentIneq}
\end{align}
where~\eqref{eq:DataProcIneq} follows from the data processing inequality,~\eqref{eq:GaussianMaxEnt} follows since the Gaussian random variable of a given variance maximizes entropy, and~\eqref{eq:MomentIneq} follows since for any random variable $Z, \mathrm{Var}[Z] \le \Expt[Z^2]$. 

We now have 
\begin{align}
\Expt[(\T_0 - \Th_0)^2|g] &= \Expt_{\T_0,X^n,Y^n|g} (\T_0 - \Th_0)^2 \nonumber \\
&= \Expt_{\T_0,X^n,Y^n|g}\left(\T_0 -   \frac{ K_0 + 1/2 }{ N_0 + 1} \right)^2  \nonumber \\
&= \Expt_{\T_0,N_0,K_0|g}\left(\T_0 -   \frac{ K_0 + 1/2  }{ N_0 + 1} \right)^2  \nonumber \\
&= \Expt_{\T_0|g}   \Expt_{N_0|\T_0,g}  \Expt_{K_0|N_0,\T_0,g} \left[  \left( \T_0 -   \frac{ K_0 + 1/2  }{ N_0 + 1} \right)^2   \Big| N_0, \T_0 \right] \label{eq:BinomCalc}
\end{align} 
Since 
\[
K_0|N_0,\T_0,g \sim \Bin(N_0,\T_0)
\]
we can calculate 
\begin{align}
    \Expt_{K_0|N_0,\T_0,g} \left[  \left( \T_0 -   \frac{ K_0 + 1/2  }{ N_0 + 1} \right)^2   \Big| N_0, \T_0 \right] &= \frac{(\T_0-1/2)^2 + N_0\T_0(1-\T_0)}{(N_0+1)^2} \nonumber \\ 
    %&\le \frac{(\T_0-1/2)^2 + \T_0(1-\T_0)}{N_0+1} \nonumber \\
    &\le \frac{1}{4(N_0+1)} \label{eq:QuadIneq} 
\end{align}
where~\eqref{eq:QuadIneq} follows since $x(1-x) \le \frac{1}{4}, (x-1/2)^2 \le \frac{1}{4}$ for $x \in [0,1]$. Substituting~\eqref{eq:QuadIneq} back into~\eqref{eq:BinomCalc} we obtain 
\begin{align}
    \Expt(\T_0 - \Th_0)^2 &\le \Expt_{\T_0|g}   \Expt_{N_0|\T_0,g}\left[ \frac{1}{4(N_0+1)}  \right] \nonumber \\
    &= \Expt_{N_0|g} \left[ \frac{1}{4(N_0+1)}  \right] \label{eq:IndepNZero} 
\end{align}
where~\eqref{eq:IndepNZero} follows since $N_0|g \dperp \T_0$ with distribution $N_0 \sim \Bin\left(n, \sum_{i:g(i)=1}\Prob(X = i)\right)$. Defining 
\[
p_g := \sum_{i:g(i)=1}\Prob(X = i),
\]
we can  the see that when $p_g \neq 0$ by a simple binomial calculation 
\begin{align}\label{eq:inverseBinExpt}
\Expt_{N_0|g} \left[ \frac{1}{4(N_0+1)}  \right] = \frac{1-(1-p_g)^{n+1}}{4(n+1)p_g} 
\end{align}
and $\Expt_{N_0|g} \left[ \frac{1}{4(N_0+1)}  \right] = \frac{1}{4}$ when $p_g = 0$. 
Now, we have 
\begin{align}
     h(\T_0|X^n,Y^n,G) &=  \frac{1}{2^m}  \sum_{g \in \Gc} h(\T_0|X^n,Y^n,G=g) \nonumber \\
     &\le \frac{1}{2^m}  \sum_{g \in \Gc}  \frac{1}{2}\log\left( 2 \pi e \Expt_{N_0|g} \left[ \frac{1}{4(N_0+1)}  \right]\right) \label{eq:VarCalc} \\
     &= \frac{1}{2}\log( \pi e/2) + \sum_{g \in \Gc} \frac{1}{2}\log \left(\Expt_{N_0|g}\left[ \frac{1}{N_0+1}  \right]\right) \label{eq:HZeroUpperBd}
\end{align}
where~\eqref{eq:VarCalc} follows from~\eqref{eq:MomentIneq}. 

In the exact same way, we can upper-bound $h(\T_1|X^n,Y^n,G)$ as 
\begin{align}\label{eq:HOneUpperBd}
     h(\T_1|X^n,Y^n,G)  \le \frac{1}{2}\log( \pi e/2) + \sum_{g \in \Gc} \frac{1}{2}\log \left(\Expt_{N_1|g}\left[ \frac{1}{N_1+1}  \right]\right).
\end{align}
From~\eqref{eq:HZeroUpperBd} and~\eqref{eq:HOneUpperBd} we get 
\begin{align}
    h(\T_0|X^n,Y^n,G) &+ h(\T_1|X^n,Y^n,G) \nonumber\\
    &\le \log(\pi e/2) + \sum_{g \in \Gc} \frac{1}{2} \log  \left(\Expt_{N_0|g}\left[ \frac{1}{N_0+1}  \right] \Expt_{N_1|g}\left[ \frac{1}{N_1+1}  \right]  \right) 
\end{align}
Now, from~\eqref{eq:inverseBinExpt} we have, when $p_g \neq 0, 1$ \begin{align}
    \Expt_{N_0|g}\left[ \frac{1}{N_0+1}  \right] \Expt_{N_1|g}\left[ \frac{1}{N_1+1}  \right]  &= \frac{1-(1-p_g)^{n+1}}{(n+1)p_g} \cdot 
    \frac{1-p_g^{n+1}}{(n+1)(1-p_g)} \nonumber\\
    &\le \frac{1}{n+1} \label{eq:invBinProdBd}
\end{align}
where~\eqref{eq:invBinProdBd} follows from noting that the function $\frac{1-(1-x)^{n+1}}{x}\cdot\frac{1-x^{n+1}}{1-x} \le n+1$ for all $0 < x < 1$. Moreover, when $p_g$ is either 0 or 1 we have $\Expt_{N_0|g}\left[ \frac{1}{N_0+1}  \right] \Expt_{N_1|g}\left[ \frac{1}{N_1+1}  \right]  = \frac{1}{n+1}$, and putting the aforementioned two cases together we have 
\begin{align}
   h(\T_0|X^n,Y^n,G) + h(\T_1|X^n,Y^n,G) &\le \log(\pi e/2) + \sum_{g \in \Gc} \frac{1}{2} \log  \left(\frac{1}{n+1}\right)  \nonumber \\
   &\le \log(\pi e/2) - \frac{1}{2}\log(n+1) \label{eq:ThZeroThOneEntSum}
\end{align}

Substituting the bound~\eqref{eq:ThZeroThOneEntSum} into~\eqref{eq:marginalEnt} yields 
\begin{align}\label{eq:putinerrorprobfano}
    I(\T_0,\T_1,G;Y^n|X^n) &\ge m + \log(n+1) - H(G|X^n,Y^n) - \log(4\pi e).
\end{align}
Now, we have for any estimator $\widehat{G}(X^n,Y^n)$ of $G$, 
\begin{align}
    H(G|X^n,Y^n) &\le H(G|\widehat{G}(X^n,Y^n)) \label{eq:dataprocineq}\\
    &\le \Prob(G \neq \widehat{G}(X^n,Y^n))m + 1 \label{eq:FanoIneq}
\end{align}
where~\eqref{eq:dataprocineq} follows by the data processing inequality, and~\eqref{eq:FanoIneq} follows from the Fano inequality~\cite{10.5555/1146355}. 
We now provide an estimator $\widehat{G}(X^n,Y^n)$ for which the error probability $\Prob(G \neq \widehat{G}(X^n,Y^n)) = o(1)$. Given $X^n,Y^n$, we  define
% \begin{align}
%      N_l &:= \sum_{i=1}^{n} \indic\{X_i = l\},  \text{ } l \in \{1,\dotsc,m\} \label{eq:nlPrimeDef} \\
%   K_l &:= \sum_{i=1}^{n} \indic\{Y_i = 1, X_i = l\},  \text{ } l \in \{1,\dotsc,m\}. \label{eq:klPrimeDef} 
% \end{align}
% and then define 
\begin{align}\label{eq:pHatDef}
  \ph_{l} := \frac{\sum_{i=1}^{n} \indic\{X_i = l, Y_i = 1\} + 1/2}{\sum_{i=1}^{n} \indic\{X_i = l\} + 1}, \text{ } l \in \{1,\dotsc,m\}.
\end{align}
Let $\ph_{\min} := \min_l \ph_l$ and $\ph_{\max} := \max_l \ph_l$. The estimator $\Gh(X^n, Y^n) \in \Gc$ is then defined as
 \[
   \Gh(l) = \Bigg\{\begin{array}{lr}
        0 & \text{if } \ph_l \le \frac{\ph_{\max} + \ph_{\min}}{2}\\
       1 & \text{otherwise. }
        \end{array}
  \]
The probability of error of this estimator can now be bounded as follows.

\begin{lemma}\label{lem:GhatProbError}
We have 
\begin{align}\label{eq:GhatProbError}
    \Prob(\Gh(X^n,Y^n) \neq G) \le \frac{2}{2^m} + \left(1-\frac{2}{2^m}\right)2\sqrt{e} m e^{-3n/100m}. 
\end{align}
\end{lemma}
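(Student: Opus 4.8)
The plan is to condition on the realized value of $G$ and write $\Prob(\Gh \neq G) = \frac{1}{2^m}\sum_{g \in 2^{[m]}} \Prob(\Gh \neq g \mid G = g)$. For the two constant functions $g \equiv 0$ and $g \equiv 1$ I will simply use $\Prob(\Gh \neq g \mid G=g) \le 1$; these contribute the $\frac{2}{2^m}$ term, and the estimator genuinely carries no information about them since all $\ph_l$ then estimate a single parameter. For every other $g$ --- one having both $S_0 := \{l : g(l) = 0\}$ and $S_1 := \{l : g(l) = 1\}$ nonempty --- I will first record a deterministic implication: conditionally on $(\T_0,\T_1)$, if $|\ph_l - \t_{g(l)}| \le \tfrac18$ for all $l \in [m]$, then $\Gh = g$. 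Indeed, since $\T_1 - \T_0 > \tfrac12$ almost surely (the set $\{\t_1 - \t_0 = \tfrac12\}$ is Lebesgue-null under the prior on $(\T_0,\T_1)$), on this event every $\ph_l$ with $l \in S_0$ lies in $[\t_0 - \tfrac18, \t_0 + \tfrac18]$ and every $\ph_l$ with $l \in S_1$ lies in $[\t_1 - \tfrac18, \t_1 + \tfrac18]$; these intervals are disjoint, so $\ph_{\min}$ is attained on $S_0$ and $\ph_{\max}$ on $S_1$, the threshold $\tfrac12(\ph_{\max} + \ph_{\min})$ lies within $\tfrac18$ of $\tfrac12(\t_0 + \t_1)$, and the margin $\t_1 - \t_0 > \tfrac12$ is exactly what is needed to push all $S_0$-values below it and all $S_1$-values above it. Hence $\Prob(\Gh \neq g \mid G = g) \le \Prob\big(\exists l : |\ph_l - \t_{g(l)}| > \tfrac18 \mid G = g\big)$.

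Next I will union-bound over $l$ and over the two signs, so that everything reduces to controlling a single one-sided tail $\Prob(\ph_l > \t + \tfrac18)$ for an arbitrary $\t \in [0,1]$ (the lower tail $\Prob(\ph_l < \t - \tfrac18)$ is identical after the relabeling $Y \mapsto 1-Y$, under which $\ph_l \mapsto 1 - \ph_l$ and $\t \mapsto 1 - \t$). Setting $N_l := \sum_{i=1}^n \indic\{X_i = l\} \sim \Bin(n, \tfrac1m)$ and noting that $K_l \mid N_l \sim \Bin(N_l, \t)$ with $\t = \t_{g(l)}$, the event $\ph_l > \t + \tfrac18$ is $\{K_l > (\t + \tfrac18)N_l + \t - \tfrac38\}$. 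I will peel off $\{N_l \le 2\}$, whose probability is negligible next to $e^{-3n/(100m)}$ in the only interesting regime (one may assume $n$ exceeds a suitable absolute multiple of $m$, since otherwise the claimed bound already exceeds $1$), and on $\{N_l \ge 3\}$ apply Hoeffding's inequality conditionally on $N_l$ to get $\Prob(K_l > (\t+\tfrac18)N_l + \t - \tfrac38 \mid N_l) \le \exp\!\big(-(N_l - 3)^2/(32 N_l)\big) \le e^{3/16}\, e^{-N_l/32}$. Taking the expectation over $N_l \sim \Bin(n,\tfrac1m)$ via $\Expt[e^{-N_l/32}] = \big(1 - (1 - e^{-1/32})/m\big)^n \le e^{-(1 - e^{-1/32})n/m}$ and invoking the numerical fact $1 - e^{-1/32} > \tfrac{3}{100}$ (itself a consequence of $2\cdot(\tfrac18)^2 = \tfrac1{32}$), I obtain $\Prob(\ph_l > \t + \tfrac18) \le \sqrt e\, e^{-3n/(100m)}$ once the residual constant $e^{3/16}$ (plus the vanishing $\{N_l \le 2\}$ contribution) is bounded by $\sqrt e$.

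Putting the pieces together, for $g \notin \{0^m, 1^m\}$ the union bound over the $m$ coordinates and the two signs gives $\Prob(\Gh \neq g \mid G = g) \le 2\sqrt e\, m\, e^{-3n/(100m)}$, and combining with the trivial bound for the two constant functions yields $\Prob(\Gh \neq G) \le \tfrac{2}{2^m} + \big(1 - \tfrac{2}{2^m}\big)\, 2\sqrt e\, m\, e^{-3n/(100m)}$. I expect the concentration step (second paragraph) to be the main obstacle: one must control the auxiliary randomness of $N_l$ rather than merely substitute its mean $n/m$, absorb the additive $\tfrac12$ smoothing offset in $\ph_l$ without spoiling the exponent, and track constants carefully enough to land exactly on the exponent $\tfrac{3}{100}$ and the prefactor $2\sqrt e$ --- the identity $2(\tfrac18)^2 = \tfrac1{32}$ together with $1 - e^{-1/32} > \tfrac3{100}$ is precisely what makes that target reachable. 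The deterministic reduction is elementary but not entirely trivial, since the separation margin $\tfrac18$ is the largest one can afford given only $\T_1 - \T_0 > \tfrac12$.
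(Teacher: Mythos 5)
Your proposal is correct and follows essentially the same route as the paper: the same conditioning on $G$ with the two constant functions bounded trivially, the same deterministic inclusion $\bigl\{\cap_l\, |\ph_l-\t_{g(l)}|\le \tfrac18\bigr\}\subseteq\{\Gh=g\}$ exploiting $\t_1-\t_0\ge\tfrac12$, the same union bound and conditional Hoeffding step, and the same binomial moment-generating-function computation with $1-e^{-1/32}>\tfrac{3}{100}$. The only (immaterial) difference is your peeling of $\{N_l\le 2\}$; this is not needed, since for $N_l\le 38$ the conditional bound $2\sqrt{e}\,e^{-N_l/32}\ge 1$ holds trivially and for larger $N_l$ your estimate $2e^{3/16}e^{-N_l/32}\le 2\sqrt{e}\,e^{-N_l/32}$ already covers the smoothing offset uniformly.
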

The proof of Lemma~\ref{lem:GhatProbError} is provided in the next subsection of Appendix D. 

Using Lemma~\ref{lem:GhatProbError} in~\eqref{eq:FanoIneq} and substituting this into~\eqref{eq:putinerrorprobfano}, since $\frac{2m}{2^m} \le 1$, we have 
\begin{align}
     I(\T_0,\T_1,G;Y^n|X^n) &\ge m + \log(n+1) - 2\sqrt{e} m^2 e^{-3n/100m}  - \log(\pi e)
\end{align}
as required.
%%%%%%%%%%%%%%%%%%%%%%%%%%%%%%%%%%%%%%%%%%%%%%%%%%%%%%%%%%%%%%%%%%%%%%%%%%%%%%%%%%%%%%%%%%%%%%%%%%%%%%%%%%%%%%%%%%%%%%%%%%%%%%%%%%%%%%%%%%%%%%%%%%%%%%%%%%%%%%%%%%%%%%%%%%%%%%%%%%%%%%%%%%%%%%%%%%%%%%%%%%%%%%%%%%%%%%%%%%%%%%%%%%%%%%%%%%%%%%%%%%%%%%%%%%%%%%%%%%%%%%%%%%%%%%%%%%%%%%%%%%%%%%%%%%%%%%%%%%%%%%%%%%%%%%%%%%%%%%%%%%%%%
\subsection{Proof of Lemma~\ref{lem:GhatProbError}}
We will denote $\Gh(X^{n},Y^{n})$ simply by $\Gh$ for convenience. 

Let $g = 0$ and $g = 1$ denote the all-0 and all-1 functions respectively (i.e. $g(x) = 0/1$ for all $x \in [m]$). We have 
\begin{align}
    \Prob(\Gh \neq G) &= \frac{1}{2^m} \sum_{g \in \Gc} \Prob(\Gh \neq g|G = g) \nonumber\\
    &= \frac{1}{2^m} \left(\Prob(\Gh \neq 0|G = 0) +  \Prob(\Gh \neq 1|G = 1)\right) +  \frac{1}{2^m} \sum_{g \in \Gc} \Prob(\Gh \neq g|G = g) \nonumber \\
    &\le \frac{2}{2^m} +  \frac{1}{2^m} \sum_{g \in \Gc \setminus \{g = 0, g = 1\}} \Prob(\Gh \neq g|G = g) \label{eq:probErrorChainRule}
\end{align}
Now, consider $\Prob(\Gh \neq g|G = g)$ for $g \neq 0,1$ identically. Since 
\[
\Prob(\Gh \neq g|G = g) = \Expt_{\T_0,\T_1}[\Prob(\Gh \neq g|G = g,\T_0,\T_1)],
\]
showing that for a \emph{fixed} $(g,\t_0,\t_1)$ with $g \neq 0,1$ identically and $\t_1-\t_0 \ge \frac{1}{2}$, with $X_i \sim \Unif\{[m]\}$ i.i.d. and $Y_i|(X_i = l) \sim \Ber(\t_{g(l)}), i \in [n]$, $\Prob(\Gh \neq g) \le 2\sqrt{e}m  e^{-3n/100 m}$ suffices to prove the lemma (recall that the $\t_1-\t_0 \ge \frac{1}{2}$ condition arises due to the choice of $P_H$ and more specifically the distribution of $(\T_0,\T_1)$ in~\eqref{eq:DistOfThetas}, which has zero density over the region $\t_1-\t_0 < \frac{1}{2}$). We now prove this statement.  

%Without loss of generality, assume that $\t_0 < \t_1$, which implies that $\t_1 - \t_0 \ge \frac{1}{2}$. 
We claim that 
\begin{align}\label{eq:eventInclusion}
    \left\{\cap_{l = 1}^m (|\ph_l - \t_{g(l)}|\le 1/8) \right\} \subseteq \left\{\Gh = g\right\}.
\end{align}
To see this, note that if the event $\left\{\cap_{l = 1}^m (|\ph_l - \t_{g(l)}|\le 1/8) \right\}$ occurs, we have $\t_1 - 1/8 \le \ph_{\max} \le \t_1 + 1/8$ and $\t_0 - 1/8 \le \ph_{\min} \le \t_0 + 1/8$ (recall that there is at least one $l$ such that $g(l) = 0$, and similarly at least one $l$ such that $g(l) = 1$) and subsequently, adding these two inequalities,
\begin{align}\label{eq:BoundaryBounds}
\frac{\t_0 + \t_1}{2} - 1/8 \le \frac{\ph_{\max} + \ph_{\min}}{2} \le \frac{\t_0 + \t_1}{2} + 1/8
\end{align}
But, since $\t_1 - \t_0 \ge 1/2$, we have $\t_0 + 1/8 \le \frac{\t_0 + \t_1}{2} - 1/8 $ and similarly $\t_1 - 1/8 \ge \frac{\t_0 + \t_1}{2} + 1/8$. This, together with~\eqref{eq:BoundaryBounds} implies that 
\[
\t_0 + 1/8 \le \frac{\ph_{\max} + \ph_{\min}}{2} \le \t_1 - 1/8
\]
Since the event  $\left\{\cap_{l = 1}^m (|\ph_l - \t_{g(l)}|\le 1/8) \right\}$ occurring implies that if $g(l) = 0, \ph_l \le \t_0 + 1/8$, which implies that in this case $\ph_l \le \frac{\ph_{\max} + \ph_{\min}}{2}$ and so $\Gh(l) = g(l) = 0$. Similarly, when $g(l) = 1$, $\Gh(l) = g(l) = 1$. 

Going back to~\eqref{eq:eventInclusion}, we have
\begin{align}
    \Prob\left(\cap_{l = 1}^m |\ph_l - \t_{g(l)}|\le 1/8 \right) &\le \Prob(\Gh = g) \nonumber \\
    \implies \Prob(\Gh \neq g) &\le  \Prob\left(\cup_{l = 1}^m |\ph_l - \t_{g(l)}|> 1/8\right)  \nonumber \\
    \implies \Prob(\Gh \neq g) &\le  \sum_{l = 1}^{m} \Prob\left(|\ph_l - \t_{g(l)}|> 1/8\right) \label{eq:UnionBd}
\end{align}
where~\eqref{eq:UnionBd} follows from the union bound. Consider now $\Prob\left(|\ph_m - \t_{g(m)}|> 1/8 \right)$. Without loss of generality, we may assume that $g(m) = 1$. Introducing the notation\footnote{This notation is independent of and not to be confused with the definitions of $N_0, K_0, N_1$ and $K_1$ in the proof of Lemma~\ref{lem:MinimaxLBRegretFinite}.}
\begin{align}
     N_l &:= \sum_{i=1}^{n} \indic\{X_i = l\},  \text{ } l \in \{1,\dotsc,m\} \label{eq:nlPrimeDef} \\
  K_l &:= \sum_{i=1}^{n} \indic\{Y_i = 1, X_i = l\},  \text{ } l \in \{1,\dotsc,m\}. \label{eq:klPrimeDef} 
\end{align}
we have 
\begin{align*}
\Prob\left(|\ph_m - \t_1|> 1/8 \right) &= \Prob\left(\left|\frac{K_m + 1/2}{N_m + 1} - \t_1\right|> 1/8\right) \nonumber\\
&= \Expt_{N_m}\left[ \Prob\left(\left|\frac{K_m + 1/2}{N_m + 1} - \t_1\right|> 1/8\Big| N_m\right) \right].
\end{align*}
Recalling now that $K_m|N_m \sim \Bin(N_m,\t_1)$ we have, by a slight variation on the Hoeffding inequality 
\begin{align}\label{eq:HoeffdingIneq}
    \Prob\left(\left|\frac{K_m + 1/2}{N_m + 1} - \t_1\right|> 1/8 \Big| N_m\right) &\le 2\sqrt{e} e^{-N_m /32} .
\end{align}
Next, since $N_m \sim \Bin(n, \Prob(X = m))$ and $\Prob(X = m) = \frac{1}{m}$ by our choice of $P_X$, recalling the moment-generating function of the binomial random variable $\Expt[e^{tN_m}] = \left(1-\frac{1}{m} + \frac{1}{m} e^t\right)^{n}$, we have 
\begin{align}
    \Expt_{N_m}\left[ \Prob\left(\left|\frac{K_m + 1/2}{N_m + 1} - \t_1\right|> 1/8\Big| N_m\right) \right] &\le \Expt_{N_m}[2 \sqrt{e} e^{-N_m/32}] \nonumber \\
    &\le 2\sqrt{e} \left(1-\frac{1}{m} + \frac{1}{m} e^{-1/32}\right)^{n} \label{eq:monotonicFn}.
\end{align}
 We can use the exact same procedure to establish 
\begin{align}\label{eq:ConcIneqsForPHat}
\Prob\left(|\ph_l - \t_{g(l)}|> 1/8 \right) \le 2\sqrt{e} \left(1-\frac{1}{m} + \frac{1}{m} e^{-1/32}\right)^{n}
\end{align}
for $l = 1,\dotsc,m-1$. Substituting this bound into~\eqref{eq:UnionBd} yields 
\begin{align}
    \Prob(\Gh \neq g) &\le 2m\sqrt{e} \left(1-\frac{1}{m} + \frac{1}{m} e^{-1/32}\right)^{n} \nonumber \\
    &= 2m\sqrt{e} \left(1-\frac{(1-e^{-1/32})}{m}\right)^{n} \nonumber \\
    &\le 2\sqrt{e}m e^{-n(1-e^{-1/32})/m} \nonumber \\
    &\le 2\sqrt{e}m  e^{-3n/100 m} 
\end{align}
as required.

%%%%%%%%%%%%%%%%%%%%%%%%%%%%%%%%%%%%%%%%%%%%%%%%%%%%%%%%%%%%%%%%%%%%%%%%%%%%%%%%%%%%%%%%%%%%%%%%%%%%%%%%%%%%%%%%%%%%%%%%%%%%%%%%%%%%%%%%%%%%%%%%%%%%%%%%%%%%%%%%%%%%%%%%%%%%%%%%%%%%%%%%%%%%%%%%%%%%%%%%%%%%%%%%%%%%%%%%%%%%%%%%%%%%%%%%%%%%%%%%%%%%%%%%%%%%%%%%%%%%%%%%%%%%%%%%%%%%%%%%%%%%%%%%%%%%%%%%%

%%%%%%%%%%%%%%%%%%%%%%%%%%%%%%%%%%%%%%%%%%%%%%%%%%%%%%%%%%%%%%%%%%%%%%%%%%%%%%%%%%%%%%%%%%%%%%%%%%%%%%%%%%%%%%%%%%%%%%%%%%%%%%%%%%%%%%%%%%%%%%%%%%%%%%%%%%%%%%%%%%%%%%%%%%%%%%%%%%%%%%%%%%%%%%%%%%%%%%%%%%%%%%%%%%%%%%%%%%%%%%%%%%%%%%%%%%%%%%%%%%%%%%%%%%%%%%%%%%%%%%%%%%%%%%%%%%%%%%%%%%%%%%%%%%%%%%%%%

 \bibliographystyle{IEEEtran}
 \bibliography{UnivComp.bib}

% Generated by IEEEtran.bst, version: 1.14 (2015/08/26)
\begin{thebibliography}{10}
\providecommand{\url}[1]{#1}
\csname url@samestyle\endcsname
\providecommand{\newblock}{\relax}
\providecommand{\bibinfo}[2]{#2}
\providecommand{\BIBentrySTDinterwordspacing}{\spaceskip=0pt\relax}
\providecommand{\BIBentryALTinterwordstretchfactor}{4}
\providecommand{\BIBentryALTinterwordspacing}{\spaceskip=\fontdimen2\font plus
\BIBentryALTinterwordstretchfactor\fontdimen3\font minus
  \fontdimen4\font\relax}
\providecommand{\BIBforeignlanguage}[2]{{%
\expandafter\ifx\csname l@#1\endcsname\relax
\typeout{** WARNING: IEEEtran.bst: No hyphenation pattern has been}%
\typeout{** loaded for the language `#1'. Using the pattern for}%
\typeout{** the default language instead.}%
\else
\language=\csname l@#1\endcsname
\fi
#2}}
\providecommand{\BIBdecl}{\relax}
\BIBdecl

\bibitem{Fogel--Feder17}
Y.~Fogel and M.~Feder, ``On the problem of on-line learning with log-loss,'' in
  \emph{2017 IEEE International Symposium on Information Theory (ISIT)}.\hskip
  1em plus 0.5em minus 0.4em\relax IEEE, 2017, pp. 2995--2999.

\bibitem{Feder--Merhav98}
N.~Merhav and M.~Feder, ``Universal prediction,'' \emph{IEEE Transactions on
  Information Theory}, vol.~44, no.~6, pp. 2124--2147, 1998.

\bibitem{Rissanen83a}
J.~Rissanen, ``A universal data compression system,'' \emph{IEEE Transactions
  on information theory}, vol.~29, no.~5, pp. 656--664, 1983.

\bibitem{Rissanen83b}
------, ``A universal prior for integers and estimation by minimum description
  length,'' \emph{The Annals of statistics}, pp. 416--431, 1983.

\bibitem{Rissanen84}
------, ``Universal coding, information, prediction, and estimation,''
  \emph{IEEE Transactions on Information theory}, vol.~30, no.~4, pp. 629--636,
  1984.

\bibitem{Xie--Barron97}
Q.~Xie and A.~R. Barron, ``Minimax redundancy for the class of memoryless
  sources,'' \emph{IEEE Transactions on Information Theory}, vol.~43, no.~2,
  pp. 646--657, 1997.

\bibitem{Xie--Barron00}
------, ``Asymptotic minimax regret for data compression, gambling, and
  prediction,'' \emph{IEEE Transactions on Information Theory}, vol.~46, no.~2,
  pp. 431--445, 2000.

\bibitem{Shkel--Raginsky--Verdu18}
Y.~Shkel, M.~Raginsky, and S.~Verd{\'u}, ``Sequential prediction with coded
  side information under logarithmic loss,'' in \emph{Algorithmic Learning
  Theory}, 2018, pp. 753--769.

\bibitem{shalev2014understanding}
S.~Shalev-Shwartz and S.~Ben-David, \emph{Understanding machine learning: From
  theory to algorithms}.\hskip 1em plus 0.5em minus 0.4em\relax Cambridge
  university press, 2014.

\bibitem{Lazaric--Munos12}
A.~Lazaric and R.~Munos, ``Learning with stochastic inputs and adversarial
  outputs,'' \emph{Journal of Computer and System Sciences}, vol.~78, no.~5,
  pp. 1516--1537, 2012.

\bibitem{RSTMartingaleLLN15}
A.~Rakhlin, K.~Sridharan, and A.~Tewari, ``Sequential complexities and uniform
  martingale laws of large numbers,'' \emph{Probability Theory and Related
  Fields}, vol. 161, no. 1-2, pp. 111--153, 2015.

\bibitem{rstOLSeqComp15}
------, ``Online learning via sequential complexities.'' \emph{J. Mach. Learn.
  Res.}, vol.~16, no.~1, pp. 155--186, 2015.

\bibitem{RSLogLoss15}
A.~Rakhlin and K.~Sridharan, ``Sequential probability assignment with binary
  alphabets and large classes of experts,'' \emph{arXiv preprint
  arXiv:1501.07340}, 2015.

\bibitem{Bilodeau--Foster--Roy20}
B.~Bilodeau, D.~Foster, and D.~Roy, ``Tight bounds on minimax regret under
  logarithmic loss via self-concordance,'' in \emph{International Conference on
  Machine Learning}.\hskip 1em plus 0.5em minus 0.4em\relax PMLR, 2020, pp.
  919--929.

\bibitem{krichevsky1981performance}
R.~Krichevsky and V.~Trofimov, ``The performance of universal encoding,''
  \emph{IEEE Transactions on Information Theory}, vol.~27, no.~2, pp. 199--207,
  1981.

\bibitem{Yang--Barron99}
Y.~Yang and A.~Barron, ``Information-theoretic determination of minimax rates
  of convergence,'' \emph{Annals of Statistics}, pp. 1564--1599, 1999.

\bibitem{Boucheron--Lugosi--Massart13}
S.~Boucheron, G.~Lugosi, and P.~Massart, \emph{Concentration inequalities: A
  nonasymptotic theory of independence}.\hskip 1em plus 0.5em minus 0.4em\relax
  Oxford university press, 2013.

\bibitem{merhav1995strong}
N.~Merhav and M.~Feder, ``A strong version of the redundancy-capacity theorem
  of universal coding,'' \emph{IEEE Transactions on Information Theory},
  vol.~41, no.~3, pp. 714--722, 1995.

\bibitem{vershynin2018high}
R.~Vershynin, \emph{High-dimensional probability: An introduction with
  applications in data science}.\hskip 1em plus 0.5em minus 0.4em\relax
  Cambridge university press, 2018.

\bibitem{li2011concise}
S.~Li, ``Concise formulas for the area and volume of a hyperspherical cap,''
  \emph{Asian Journal of Mathematics and Statistics}, vol.~4, no.~1, pp.
  66--70, 2011.

\bibitem{10.5555/1146355}
T.~M. Cover and J.~A. Thomas, \emph{Elements of Information Theory (Wiley
  Series in Telecommunications and Signal Processing)}.\hskip 1em plus 0.5em
  minus 0.4em\relax USA: Wiley-Interscience, 2006.

\end{thebibliography}

\end{document}